\setlist[itemize]{leftmargin=*,noitemsep,topsep=0pt}
\setlist[enumerate]{leftmargin=*,noitemsep,topsep=0pt}
\DeclareMathOperator*{\E}{\mathbb{E}}
\DeclareMathOperator*{\Var}{Var}
\newtheorem{problem}{Problem}
\newtheorem{theorem}{Theorem}[section]
\newtheorem{corollary}{Corollary}[section]
\newtheorem{lemma}[theorem]{Lemma}
\newtheorem{definition}{Definition}[section]
\DeclareMathOperator{\sign}{sgn}
\newcommand{\Rnd}{\ensuremath{\mathbf{SoS}}\xspace} %
\newcommand{\SdpRnd}{\ensuremath{\mathbf{SDP}}\xspace} %
\newcommand{\SdpRatio}{\ensuremath{\lambda}\xspace} %
\newcommand{\SdpRatioAvg}{\ensuremath{\varphi}\xspace} %
\newcommand{\Exp}[1]{\mathbb{E}\left[ #1 \right]}
\newcommand{\pr}{\Pr}
\newcommand{\Real}{\ensuremath{\mathbb{R}}\xspace} %
\newcommand{\OPT}{\ensuremath{\mathbf{Opt}}\xspace}
\newcommand{\Sol}{\ensuremath{\mathbf{Sol}}\xspace}
\newcommand{\vecv}{\bm{v}\xspace}
\newcommand{\vecx}{\bm{x}\xspace}
\newcommand{\vecr}{\bm{r}\xspace}
\newcommand{\vecone}{\bm{1}\xspace}
\newcommand{\algo}{\mathcal{A}}
\newcommand{\algoB}{\mathcal{B}}
\DeclareMathOperator*{\argmin}{arg\,min}
\newcommand{\abs}[1]{\lvert #1\rvert}
\renewcommand{\lg}{\log}
\DeclareRobustCommand{\OPT}{%
	\ifmmode
		\operatorname{\bf Opt}
	\else
		\textbf{Opt}\xspace
	\fi
}
\DeclareRobustCommand{\ALG}{%
	\ifmmode
		\operatorname{ALG}
	\else
		\text{ALG}\xspace
	\fi
}
\newcommand{\sbpara}[1]{{\smallskip\noindent\textbf{#1}}}
\newcommand{\inputtikz}[1]{%
  \includegraphics{figures/#1}%
}
\newcommand{\parzero}{\ensuremath{c_0}\xspace}
\newcommand{\parone}{\ensuremath{c_1}\xspace}
\newcommand{\partwo}{\ensuremath{c_2}\xspace}
\newcommand{\parthree}{\ensuremath{c_3}\xspace}
\newcommand{\vertices}{\ensuremath{V}\xspace}
\newcommand{\bigO}{\ensuremath{\mathcal{O}}\xspace}
\newcommand{\bigOtilde}{\ensuremath{\Tilde{\mathcal{O}}}\xspace}
\newcommand{\dks}{{\small D}\ensuremath{k}{\small S}\xspace}
\newcommand{\dksplus}{{\small D}\ensuremath{(k\!+\!1)}{\small S}\xspace} %
\newcommand{\dskc}{\text{{\small DS}-}\ensuremath{k}{\small R}\xspace}
\newcommand{\maxcut}{{\small M}\textsc{ax}\text{-}{\small C}\textsc{ut}\xspace}
\newcommand{\maxcutkc}{{\maxcut}\text{-}\ensuremath{k}{\small R}\xspace}
\newcommand{\ccmaxcut}{{\maxcut}\text{-}{\small CC}\xspace}
\newcommand{\maxbis}{{\small M}\textsc{ax}\text{-}{\small B}\textsc{isection}\xspace}
\newcommand{\kdense}{\ensuremath{k}\text{-}{{\small D}\textsc{ensify}}\xspace}
\newcommand{\sdp}{{\small SDP}\xspace}
\newcommand{\ccmaxuncut}{{\small M}\textsc{ax}\text{-}{\small U}\textsc{ncut}\text{-}{\small CC}\xspace} %
\newcommand{\ccvc}{{\small VC}\text{-}{\small CC}\xspace} %
\newcommand{\maxuncutkc}{{\small M}\textsc{ax}-{\small U}\textsc{ncut}-\ensuremath{k}{\small R}\xspace} %
\newcommand{\vckc}{{\small VC}-\ensuremath{k}{\small R}\xspace} %
\newcommand{\gpkc}{{\small GP}-\ensuremath{k}{\small R}\xspace} %
\newcommand{\maxgp}{{\small MAX}-{\small GP}\xspace} %
\newcommand{\spara}[1]{\smallskip\noindent\textbf{#1}}
\newcommand{\para}[1]{\noindent\textbf{#1}}
\newcommand{\SelectSet}{\ensuremath{C}\xspace} %
\newcommand{\SelectComSet}{\ensuremath{\overline{\SelectSet}}\xspace} %
\newcommand{\FixSelectSet}{\ensuremath{C}^{''}\xspace} %
\newcommand{\OldSet}{\ensuremath{U}\xspace} %
\newcommand{\TempSet}{\ensuremath{\OldSet'}\xspace}%
\newcommand{\FixTempSet}{\ensuremath{\OldSet^{''}}\xspace}%
\newcommand{\OldComSet}{\ensuremath{\overline{\OldSet}}\xspace} %
\newcommand{\TempComSet}{\ensuremath{\overline{\OldSet'}}\xspace}%
\newcommand{\symm}{\triangle}
\newcommand{\cutnode}[2]{\mathrm{cut}_{#1}\!\left(#2\right)}
\newcommand{\cut}{\@ifnextchar\bgroup\cut@i{\mathrm{cut}}}
\newcommand{\cut@i}[1]{\ensuremath{\mathrm{cut}\!\left(#1\right)}}
\newcommand{\partition}{\@ifnextchar\bgroup\partition@i{\mathcal{G}}}
\newcommand{\partition@i}[1]{\ensuremath{\mathcal{G}\!\left(#1\right)}}
\newcommand{\Greedy}{\textsf{\small Greedy}\xspace}
\newcommand{\Peel}{\textsf{\small Peel}\xspace}
\newcommand{\SDPalgo}{\textsf{\small SDP}\xspace}
\newcommand{\Random}{\textsf{\small Rnd}\xspace}
\newcommand{\Local}{\textsf{\small Local}\xspace}
\newcommand{\Init}{\textsf{\small Init}\xspace}
\newcommand{\Blackbox}{\textsf{\small B:}\xspace}
\newcommand{\TypeOne}{\textsf{\small :I}\xspace}
\newcommand{\TypeTwo}{\textsf{\small :II}\xspace}
\newcommand{\denseGreedy}{\Greedy}
\newcommand{\denseSDPalgo}{\SDPalgo}
\newcommand{\denseSDPMerge}{\Blackbox\SDPalgo}
\newcommand{\densePeelMerge}{\Blackbox\Peel}
\newcommand{\denseSQD}{\textsf{\small SQD}\xspace}
\newcommand{\denseinit}{\Init}
\newcommand{\denserandom}{\Random}
\newcommand{\cutGreedy}{\Greedy}
\newcommand{\cutSDPalgo}{\SDPalgo}
\newcommand{\cutBlackSDP}{\Blackbox\SDPalgo}
\newcommand{\cutBlackGreedy}{\Blackbox\Greedy}
\newcommand{\cutBlackLocalOne}{\Blackbox\Local\TypeOne}
\newcommand{\cutBlackLocalTwo}{\Blackbox\Local\TypeTwo}
\newcommand{\balanced}{\textsf{\small SBM-Balanced}\xspace}
\newcommand{\dense}{\textsf{\small SBM-DenseSubg}\xspace}
\newcommand{\sparse}{\textsf{\small SBM-SparseSubg}\xspace}
\newcommand{\gb}{\textsf{\small Wiki-GB}\xspace}
\newcommand{\de}{\textsf{\small Wiki-DE}\xspace}
\newcommand{\es}{\textsf{\small Wiki-ES}\xspace}
\newcommand{\us}{\textsf{\small Wiki-US}\xspace}
\newcommand{\dblp}{\textsf{\small SNAP-DBLP}\xspace}
\newcommand{\youtube}{\textsf{\small SNAP-Youtube}\xspace}
\newcommand{\amazon}{\textsf{\small SNAP-Amazon}\xspace}
\newcommand{\rewrite}[1]{{#1}}
\title{\bf OptiRefine: Densest subgraphs and maximum cuts\\ with $k$ refinements}
\date{}
\author[1]{Sijing Tu}
\author[1,2]{Aleksa Stankovic} 
\author[3]{Stefan Neumann} 
\author[1]{Aristides Gionis}
\affil[1]{KTH Royal Institute of Technology}
\affil[2]{Qubos Systematic}
\affil[3]{TU Wien}
\begin{document}
\maketitle 

\begin{abstract}
Data-analysis tasks often involve an iterative process, which requires refining previous solutions. For instance, when analyzing social networks, we may obtain initial
	communities based on noisy metadata, and we want to improve them by adding
	influential nodes and removing non-important ones, without making too many
	changes.
However, classic optimization algorithms, which typically find solutions from scratch, potentially
return communities that are very dissimilar to the initial one. 
To mitigate these issues, we introduce the \emph{OptiRefine framework}.
The framework optimizes initial solutions by making a small number of \emph{refinements}, thereby ensuring that the new solution remains close to the
initial solution and simultaneously achieving a near-optimal solution for the optimization problem.
We apply the OptiRefine framework to two classic graph-optimization problems:
\emph{densest subgraph} and \emph{maximum cut}. For the \emph{densest-subgraph problem}, we
optimize a given subgraph's density by adding or removing $k$~nodes. We show
that this novel problem is a generalization of $k$-densest subgraph, and provide
constant-factor approximation algorithms for $k=\Omega(n)$~refinements.  
We also study a version of \emph{maximum cut} in which the goal is to improve a given cut.
We provide connections to the maximum cut with cardinality constraints and provide an
optimal approximation algorithm in most parameter regimes under the Unique Games
Conjecture for $k=\Omega(n)$~refinements.
We evaluate our theoretical methods and scalable heuristics on synthetic and
real-world data and show that they are highly effective in practice.
\end{abstract}

\maketitle

\section{Introduction}
\label{sec:intro}

Graphs are commonly used to model entities and their relationships in various application domains. 
Numerous methods have been developed to address diverse graph mining tasks,
including analyzing graph properties, finding patterns, discovering communities, and achieving other application-specific objectives. 
These graph mining tasks are generally formulated as optimization problems. 
Notable examples of such problems include graph-partitioning tasks, 
identifying densest subgraphs or determining the maximum cut of a graph.

In real-world applications, graph mining is often an iterative process that does not start from scratch. 
This process often begins with preliminary and sub\-optimal solutions, {based on
metadata or existing results. Subsequently the solution shall be refined and
improved.  Specifically, we seek to modify the initial solution~$U$ into a new solution~$U'$
while ensuring that the solution~$U'$ must be close enough to~$U$.}
More concretely, consider the following two examples:
\begin{enumerate}
    \smallskip
    \item 
    {Consider the task of detecting the community of users
	in an online social network who are interested in political commentary in a
	given country, say, Germany.
	Whether a user is related to political commentary in Germany can be
	obtained, for instance, based on metadata, such as the location data they
	are sharing and the interests they indicated when they joined the network.}
        Let us call this set of users $U$.
        Some of those users, however, may not be active in the social network
		{or changed their interests}.
		Instead, there might be other users, unknown to us {or from different countries, who are more active
		and also engage in political commentary regarding Germany.}
        We could then seek to identify a set of users $U'$, 
        which is close enough to $U$ (so that the community is still on the same topic), 
        and which has a high graph density 
        (so that it consists of a set of influential users who are tightly connected with each other).
        \smallskip
    \item Next, suppose that a network host aims to maximize the diversity of the
	social network by altering a few users' exposure to diverse news outlets
	(which can be realized, for instance, through recommendation). 
	As formulated by recent work \citep{matakos2020tell}, the exposure of each user can be modeled 
    by a value in the set $\{-1, 1\}$, e.g., representing two sides on a polarized topic. The
    network diversity can be measured by the size of the cut indicated by the partition
    $(U, \bar{U})$, where $U$ consists of the users whose exposure is $-1$
    and $\bar{U}$ consists of the users whose exposure is $1$. 
    {Now the goal of increasing the network diversity corresponds to changing
    the exposure of a small set of users in order to maximize the cut. To ensure
	that not too many users are affected, the network host must ensure that the
	new assignment is close to the original one.}
    \end{enumerate} 
    
\smallskip
The above examples present a
\emph{solution refinement process}. 
Specifically, an initial solution is provided, 
such as an online community in the first example or 
the exposure of all network users to news outlets in the second example.
The objective is to refine the solution
so as to maximize an objective function, 
i.e., either finding a densest subgraph or a maximum cut, 
while maintaining proximity to the initial solution.
The refinement process we consider in this paper appears in many graph-mining tasks.
It naturally generalizes classic optimization algorithms, such as finding the
densest subgraph or the maximum cut of a graph, by incorporating existing initial solutions. 

From an algorithmic standpoint, standard algorithms designed {for classical problems do not easily adapt to these new tasks that involve an \emph{initial solution} and aim for the best \emph{refinement}}. 
To address this, we introduce the \emph{OptiRefine framework}, which models this class of optimization problems. 
{We apply our framework to \emph{graph-partitioning} problems~\citep{DBLP:journals/jal/feigel01}, specifically focusing on the well-known \emph{densest subgraph} and \emph{max-cut} problems. We refer to our problems as \emph{graph-partition with $k$ refinements} (\gpkc), formally defined in Definition~\ref{prob:graph-partition-local}}.

{We apply our framework to two classic problems in data mining and
	combinatorial optimization:}
The \emph{densest-subgraph problem} is commonly used to identify tightly connected groups of entities, with applications such as finding trending stories~\citep{angel2012dense}, detecting bots and fraudulent content in social media~\citep{beutel2013copycatch}, and identifying correlated genes in gene-expression data~\citep{saha2010dense}, among others. 
The \emph{max-cut problem} has applications in graph-partitioning settings~\citep{ding2001min}, in discovering conflicting communities in social networks~\citep{bonchi2019discovering}, 
and in measuring the diversity of exposure to news outlets in online social networks~\citep{matakos2020tell}. 
The methods we introduce in this paper are particularly useful when a community has been identified and there is a need to find another community that is \emph{close} to the current one.

We identify a close relationship between (1)~the problems in our OptiRefine framework and (2)~the corresponding problems with cardinality constraints on the solution size.
For both the densest subgraph and max-cut problems {defined under the OptiRefine framework}, we obtain approximation algorithms that {nearly} match the best approximation ratios of algorithms that solve the cardinality-constrained versions of these problems.
We believe that our insights will be helpful in applying the OptiRefine framework to a broad range of problems and in developing practical algorithms with theoretical guarantees for a wide range of applications.

\subsection{Our results}
\label{intro:results}

\begin{figure}[t]
  \centering 
    \begin{tabular}{p{0.45\columnwidth}p{0.45\columnwidth}}
    \resizebox{0.45\columnwidth}{!}{%
      \inputtikz{plots/dense-local-change-approx}
    }&
    \resizebox{0.45\columnwidth}{!}{%
      \inputtikz{plots/cut-local-change-approx}
    }\\[1ex]
    \hspace{-1.3em}
    \begin{minipage}{0.45\columnwidth}
    \centering {\small (a)~\kdense}
    \end{minipage} &
    \begin{minipage}{0.45\columnwidth}
    \centering {\small {(b)~\maxcutkc}}
    \end{minipage} \\
    \end{tabular}
  \caption{\small 
  {
	  The approximation ratios of the algorithms we present as a function
		  of~$k$.
	  The approximation ratio for \dskc equals to the approximation ratio of \kdense times $\frac{1-c}{1+c}$,
  where $c$ is such that $k \leq c\abs{\OldSet}$.
  For \kdense, we assume that the black-box solver is the algorithm for \dks by~\citet{asahiro2000greedily}.
    When $k=n/2$, the approximation ratio is at least $0.583$ with the \sdp algorithm and at least $0.5$ with the black-box solver.
	For \maxcutkc, we assume that the black-box solver is the algorithm for \maxcut by~\citet{goemans1994approximation}
    When $k=n/2$, the approximation ratio is at least $0.643$ with the \sdp algorithm, at least $0.250$ with the black-box solver, and $0.858$ with the SoS algorithm.}
  }
  \label{fig:plots-approx-ratio}
\end{figure}

First, we apply our OptiRefine framework to the classic densest subgraph and max-cut problems. 
For the densest-subgraph problem, we introduce the problem \emph{densest subgraph with $k$~refinements} (\dskc) in which the input consists of a weighted graph $G = (V, E, w)$, an initial subset of vertices $\OldSet \subseteq V$ and an integer $k \in \mathbb{N}$.  
The goal is to add or remove $k$~vertices from $\OldSet$ to obtain a {refined set} of vertices~$U'$ that maximizes the density of its induced subgraph, denoted by $G[U']$ (see Problem~\ref{prob:densest-subgraph-local} for the formal definition).
For the max-cut problem, we introduce the problem of \emph{max-cut with $k$~refinements} (\maxcutkc) in which we are given a weighted graph $G=(V, E, w)$, an initial subset of vertices $\OldSet\subseteq V$ and an integer $k\in\mathbb{N}$.
The goal is to find a set of vertices $U' \subseteq V$ such that $U$ and $U'$ differ by $k$~vertices and the cut $(U', V\setminus U')$ is maximized (see Problem~\ref{prob:max-cut-local} for the formal definition).
For both \dskc and \maxcutkc, 
we give reductions showing that they are closely connected to their corresponding classic problems, 
densest subgraph and maximum cut, respectively.
In particular, we show that \dskc can be solved using solvers for the classic densest
$k$-subgraph problem~\citep{dblp:conf/coco/feigeseltser} and the \kdense
problem~\citep{matakos2022strengthening}.
Furthermore, \maxcutkc can be solved using a solver for the classic max-cut
problem~\citep{goemans1994approximation} with only constant factor loss in
approximation ratio. 
The approximation ratios using the black-box solvers are shown in Figure~\ref{fig:plots-approx-ratio} with label \emph{Black-box}. 
Moreover, we show that the reduction also works in the other way, and we present the hardness results for \dskc and \maxcutkc. 

Second, we apply our OptiRefine framework to the more general \emph{maximum
graph-partitioning problem} (\maxgp)~\citep{DBLP:journals/jal/feigel01,han2002improved,dblp:journals/rsa/halperinz02} (see Problem~\ref{prob:max-graph-partition} for formal definition). 
{Through parameter settings~\citep{han2002improved}, \maxgp encompasses various problems} including \emph{densest-$k$ subgraph} (\dks), \emph{max-cut with cardinality constraints} (\ccmaxcut), \emph{max-uncut with cardinality constraints} (\ccmaxuncut), and \emph{vertex cover with cardinality constraints} (\ccvc). 
\maxgp can be solved approximately using semidefinite program-based (\sdp-based) approaches.

We define the \maxgp under the OptiRefine framework as 
\emph{graph partition with $k$ refinements} (\gpkc) 
(see Problem~\ref{prob:graph-partition-local} for the formal definition).
{To solve \gpkc, we adapt the \sdp-based approaches for \maxgp by incorporating the initial node statuses (inside or outside the initial subgraph) and designing appropriate constraints in the relaxed \sdp program.
We also propose a systematic yet simple method to determine the search space for suitable parameters, as introduced in Lemma~\ref{lem:cut-bound-z}. 
Additionally, we introduce two pairs of parameters in the \maxcutkc problem to handle the case distinctions arising in the rounding step.
These schemes help us to obtain better approximation ratios. 
} 

For these problems, we establish constant-factor approximation ratios when $k=\Omega(n)$~refinements are allowed, where $n$~is the number of vertices in the graph. 
Our {approximation ratios} closely match the \sdp-based approximation ratios for \dks and \ccmaxcut, which our problems generalize. 
{Our approach} extends the results of~\citet{matakos2022strengthening} for \kdense {(formally defined in Problem~\ref{prob:k-densify})} by providing approximation algorithms for broader graph classes (see Section~\ref{sec:related} for details).
The approximation ratios achieved by our \sdp-based approaches for \dskc and \maxcutkc are
shown in Figure~\ref{fig:plots-approx-ratio} with label~\sdp.

Third, for \maxcutkc, we provide approximation algorithms that almost match the best-known algorithms for max-cut with cardinality constraints. 
In particular, we use the sum-of-squares hierarchy (SoS)~\citep{DBLP:journals/siamjo/Lasserre02} to show that for $k=\Omega(n)$, \maxcutkc admits the same approximation ratio as the state-of-the-art algorithm for cardinality-constrained max-cut~\citep{DBLP:conf/soda/RaghavendraT12}. 
This approximation ratio is optimal for almost all parameter choices $k=\Omega(n)$, assuming the Unique Games Conjecture~\citep{DBLP:conf/stoc/Khot02a}.
We present the approximation ratio for \maxcutkc using SoS in Figure~\ref{fig:plots-approx-ratio}(b).

From a practical perspective, we implement our theoretical algorithms with provable guarantees, including \sdp-based algorithms, black-box solver-based algorithms.
For comparison, we also implement the greedy heuristics.
We evaluate these algorithms and heuristics on both synthetic and real-world datasets. 
For \dskc, our algorithms with black-box solvers are scalable and increase the densities of the given subgraphs.
For \maxcutkc, our \sdp-based approaches significantly outperform the methods by~\citet{matakos2020tell}.

\vspace{1mm}
\para{Structure of the paper.} 
This paper is structured as follows.
In Section~\ref{sec:densest-subgraph}, we define \dskc, and we illustrate its connections to \kdense and \dks. 
Moreover, we present a black-box solution for \dskc by applying a solver for \dks. 
Section~\ref{sec:max-cut} presents \maxcutkc, and we illustrate its connections to \ccmaxcut. 
Moreover, we consider a black-box solution for \maxcutkc by applying a solver for \maxcut. 
In Section~\ref{sec:general-framework}, we use a general problem, graph
partitioning with $k$ refinements (\gpkc), that captures both \dskc and
\maxcutkc. We then propose an \sdp-based algorithm and
obtain constant approximation results assuming $k \in \Omega(n)$. 
In Section~\ref{sec:max-cut:sos}, we describe a sum-of-squares algorithm to optimally solve \maxcutkc under certain regimes of $k \in \Omega(n)$.
Section~\ref{sec:experiments} presents our experiments, in which we evaluate our
algorithms for \dskc using multiple datasets. Moreover, in the appendix, we present more
experimental evaluations for \dskc and \maxcutkc, as well as all
omitted proofs from the main text.

\subsection{Notation}
\label{preliminary}
Throughout the paper, we let $G=(V,E,w)$ be an un\-directed graph 
with non-negative edge weights. We set $n=|V|$ and $m=|E|$.
For a subset of vertices $U\subseteq V$, 
we write $E[U]$ to denote the set of edges with both endpoints in $U$, 
and $G[U]$ to denote the subgraph $(U,E[U])$.
The \emph{density} of a subgraph $G[\OldSet]$ is defined as $d(\OldSet) = \frac{{w(E[\OldSet])}}{|\OldSet|}$, 
where ${w(E[\OldSet])} = \sum_{(i,j) \in E[\OldSet]} w(i,j)$. 
We write $\OldComSet$ to denote the complement of $U$, 
i.e., $\OldComSet = V\setminus U$. 
Given a partition 
$(U,\OldComSet)$ of $V$, we write $E[U,\OldComSet]$ to denote the set of edges that
have one endpoint in $U$ and one in $\OldComSet$.
We let $\cut{\OldSet} = \sum_{(i,j)\in E[\OldSet,\OldComSet]} w(i,j)$ denote {the sum of edge weights over the cut} $(\OldSet,\OldComSet)$.
For two sets $A$ and $B$ we denote their symmetric difference by 
$A\symm B = (A \setminus B) \cup (B \setminus A)$. 
The operator $\log$ stands for $\log_2$.

\section{Related work}
\label{sec:related}
The densest-subgraph problem has received considerable attention in the literature. 
Here, we discuss only the results that are most relevant to our work. 
We refer to the recent surveys by \citet{lanciano2023survey} and \citet{luo2023survey} for more discussion.

The problem of the densest subgraph with $k$~refinements (\dskc), as introduced in this paper, is a generalization of \emph{densest $k$-subgraph} (\dks)~\citep{DBLP:conf/focs/KortsarzP93}.
Despite significant attention in the theory community, there is still a large gap between the best approximation algorithms and hardness results for 
\dks~\citep{%
DBLP:journals/dam/AsahiroHI02,%
DBLP:conf/stoc/Barman15,%
DBLP:conf/stoc/BhaskaraCCFV10,%
DBLP:conf/stoc/Feige02,%
DBLP:journals/algorithmica/FeigePK01,%
DBLP:conf/stoc/LeeG22,%
DBLP:journals/algorithms/Manurangsi18}.
{
\citet{DBLP:journals/dam/AsahiroHI02} propose a greedy-peeling algorithm with an approximation ratio of $k/n$.
\citet{DBLP:journals/jal/feigel01} introduce an approximation algorithm based on
an \sdp relaxation and hyperplane rounding, achieving a better approximation ratio than $\frac{k}{n}$.
\citet{han2002improved} build on this method by incorporating partition information into the hyperplane rounding, resulting in an improved approximation ratio.
Notably, \citet{dblp:conf/coco/feigeseltser} demonstrate that the integrality
gap for standard \sdp relaxation is $\bigO(n^{1/3} \log n)$ when $k =
\Omega(n^{1/3})$, excluding the possibility of utilizing the standard \sdp-based approach to achieve a better approximation ratio for all settings of $k$.
The best current polynomial-time algorithm for \dks provides an approximation
guarantee of $\Omega(n^{-1/4-\varepsilon})$ for any constant $\varepsilon
>0$~\citep{DBLP:conf/stoc/BhaskaraCCFV10}, utilizing a more combinatorial
approach compared to the \sdp-based methods.
Other strategies include those that leverage the structure of the adjacency matrix, as seen in \citep{papailiopoulos2014finding}, and scalable methods for solving the relaxation of \dks, as seen in \citep{lu2024densest}.
}

The unconstrained version of the densest-subgraph problem has been used in practice to find communities in graphs. 
\citet{goldberg1984finding} has shown that this problem can be solved in polynomial
time and~\citet{charikar00approximation} provides a $2$-approximation algorithm.  
Besides the case of static graphs, efficient methods have been designed for  dynamic~\citep{bhattacharya2015space,sawlani2020near} and streaming settings~\citep{mcgregoar2015densest}.

Many variants of the densest-subgraph problem have been studied. For instance, \citet{sozio2010community} consider finding a densely connected subgraph that contains a set of query nodes.
\citet{dai2022anchored, ye2024efficient} consider finding a dense subgraph whose vertices are close to a set of reference nodes and contain a small set of anchored nodes.
\citet{tsourakakis2013denser} observe that, in practice, the densest subgraph is typically large and 
aim to find smaller, denser subgraphs by modifying the objective~function. 
Another line of work aims at finding subgraphs that are dense for multiple graph snap\-shots \citep{charikar2018finding,jethava2015finding,semertzidis2019finding}.

\citet{matakos2022strengthening} introduce the \kdense problem, where the goal is to add
$k$~vertices to a given subgraph to increase its density. 
They obtain $O(\sigma)$-approximation algorithms for graphs that have a $\sigma$-quasi-elimination~ordering. 
This class of graphs includes, for instance, chordal graphs. However, general graphs may not have a small $\sigma$-quasi-elimination~ordering.\footnote{Their algorithm requires a predefined $\sigma$.}
In contrast, our algorithms from Section~\ref{sec:densest-subgraph} provide constant-factor approximations for general graphs, but we need the assumption~$k=\Omega(n)$. 
As our hardness results show, this assumption is necessary to obtain $O(1)$-approximation algorithms, assuming the 
Small Set Expansion Conjecture~\citep{dblp:conf/stoc/raghavendras10}.

The problem of max-cut with $k$~refinements (\maxcutkc) is a generalization of \emph{max-cut with cardinality constraints} ({\ccmaxcut})~\citep{DBLP:journals/algorithmica/FriezeJ97,DBLP:journals/jcss/PapadimitriouY91}, in which the size of the cut is restricted to be equal to $k$, for a given $k \in [n]$.
The most studied version of {\ccmaxcut} is known as \maxbis, in which $k=n/2$
\citep{%
DBLP:journals/talg/AustrinBG16,%
DBLP:journals/jal/feigel01,%
DBLP:journals/algorithmica/FriezeJ97,%
DBLP:conf/soda/Manurangsi19,%
DBLP:conf/soda/RaghavendraT12%
}. 
The best polynomial-time approximation algorithm for {\ccmaxcut} achieves an approximation ratio of approximately $0.858$~\citep{DBLP:conf/soda/RaghavendraT12}.
This ratio is improved to approximately $0.8776$ for \maxbis 
\citep{DBLP:journals/talg/AustrinBG16}.
With respect to hardness of approximation for {\ccmaxcut}, the best result is due to~\citet{DBLP:conf/approx/AustrinS19} who give hardness of approximation with $k=\tau n$  as a function of~$\tau$, assuming the Unique Games Conjecture~\citep{DBLP:conf/stoc/Khot02a}.
A matching approximation algorithm for $\tau \in (0,0.365) \cup (0.635,1)$ is given by \citet{DBLP:conf/soda/RaghavendraT12}.
Our approximation algorithm for \maxcutkc obtains the same approximation ratio. As we show that \maxcutkc generalizes \ccmaxcut, in the parameter setting above, our algorithms are also optimal under the Unique Games Conjecture~\citep{DBLP:conf/stoc/Khot02a}.

The problem of graph partitioning with $k$ refinements (\gpkc) is a
generalization of \emph{maximum graph partitioning}
(\maxgp)~\citep{DBLP:journals/jal/feigel01, han2002improved,
	dblp:journals/rsa/halperinz02}, which generalizes several graph partitioning
	problems.
Besides the two problems \dks, \ccmaxcut that we generalize in this paper, two
other problems, namely, \emph{max-uncut with cardinality constraints} (\ccmaxuncut),
	  and \emph{vertex cover with cardinality constraints} (\ccvc) are also presented
	  as applications of \maxgp. 
All of the problems can be solved by \sdp-based approaches, and constant
approximation results are obtained assuming $k \in \Omega(n)$, where the
concrete approximation ratios depend on the value of~$k$. 

{As for picking the correct value of~$k$ in practice, we are not aware of
previous work covering this question for DkS or for Max-Cut-CC, and thus also
not for our problems. We suggest the following two strategies for DSkR: (1)~Our
greedy algorithms can be adapted such that they keep on including/removing nodes
as long as the density of the subgraph is increased. Once the density would
decrease by a node addition/deletion, they stop. This makes the algorithm
completely parameter-free, since $k$ is determined solely by the number of nodes
whose addition/removal increases the subgraph’s density. Beyond our greedy
algorithms, this approach can be mimicked by running our SDP-based or black-box
methods for multiple values of k and picking the one with the highest density.
(2)~Another approach could be to only add vertices to the subgraph as long as at
least a $\theta$-fraction of their degree is to vertices that were contained in
the original subgraph, for some threshold $\theta$. This ensures that only
vertices are added which are closely connected to vertices from the original
subgraph. Similar approaches can also be used for Max-Cut-kR (by replacing the
density objective with the cut objective).}

Finally, we remark that independently and concurrently with our project, \citet{fellows2023solution} and \citet{ grobler2023solution} propose reconfiguration frameworks that have similarities with our OptiRefine framework.
In addition, \citet{dalirrooyfard2024graph} study the $r$-move-$k$-partition problem, where the goal is to \emph{minimize} the multi\-way cut among $k$ partitions by moving $r$ nodes' positions. 
We note that this line of research focuses on developing fixed-parameter
algorithms for different sets of problems. Hence, their results apply in the
setting of small values of~$k$, whereas we concentrate on large values of~$k$.
Therefore, the concrete results of these papers are incomparable with ours, but
highlight the importance of studying refinement problems.

\section{Densest subgraph with $k$~refinements}
\label{sec:densest-subgraph}

In this section, we formally define the problem of finding the densest subgraph with
$k$~refinements (\dskc) and other relevant problems. 
We then study the relationship between these problems.

\begin{problem}[Densest subgraph with $k$~refinements (\dskc)]
\label{prob:densest-subgraph-local}
Given an undirected graph $G = (V, E, w)$, a subset $\OldSet \subseteq V$, and an integer $k \in \mathbb{N}$, \dskc seeks to find a subset of vertices $\SelectSet \subseteq V$ 
with $|\SelectSet| = k$ such that the density 
$d(\OldSet \symm \SelectSet)$ is maximized.
\end{problem}

In the definition of Problem~\ref{prob:densest-subgraph-local}, 
the operation $\symm$ denotes the symmetric difference of sets and 
$d(\OldSet\symm \SelectSet)$ is the density of the subgraph 
$G[\OldSet\symm \SelectSet]$, which is obtained by 
removing the vertices in $\OldSet \cap \SelectSet$ and adding the 
vertices in $\SelectSet \setminus \OldSet$. 
It is important to note that when $G[\OldSet]$ is already the 
densest subgraph, $G[\OldSet\symm \SelectSet]$ may have a lower 
density than $G[\OldSet]$ due to the constraint $|\SelectSet| = k$.

{
We note that it may also be interesting to study the problem with the inequality constraint $\abs{\SelectSet} \leq k$. 
However, from a theoretical point of view, the results of~\citet{khuller2009finding} and the proof of Lemma~\ref{lem:densest-subgraph-generalization} imply that (up to constant factors) the inequality-constraint version of the problem is at least as hard to approximate as the equality-constraint version. 
Therefore, we study the equality-constraint version, because it simplifies the analysis.
From a more practical point of view, we can solve the inequality-constraint version by running the equality-constraint version for multiple values of~$k$.
}

Below, in Theorem~\ref{thm:densest-subgraph-sdp} we show that we can obtain a
constant factor approximation algorithm for \dskc. In this section, we focus on
its computational complexity and its relationship with other classic problems.
First, we introduce the \kdense problem, %
which serves as an intermediate step in our analysis.
\begin{problem}[\kdense~\citep{matakos2022strengthening}]
\label{prob:k-densify}
	Given an undirected graph $G=(V, E, w)$, a set $U \subseteq V$ and an integer $k \in \mathbb{N}$, \kdense seeks to find a set of vertices $C \subseteq V$ with $\abs{C} = k$ such that ${w(E[\OldSet \symm \SelectSet])}$ is maximized.~\footnote{
We note that in the definition given by \citet{matakos2022strengthening}, the objective is
${w(E[\OldSet \cup \SelectSet])}$. However, they implicitly require that $k\leq n - \abs{U}$. 
{
Observe that in this case, the $k$ nodes are always selected from $V \setminus U$ in order to maximize the objective function.}
Hence, the maximum of ${w(E[\OldSet \cup \SelectSet])}$ is equal to ${w(E[\OldSet \symm \SelectSet])}$.
In Problem~\ref{prob:k-densify}, we adopt the objective function ${w(E[\OldSet \symm \SelectSet])}$ to make it consistent with \dskc.
}
\end{problem}

Note that the \kdense problem has the same cardinality constraint as \dskc, i.e., $|C|=k$. 
However, the objective function of \kdense maximizes ${w(E[\OldSet \symm \SelectSet])}$, 
{i.e., it maximizes the weight of the edges in the subgraph.}
That is unlike \dskc which maximizes the \emph{density} $d(\OldSet\symm \SelectSet)$.
Thus, while in \dskc we can add \emph{and remove} vertices from the given subgraph, in
\kdense we can only add vertices to the subgraph.
{
This is because removing vertices from the subgraph always causes a decrease of the sum of the edge weights of the subgraph. However, in some cases removing the vertices from the subgraph increases the density of the subgraph.
In Figure~\ref{fig:kdense-dskc-example}, we give a concrete example showing that
\dskc may find subgraphs which are more community-like and have higher density.  
}

\begin{figure}
    \centering
    \begin{tabular}{ccc}
        \includegraphics[width=0.3\textwidth]{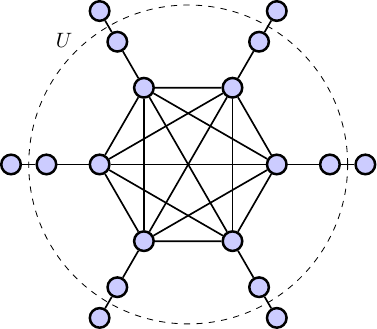} &
        \includegraphics[width=0.3\textwidth]{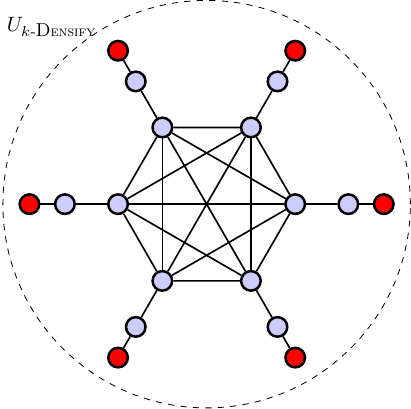} &
        \includegraphics[width=0.3\textwidth]{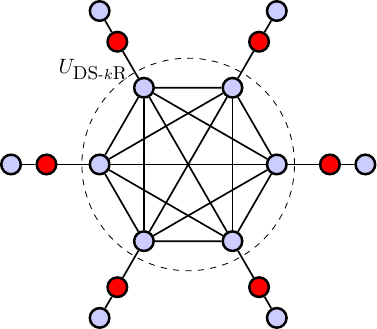} \\
        \begin{minipage}[t]{0.3\textwidth}
        \caption*{(a) {The input subgraph $G[\OldSet]$. The outer nodes
			of $\OldSet$ are connected to exactly one node in $V \setminus \OldSet$.}}
        \end{minipage} &
        \begin{minipage}[t]{0.3\textwidth}
        \caption*{(b) {The optimal solution $\OldSet_{\kdense}$ for \kdense. The density of the subgraph is $1.5$.}}
        \end{minipage} &
        \begin{minipage}[t]{0.3\textwidth}
        \caption*{(c) {The optimal solution $\OldSet_{\dskc}$ for \dskc. The density of the subgraph is $2.5$.}}
        \end{minipage}
    \end{tabular}
    \caption{{An example illustrating the difference between the optimal
		solutions for \kdense and \dskc. Here, we set $k=6$. 
		We illustrate the subgraphs of the initial solution and of the optimal
		solutions using dotted circles.
		The nodes in red are those selected by the optimal solutions of the \kdense and \dskc problems, respectively.
		Clearly, the solution found by \dskc is more community-like than the
		solution returned by \kdense, and also yields a higher density.}
	}
    \label{fig:kdense-dskc-example}
\end{figure}

Finally, we introduce the densest $k$-subgraph (\dks).
\begin{problem}[Densest $k$-subgraph (\dks)]
\label{prob:densest-k-subgraph}
Given an undirected graph $G=(V,E,w)$ and an integer
$k\in\mathbb{N}$, \dks seeks to find a set of vertices $\OldSet\subseteq V$
with $|\OldSet|=k$ such that the density $d(\OldSet)$ is maximized.
\end{problem}

\subsection{Relationships of the problems}
\label{sec:relationships-densest}

We show the connection of the three problems above.
In the subsequent analysis, the approximation ratios that we derive will
typically depend on $k$.  To emphasize the dependence on $k$, we will denote the
approximation ratios using the notation $\alpha_k$.

\spara{Relationship of \dskc and \dks.}
First, we show that \dskc \emph{generalizes}
the classic \emph{densest $k$-subgraph} (\dks) problem.  

\begin{lemma}
\label{lem:densest-subgraph-generalization}
	Let $k\in\mathbb{N}$.
	If there exists an $\alpha_k$-approximation algorithm for 
	\dskc with running time $T(n,m,k)$, then there exists an
	$\alpha_k$-approximation algorithm for \dks running in
	time $\bigO(T(n,m,k))$.
\end{lemma}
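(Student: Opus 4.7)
The plan is to give a trivial reduction that takes a \dks instance and produces a \dskc instance whose optimal solutions coincide, so that any $\alpha_k$-approximation for \dskc immediately yields an $\alpha_k$-approximation for \dks.

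Given a \dks instance consisting of an undirected graph $G = (V, E, w)$ and an integer $k \in \mathbb{N}$, I would construct the \dskc instance $(G, \OldSet, k)$ with the same graph $G$, the same integer $k$, and the initial set $\OldSet = \emptyset$. Since $\OldSet = \emptyset$, for any candidate set $\SelectSet \subseteq V$ with $|\SelectSet| = k$ we have $\OldSet \symm \SelectSet = \SelectSet$, and hence $d(\OldSet \symm \SelectSet) = d(\SelectSet)$. Therefore the feasible solutions of the two problems are the same $k$-subsets of $V$, and their objective values agree. In particular, the optimum of the \dskc instance equals the optimum of the \dks instance.

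Running the assumed $\alpha_k$-approximation algorithm for \dskc on $(G, \emptyset, k)$ in time $T(n,m,k)$ returns a set $\SelectSet$ with $|\SelectSet| = k$ satisfying $d(\SelectSet) = d(\emptyset \symm \SelectSet) \geq \alpha_k \cdot \OPT_{\dskc}(G, \emptyset, k) = \alpha_k \cdot \OPT_{\dks}(G, k)$. Returning this same $\SelectSet$ as the \dks output therefore gives an $\alpha_k$-approximation. The construction of the \dskc instance takes constant time given the \dks instance, so the overall running time is $\bigO(T(n,m,k))$.

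There is essentially no obstacle in this argument: the only point worth checking is that both problems enforce the cardinality constraint by equality (so the feasibility regions match exactly) and that the symmetric difference reduces to the set itself when $\OldSet$ is empty. Both are immediate from the problem definitions, so the lemma follows.
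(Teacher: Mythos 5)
Your proposal is correct and matches the paper's proof exactly: both take the \dks instance, set $\OldSet = \emptyset$ so that $\OldSet \symm \SelectSet = \SelectSet$, run the assumed \dskc algorithm, and return its output, noting that the feasible regions and objectives coincide. No further comparison is needed.
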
 

Lemma~\ref{lem:densest-subgraph-generalization} allows us to obtain
hardness-of-approximation results for \dskc based on the hardness of
\dks~\citep{DBLP:conf/stoc/LeeG22}.

\begin{corollary}[Hardness of approximation for \dskc]
\label{cor:densest-subgraph-hardness}
	Any hardness of factor $\beta_k$ for approximating \dks implies
	hardness of approximating \dskc within a factor $\beta_k$. 
	In particular, if $k=\tau n$ for $\tau\in (0,1)$, 
	then \dskc is hard to approximate within 
	$\SelectSet \tau \log(1/\tau)$, 
	where $\SelectSet \in \mathbb{R}_+$ is some fixed constant,
	assuming the Small Set Expansion Conjecture~\citep{dblp:conf/stoc/raghavendras10}. 
\end{corollary}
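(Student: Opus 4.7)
The plan is to obtain both assertions of Corollary~\ref{cor:densest-subgraph-hardness} as direct consequences of Lemma~\ref{lem:densest-subgraph-generalization}, without introducing any new combinatorial machinery. That lemma reduces \dks to \dskc with only constant-factor overhead in running time and \emph{no} loss in approximation ratio: any $\alpha_k$-approximation algorithm for \dskc yields an $\alpha_k$-approximation algorithm for \dks on instances of the same parameter~$k$. I would simply contrapositive this statement.

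Concretely, I would first argue the general (parameter-agnostic) claim. Suppose that under a complexity assumption~$\mathcal{H}$, \dks cannot be approximated within factor~$\beta_k$ in polynomial time. If \dskc admitted a polynomial-time $\beta_k$-approximation, then composing with the reduction of Lemma~\ref{lem:densest-subgraph-generalization} would yield a polynomial-time $\beta_k$-approximation for \dks, contradicting~$\mathcal{H}$. Hence \dskc is also hard to approximate within factor~$\beta_k$ under~$\mathcal{H}$. This gives the first sentence of the corollary.

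For the quantitative part with $k = \tau n$, I would invoke the known Small Set Expansion hardness for \dks: under the Small Set Expansion Conjecture of~\citet{dblp:conf/stoc/raghavendras10}, there exists an absolute constant $\SelectSet \in \mathbb{R}_+$ such that \dks with $k = \tau n$ cannot be approximated within factor $\SelectSet \, \tau \log(1/\tau)$ in polynomial time (this is the hardness frontier one obtains from SSE-based reductions for \dks, as used in the line of work leading to~\citet{DBLP:conf/stoc/LeeG22}). Plugging $\beta_k = \SelectSet \, \tau \log(1/\tau)$ into the general statement above yields exactly the claimed hardness for \dskc.

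The only real ``obstacle'' is bookkeeping: making sure to cite the right SSE-hardness result for \dks and checking that the reduction of Lemma~\ref{lem:densest-subgraph-generalization} preserves the parameter $k = \tau n$ (it does by construction, since the reduction passes $k$ through unchanged). No new analysis is required beyond this transfer.
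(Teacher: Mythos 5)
Your proposal is correct and follows essentially the same route as the paper: take the contrapositive of Lemma~\ref{lem:densest-subgraph-generalization} (which passes $k$ through unchanged) to transfer hardness, then instantiate with the SSE-based hardness for \dks, which the paper attributes directly to \citet{DBLP:conf/stoc/LeeG22}. No meaningful difference in approach.
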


Corollary~\ref{cor:densest-subgraph-hardness} implies that for $k=o(n)$ 
(and, therefore, $\tau=o(1)$), it is not possible to obtain constant-factor approximation
algorithms for \dskc. Additionally, for $k=\Omega(n)$, the problem is hard to
approximate within a constant~factor.

\spara{Relationship of \kdense and \dskc.}
We show that \dskc can be effectively solved by applying an algorithm for
\kdense on the same input instance, with little impact on the approximation
ratio.
Specifically, Lemma~\ref{lemma:k-densify-local-changes-connection} shows that running an
$\alpha_k$-approximation algorithm for \kdense on an instance of \dskc only
loses a factor of factor~$\frac{1-c}{1+c}$ in approximation, where $c \in (0,1)$
is a constant such that~$k \leq c\abs{U}$. We note that in practice, it is
realistic that $c$ is small since we only want to make a small fraction of
changes to~$U$.  This implies that $\frac{1-c}{1+c}$ is large; for
instance, $\frac{1-c}{1+c} \geq 0.9$ if $c=5\%$.
Notice that the condition $k \leq n - \abs{U}$ is an implicit requirement from the definition of the \kdense problem. 

\begin{lemma}
    \label{lemma:k-densify-local-changes-connection}
	Let $(G=(V, E, w), U, k)$ be an instance for \dskc.
    Let us assume that $k \leq c \abs{U}$ for some $c \in (0, 1)$, and $k \leq n - \abs{U}$. 
	Applying an $\alpha_k$-approximation algorithm for \kdense on $G$ 
    provides a $\left(\frac{1-c}{1+c}\, \alpha_k\right)$-approximate solution for \dskc. 
\end{lemma}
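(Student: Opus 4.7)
The plan is to exploit the fact that the two problems differ only in a normalization factor that is controlled by $|U \symm C|$, and that this denominator can only swing within a narrow window once we assume $k \le c|U|$. Let $C$ be the set returned by the $\alpha_k$-approximation algorithm for \kdense on input $(G,U,k)$, let $C^{**}$ be an optimal solution of the \kdense instance (maximizing $|E[U\symm C']|$), and let $C^*$ be an optimal solution of \dskc (maximizing $d(U\symm C')$). Both problems share the cardinality constraint $|C|=k$ and the feasibility constraints guaranteed by $k\le c|U|$ and $k\le n-|U|$, so $C^*$ and $C^{**}$ live in the same feasible region.

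\textbf{Step 1: bounding the symmetric-difference sizes.}
If $C'$ has $a$ vertices in $U$ and $k-a$ outside, then $|U\symm C'|=|U|+k-2a$. Since $k\le c|U|<|U|$ we have $a\in[0,k]$ (the upper assumption $k\le n-|U|$ ensures $a=0$ is feasible as well), hence
\begin{equation*}
|U|-k \;\le\; |U\symm C'| \;\le\; |U|+k.
\end{equation*}
Applying $k\le c|U|$ to both bounds gives $(1-c)|U|\le |U\symm C'|\le(1+c)|U|$ for every feasible $C'$, in particular for $C$ and for $C^*$.

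\textbf{Step 2: from edge counts to density.}
Because $C^*$ is feasible for the \kdense instance, optimality of $C^{**}$ and the $\alpha_k$-approximation guarantee give
\begin{equation*}
|E[U\symm C]| \;\ge\; \alpha_k\,|E[U\symm C^{**}]| \;\ge\; \alpha_k\,|E[U\symm C^*]|.
\end{equation*}
Dividing by $|U\symm C|$ and using $|E[U\symm C^*]|=d(U\symm C^*)\cdot|U\symm C^*|$,
\begin{equation*}
d(U\symm C) \;\ge\; \alpha_k\,\frac{|U\symm C^*|}{|U\symm C|}\,d(U\symm C^*).
\end{equation*}
The ratio is minimized when the numerator is as small as possible and the denominator as large as possible, so by Step~1,
\begin{equation*}
\frac{|U\symm C^*|}{|U\symm C|} \;\ge\; \frac{(1-c)|U|}{(1+c)|U|} \;=\; \frac{1-c}{1+c},
\end{equation*}
which yields $d(U\symm C)\ge \frac{1-c}{1+c}\,\alpha_k\,d(U\symm C^*)$, as required.

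\textbf{Anticipated obstacle.}
There is no hard technical obstacle; the only subtlety is making sure that the comparison set $C^*$ (optimal for \dskc) is actually feasible for the \kdense solver invoked on the same input, which is why the two assumptions $k\le c|U|$ and $k\le n-|U|$ appear (they guarantee both bounds on $|U\symm C'|$ are achievable and that the \kdense instance is well-posed in the sense of Problem~\ref{prob:k-densify}). Once this is verified, the argument reduces to the two-line edge-to-density conversion above.
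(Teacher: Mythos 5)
Your proof is correct and follows essentially the same approach as the paper's: bound $|U\symm\,\cdot\,|$ within the window $[(1-c)|U|,\,(1+c)|U|]$ using $k\le c|U|$, use feasibility of the \dskc optimum for the \kdense instance to compare edge counts, and combine with the approximation guarantee. The only cosmetic difference is that the paper routes the chain of inequalities through the \kdense optimum $C_1^*$ and the exact identity $|U\symm C_1^*|=|U|+k$, whereas you bound $|U\symm C^*|/|U\symm C|$ directly, which is a minor streamlining.
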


\spara{Relationship of \kdense and \dks.}
We show that an algorithm for the Densest $(k+1)$-Subgraph problem (\dksplus)
can be effectively utilized to solve \kdense, with only small losses in the approximation ratio.

\begin{lemma}
\label{lemma:k-densify-dks}
	Suppose there exists an $\alpha_{k+1}$-approximation algorithm for \dksplus,
	then there exists a $\frac{k-1}{k+1}\alpha_{k+1}$-approximation algorithm
	for \kdense.
\end{lemma}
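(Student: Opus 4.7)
The plan is to reduce an instance $(G, \OldSet, k)$ of \kdense to an instance $(G', k+1)$ of \dksplus by contracting $\OldSet$ into a single super-node. Concretely, I would construct $G' = (V', E', w')$ with $V' = (V \setminus \OldSet) \cup \{v^*\}$; edges inside $V \setminus \OldSet$ retain their original weights, and for every $v \in V \setminus \OldSet$ I add an edge $(v^*, v)$ with weight $w'(v^*, v) = \sum_{u \in \OldSet} w(u, v)$, aggregating the total weight between $v$ and $\OldSet$. The key observation is that for any $S \subseteq V'$ of size $k+1$ with $v^* \in S$, writing $\SelectSet := S \setminus \{v^*\}$, one has $|E'[S]| = |E[\OldSet, \SelectSet]| + |E[\SelectSet]|$, which equals the \kdense objective $|E[\OldSet \symm \SelectSet]|$ minus the constant $|E[\OldSet]|$. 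Here I use the implicit assumption $k \leq n - |\OldSet|$ from Problem~\ref{prob:k-densify}, which lets us restrict to $\SelectSet \subseteq V \setminus \OldSet$, so that $\OldSet \symm \SelectSet = \OldSet \cup \SelectSet$. Consequently, plugging in $\{v^*\} \cup \SelectSet^*$ for an optimal \kdense solution $\SelectSet^*$ yields a feasible size-$(k+1)$ set in $G'$, so the \dksplus optimum on $G'$ is at least the \kdense optimum, which I denote by $\OPT$.

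Next, I would run the assumed $\alpha_{k+1}$-approximation algorithm for \dksplus on $G'$ to obtain $S$ with $|S| = k+1$ and $|E'[S]| \geq \alpha_{k+1}\, \OPT$, and split into two cases depending on whether $v^*$ was picked. If $v^* \in S$, then $\SelectSet := S \setminus \{v^*\}$ has size exactly $k$, lies in $V \setminus \OldSet$, and satisfies $|E[\OldSet, \SelectSet]| + |E[\SelectSet]| = |E'[S]| \geq \alpha_{k+1}\, \OPT$, which is already stronger than the desired bound. If $v^* \notin S$, then $S \subseteq V \setminus \OldSet$ has $|S| = k+1$ and $|E'[S]| = |E[S]|$: the solver returned a dense subgraph but completely ignored all edges to $\OldSet$, and I still need to shrink $S$ from size $k+1$ to size $k$.

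The main obstacle is handling this second case without losing more than a factor $\frac{k-1}{k+1}$ in the objective. I would remove the vertex $v \in S$ of minimum induced degree $d_S(v)$ in $G[S]$; by a standard averaging argument $d_S(v) \leq \frac{2|E[S]|}{k+1}$, so $\SelectSet := S \setminus \{v\}$ satisfies $|E[\SelectSet]| \geq |E[S]| - \frac{2|E[S]|}{k+1} = \frac{k-1}{k+1}|E[S]|$. Combined with the trivial bound $|E[\OldSet, \SelectSet]| \geq 0$, this gives $|E[\OldSet, \SelectSet]| + |E[\SelectSet]| \geq \frac{k-1}{k+1}\alpha_{k+1}\, \OPT$, exactly matching the claim. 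The final algorithm outputs this $\SelectSet$ (or the one from the first case, whichever applies), incurring only $O(n+m)$ overhead on top of the \dksplus solver.
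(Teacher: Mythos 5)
Your proposal is correct and follows essentially the same route as the paper's proof: contract $\OldSet$ into a super-vertex, run the \dksplus solver on the contracted graph, case-split on whether the super-vertex is in the returned set, and in the bad case delete the minimum-induced-degree vertex, losing the factor $\frac{k-1}{k+1}$ by the averaging argument. The only difference is bookkeeping: the paper carries the constant $|E[\OldSet]|$ inside the objective and uses $x+\rho y\ge\rho(x+y)$ for $\rho\le 1$, whereas you work with the shifted objective $|E[\OldSet,\SelectSet]|+|E[\SelectSet]|$; you should state explicitly that your $\OPT$ is the shifted optimum (the contracted graph's \dksplus optimum need not exceed the full \kdense optimum, only the shifted one) and that adding back $|E[\OldSet]|$ preserves the guarantee because $\frac{k-1}{k+1}\alpha_{k+1}\le 1$.
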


\begin{proof}[Proof sketch]
On a high level, the reduction from \kdense to \dksplus works as follows (see
Appendix~\ref{appendix:ds-proofs:kdensifydks} for details). Given an instance
$(G=(V,E,w),\OldSet,k)$ for \kdense, we build a graph~$G'$ in which we
\emph{contract} all vertices in $\OldSet$ into a special vertex~$i^*$. Then we
solve \dksplus on the graph~$G'$.  We then consider two cases:
either~$i^*$ is included in the (approximate) solution~$U'$ of \dksplus, or not.  If
$i^*$ is part of the solution, we have added exactly $k$~vertices
to~$\OldSet$. Otherwise, we add all vertices from $U'$ to $U$, except the vertex
from $U'$ of lowest degree in $G[U\cup U']$; this implies that we also added
$k$~vertices to~$U$. Then we can show
that the $k$~additional vertices chosen by the \dksplus-algorithm form a
solution for \kdense subgraph with the desired approximation guarantee.
\end{proof}

\subsection{Black-box reduction}
\label{sec:densest-subgraph:black-box}

By combining the results from the previous section, we can now show that any approximation
algorithm for \dks implies (under some assumptions on $|\OldSet|$ and~$k$) an
approximation algorithm for \dskc.  This result can be viewed as the reverse
direction of the reduction in Lemma~\ref{lem:densest-subgraph-generalization},
and is obtained by applying
Lemmas~\ref{lemma:k-densify-local-changes-connection}
and~\ref{lemma:k-densify-dks}.  Our formal result is as follows.
\begin{theorem}
\label{thm:densest-subgraph-reduction}
Suppose there exists an $\alpha_k$-approximation algorithm for the \dks problem with running time $T(m,n,k)$.
If $k \leq c\abs{\OldSet}$, for some $c \in (0, 1)$, 
there exists a $\left(\frac{1-c}{(1+c)}\, \frac{k-1}{k+1} \, \alpha_{k+1}\right)$-approximation algorithm for 
the \dskc problem with running time $\bigO(T(m,n,k+1) + m)$.
\end{theorem}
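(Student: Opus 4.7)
The plan is to establish the theorem by composing the two reductions developed earlier in Section~\ref{sec:relationships-densest}, namely Lemma~\ref{lemma:k-densify-dks} (which turns a \dksplus-approximation into a \kdense-approximation) and Lemma~\ref{lemma:k-densify-local-changes-connection} (which turns a \kdense-approximation into a \dskc-approximation), and then feeding in the assumed \dks-approximation algorithm at the head of this chain with the parameter~$k+1$ rather than~$k$. Since both component lemmas are already at our disposal, the proof is essentially a bookkeeping exercise that propagates the approximation factor and the running time through the chain.

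Concretely, given a \dskc instance $(G=(V,E,w), \OldSet, k)$ satisfying $k \leq c\abs{\OldSet}$, I would first construct the contracted graph $G'$ used in the proof sketch of Lemma~\ref{lemma:k-densify-dks}, obtained by merging all vertices of $\OldSet$ into a single super-vertex $i^*$ (removing self-loops and aggregating parallel edge weights); this preprocessing takes $\bigO(m)$ time and yields a graph with at most $m$ edges and at most $n$ vertices. I would then invoke the assumed $\alpha_{k+1}$-approximation algorithm for \dks on $(G', k+1)$ in time $T(m,n,k+1)$, and apply the post-processing of Lemma~\ref{lemma:k-densify-dks} (branching on whether $i^*$ belongs to the returned set) to extract a set $\SelectSet$ of size~$k$. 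By Lemma~\ref{lemma:k-densify-dks}, $\SelectSet$ is a $\frac{k-1}{k+1}\,\alpha_{k+1}$-approximation to the \kdense optimum on the original instance $(G, \OldSet, k)$.

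Next, I would read the same set $\SelectSet$ as a candidate solution to the \dskc instance via the symmetric difference $\OldSet \symm \SelectSet$, and invoke Lemma~\ref{lemma:k-densify-local-changes-connection}: under the hypothesis $k\leq c\abs{\OldSet}$ (and with the implicit feasibility condition $k \leq n - \abs{\OldSet}$ needed for \kdense to be well-defined), the quality of $\SelectSet$ as a \dskc solution is at least $\frac{1-c}{1+c}$ times its quality as a \kdense solution. Composing the two factors gives the advertised approximation ratio $\frac{1-c}{1+c}\cdot\frac{k-1}{k+1}\cdot\alpha_{k+1}$, and summing up preprocessing and the single \dks call gives a total running time of $\bigO(T(m,n,k+1) + m)$.

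The only nontrivial point to verify is that the two intermediate optimality comparisons compose cleanly, i.e.\ that the \dks optimum on $G'$ with parameter~$k+1$ upper-bounds (up to the $\frac{k-1}{k+1}$ loss) the \kdense optimum on $G$, which in turn upper-bounds (up to the $\frac{1-c}{1+c}$ loss) the \dskc optimum on $G$. Both directions are already contained in the cited lemmas; the reason we use parameter $k+1$ in the \dks call is precisely to reserve one slot for the super-vertex $i^*$ that represents $\OldSet$ after contraction. I do not expect any genuine obstacle beyond keeping the factors and side conditions straight, since the theorem is a composition result whose substantive content lives in the two lemmas proved earlier.
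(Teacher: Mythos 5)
Your proof is correct and takes essentially the same route as the paper: the paper's own proof is a one-line observation that the theorem follows by composing Lemma~\ref{lemma:k-densify-dks} with Lemma~\ref{lemma:k-densify-local-changes-connection}, which is exactly the two-step chain you carry out (including the correct choice of parameter $k+1$ for the \dks call and the $\bigO(m)$ accounting for the contraction and post-processing). Your explicit note about the implicit side condition $k \leq n - \abs{\OldSet}$ inherited from Lemma~\ref{lemma:k-densify-local-changes-connection} is a detail the paper's theorem statement silently assumes; flagging it does not change the argument.
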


By utilizing the approximation algorithm for \dks by \citet{DBLP:journals/jal/feigel01} in
Theorem~\ref{thm:densest-subgraph-reduction}, we obtain
Corollary~\ref{cor:densest-subgraph-reduction}.  We
include the assumption $k = \Omega(n)$ in the corollary, as it is a requirement
of~\citet{DBLP:journals/jal/feigel01}.

\begin{corollary}
\label{cor:densest-subgraph-reduction}
	If $\abs{\OldSet}=\Omega(n)$, $k=\Omega(n)$, and 
	$k \leq c\abs{\OldSet}$ for a constant~$c \in (0, 1)$,
    there exists an $\bigO(1)$-approximation algorithm
	for~\dskc.
\end{corollary}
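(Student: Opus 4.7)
The plan is to apply Theorem~\ref{thm:densest-subgraph-reduction} directly, instantiating the black-box \dks solver with the algorithm of~\citet{DBLP:journals/jal/feigel01}. Concretely, I would run that \dks algorithm on the contracted instance with parameter $k+1$ as prescribed by the reduction underlying Theorem~\ref{thm:densest-subgraph-reduction} (which itself chains Lemmas~\ref{lemma:k-densify-local-changes-connection} and~\ref{lemma:k-densify-dks}), and output the resulting refinement set.

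The remaining work is simply to verify that each of the three factors in the approximation ratio
\[
\frac{1-c}{1+c}\cdot\frac{k-1}{k+1}\cdot\alpha_{k+1}
\]
delivered by Theorem~\ref{thm:densest-subgraph-reduction} is bounded below by a universal constant. First, since $c\in(0,1)$ is a fixed constant by hypothesis, the factor $\frac{1-c}{1+c}$ is a positive constant. Second, since $k=\Omega(n)$, we may assume $n$ is large enough that $k\ge 3$, hence $\frac{k-1}{k+1}\ge \tfrac{1}{2}$, which is a constant. Third, the Feige--Peleg--Kortsarz algorithm achieves $\alpha_{k+1}=\Omega(1)$ precisely in the regime $k+1=\Omega(n)$, and this regime is guaranteed by our assumption $k=\Omega(n)$. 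Multiplying these three constants gives the claimed $\bigO(1)$-approximation ratio.

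Before invoking the reduction I also need to check the side conditions built into Theorem~\ref{thm:densest-subgraph-reduction} (inherited from Lemma~\ref{lemma:k-densify-local-changes-connection}): the hypothesis $k\le c\abs{U}$ is assumed directly, and the implicit \kdense requirement $k\le n-\abs{U}$ holds under our assumption $\abs{U}=\Omega(n)$ provided $k$ is chosen with $k\le n-\abs{U}$; in the boundary case where $\abs{U}$ is too close to $n$, one can equivalently argue about the complement $\bar U$ since the density objective is unaffected by such bookkeeping. I expect no real obstacle here: the corollary is essentially a routine composition of Theorem~\ref{thm:densest-subgraph-reduction} with the known constant-factor \dks guarantee, and the only point meriting care is the consistent verification of the $\Omega(n)$-type preconditions across the chained reductions.
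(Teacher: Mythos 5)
Your approach matches the paper's exactly: apply Theorem~\ref{thm:densest-subgraph-reduction} with the \dks algorithm of \citet{DBLP:journals/jal/feigel01} as the black-box solver, and observe that each of the three factors $\frac{1-c}{1+c}$, $\frac{k-1}{k+1}$, and $\alpha_{k+1}$ is $\Omega(1)$ under the hypotheses. One correction to an aside: your claim that one can ``equivalently argue about the complement $\bar U$ since the density objective is unaffected'' is false---$d(U)$ and $d(\bar U)$ are in general unrelated quantities---so that remark does not cover the corner case $k > n - \abs{U}$; the condition $k \le n - \abs{U}$ inherited from Lemma~\ref{lemma:k-densify-local-changes-connection} is genuinely needed (it appears explicitly in Theorem~\ref{thm:densest-subgraph-sdp}) and should simply be read as an implicit additional hypothesis of the corollary, as the paper tacitly treats it.
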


In our experiments, we will implement the algorithm from
Corollary~\ref{cor:densest-subgraph-reduction} and denote it
as \emph{black-box algorithm}, since the algorithm is using a black-box solver
for \dks.

\section{Max-cut with $k$ refinements}
\label{sec:max-cut}

In this section, we study max-cut with $k$ refinements (\maxcutkc). 
We start by providing formal definitions for \maxcutkc and for cardinality-constrained max-cut (\ccmaxcut). 
We then present a hardness result for \maxcutkc through \ccmaxcut.  
In the reverse direction, we show that any approximation algorithm for unconstrained \maxcut 
provides an approximation algorithm for \ccmaxcut, in a black-box reduction. 
Later, in Sections~\ref{sec:general-framework} and~\ref{sec:max-cut:sos}, we will
also provide constant-factor approximation algorithms for \maxcutkc.

{We start by formally defining \maxcutkc and give an example in Figure~\ref{fig:maxcutkr-example}.}

\begin{problem}[Max-cut with $k$ refinements (\maxcutkc)]
\label{prob:max-cut-local}
	Given an undirected graph $G = (V, E, w)$, a subset $\OldSet \subseteq V$, and an integer $k \in \mathbb{N}$, 
	\maxcutkc seeks to find a set of
	vertices $\SelectSet\subseteq V$ such that $|\SelectSet|=k$ and 
	$\cut{\OldSet \symm \SelectSet}$ is maximized.
\end{problem}

As with the \dskc problem, observe that if $\cut{\OldSet}$ is the maximum cut,
$\cut{\OldSet\symm \SelectSet}$ may have a lower value than $\cut{\OldSet}$ 
due to the equality constraint $|\SelectSet| = k$.  
{
\begin{figure}
    \centering
    \begin{subfigure}[c]{0.3\textwidth}
        \centering
        \includegraphics[width=\textwidth]{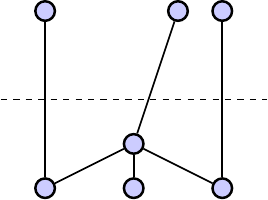}
        \caption{{The initial partition of the graph. The cut value is $3$.}}
        \label{fig:cut-example-a}
    \end{subfigure}
    \hfill
    \begin{subfigure}[c]{0.3\textwidth}
        \centering
        \includegraphics[width=\textwidth]{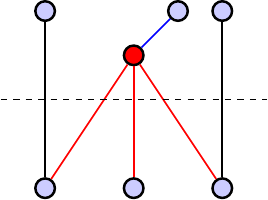}
        \caption{{The optimal solution when $k=1$. The cut value is $5$.}}
        \label{fig:cut-example-b}
    \end{subfigure}
    \hfill
    \begin{subfigure}[c]{0.3\textwidth}
        \centering
        \includegraphics[width=\textwidth]{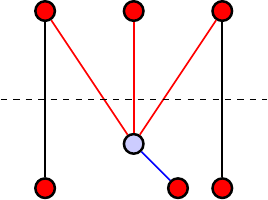}
        \caption{{The optimal solution when $k = 6$. The cut value is $5$.}}
        \label{fig:cut-example-c}
    \end{subfigure}
    \caption{{An illustration of \maxcutkc on a graph with $7$ nodes. The dotted line indicates the partition of the graph. The nodes with red color are the nodes selected by the optimal solution of the \maxcutkc problem. The red and blue edges indicate that the edge is moved into or out of the cut set, respectively.}}
    \label{fig:maxcutkr-example}
\end{figure}
}

We also introduce the problem of \emph{maximum cut with cardinality constraints} ({\ccmaxcut}), 
which we use for the analysis of the hardness of \maxcutkc.
\begin{problem}[Max-cut with cardinality constraints (\ccmaxcut)]
Given an undirected graph $G = (V, E, w)$, and an integer $k \in \mathbb{N}$, 
	\ccmaxcut seeks to find a set of vertices $\OldSet\subseteq V$ of size $|\OldSet|=k$ 
	such that $\cut{\OldSet}$ is maximized. 
\end{problem}

\spara{Relationship of \maxcutkc and \ccmaxcut.}
First, we show that \maxcutkc is a generalization of \ccmaxcut in Lemma~\ref{lem:max-cut-generalization}.

\begin{lemma}
\label{lem:max-cut-generalization}
	Let $k\in\mathbb{N}$.
	If there exists an $\alpha_k$-approximation algorithm for
	\maxcutkc with running time $T(n,m,k)$, then there exists an
	$\alpha_k$-approximation algorithm for 
	\ccmaxcut with running time~$\bigO(T(n,m,k))$.
\end{lemma}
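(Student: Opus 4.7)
The plan is to give a one-line reduction from \ccmaxcut to \maxcutkc, analogous in spirit to the reduction used in Lemma~\ref{lem:densest-subgraph-generalization} for the densest-subgraph setting. Given a \ccmaxcut instance $(G=(V,E,w),k)$, I will construct a \maxcutkc instance simply by setting the initial subset to be $\OldSet=\emptyset$ and keeping the same graph $G$ and the same integer $k$.

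The key observation is that for any $\SelectSet\subseteq V$ with $|\SelectSet|=k$, one has $\emptyset\symm\SelectSet=\SelectSet$, and therefore $\cut{\emptyset\symm\SelectSet}=\cut{\SelectSet}$. Thus the \maxcutkc objective on the constructed instance coincides, as a function of $\SelectSet$, with the \ccmaxcut objective on the original instance, and the feasible sets of both problems are the same collection of size-$k$ subsets of $V$. Consequently, an optimal solution to the \maxcutkc instance is an optimal solution to the \ccmaxcut instance with identical objective value, and any $\alpha_k$-approximate solution for \maxcutkc on this instance is automatically an $\alpha_k$-approximate solution for \ccmaxcut.

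For the running time, constructing the \maxcutkc instance requires only specifying $\OldSet=\emptyset$, which takes $O(1)$ extra work on top of reading the input; the reduction therefore runs in $\bigO(T(n,m,k))$ time as claimed. There is essentially no technical obstacle in the argument: the only thing worth checking is that the problem definition of \maxcutkc permits $\OldSet=\emptyset$ (which it does, since the statement only requires $\OldSet\subseteq V$), and that both problems indeed enforce the same equality constraint $|\SelectSet|=k$, so the approximation factor transfers without any loss.
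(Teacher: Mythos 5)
Your proof is correct and follows exactly the same reduction as the paper: set $\OldSet=\emptyset$, observe that $\emptyset\symm\SelectSet=\SelectSet$ so the objectives and feasible sets coincide, and conclude that the approximation ratio and (up to $O(1)$ overhead) the running time carry over directly.
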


Combining Lemma~\ref{lem:max-cut-generalization} and the results of~\citet{DBLP:conf/approx/AustrinS19}, we obtain the following hardness result for \maxcutkc.

\begin{corollary}
\label{cor:max-cut-hardness}
	Any hardness of factor $\beta_k$ for approximating {\ccmaxcut} 
    implies hardness of approximating {\maxcutkc} within factor $\beta_k$. 
	In particular, suppose that $k=\tau n$ for $\tau\in (0,1)$. 
	For $\tau\leq 0.365$ or $\tau\geq 0.635$, we have that
	$\beta_k\approx0.8582$ and for $\tau\in(0.365,0.635)$ we have
   	$\beta_k\in[0.8582, 0.8786]$.
\end{corollary}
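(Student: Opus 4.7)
The plan is to obtain the result as a direct consequence of Lemma~\ref{lem:max-cut-generalization} combined with the known hardness of approximation results for \ccmaxcut due to \citet{DBLP:conf/approx/AustrinS19}. The contrapositive of Lemma~\ref{lem:max-cut-generalization} says: if \ccmaxcut cannot be approximated within a factor $\beta_k$ (under some complexity assumption), then no polynomial-time algorithm can approximate \maxcutkc within factor $\beta_k$ either, because otherwise the black-box reduction of Lemma~\ref{lem:max-cut-generalization} would yield a $\beta_k$-approximation for \ccmaxcut in polynomial time, a contradiction. This immediately gives the first (general) sentence of the corollary.

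Next, I would verify that the reduction of Lemma~\ref{lem:max-cut-generalization} is parameter preserving in $k$, i.e., that a \ccmaxcut instance on $n$ vertices with cardinality bound $k = \tau n$ is mapped to a \maxcutkc instance on $\Theta(n)$ vertices with refinement budget $\Theta(\tau n)$. This matters because the hardness results of \citet{DBLP:conf/approx/AustrinS19} are stated in terms of the ratio $\tau = k/n$, and we need this ratio to survive the reduction essentially unchanged, so that each regime on $\tau$ in \ccmaxcut translates into the same regime for \maxcutkc. Assuming the reduction in Lemma~\ref{lem:max-cut-generalization} is of the natural form (take $U = \emptyset$ or $U = V$ in the \maxcutkc instance), this parameter preservation should be immediate.

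With parameter preservation in hand, I would then plug in the quantitative hardness bounds of \citet{DBLP:conf/approx/AustrinS19}, which state: assuming the Unique Games Conjecture~\citep{DBLP:conf/stoc/Khot02a}, \ccmaxcut with $k=\tau n$ is hard to approximate within a factor $\beta_k$ where $\beta_k \approx 0.8582$ for $\tau \in (0, 0.365] \cup [0.635, 1)$ (matching the algorithm of \citet{DBLP:conf/soda/RaghavendraT12}), and $\beta_k \in [0.8582, 0.8786]$ for $\tau \in (0.365, 0.635)$. Transferring each of these bounds through the reduction yields the stated two cases of the corollary.

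The main obstacle, if any, is purely bookkeeping: making sure that the symmetric-difference formulation used by \maxcutkc and the cardinality formulation used by \ccmaxcut are aligned so that $k$ on the two sides refers to the same quantity (this is the point handled inside Lemma~\ref{lem:max-cut-generalization}), and that the cut values are identical (not just proportional) so that the approximation ratio is preserved without any multiplicative loss. There is no genuinely new technical content in the corollary beyond invoking Lemma~\ref{lem:max-cut-generalization} and substituting the Austrin--Stankovic bounds.
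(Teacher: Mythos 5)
Your proposal is correct and follows the same route as the paper: the paper's proof is simply to invoke Lemma~\ref{lem:max-cut-generalization} (whose reduction sets $U=\emptyset$ on the same graph with the same $k$, so the ratio $\tau=k/n$ and the cut values are preserved exactly) and plug in the hardness bounds of \citet{DBLP:conf/approx/AustrinS19}. Your extra care about parameter preservation is precisely the bookkeeping that makes the one-line argument valid.
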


\subsection{Black-box reduction}
\label{sec:max-cut:black-box}

Finally, we show that algorithms for the
(unconstrained) {\maxcut} problem can be used to solve {\maxcutkc}.
{
\begin{theorem}
\label{thm:max-cut-black-box}
	Consider an $\alpha$-approximation algorithm for \maxcut with running time
	$T(m,n)$. 
	There exists an algorithm for \maxcutkc with running time $\bigO(T(m,n) + m)$.
	The approximation ratio when $k \leq n/2$ is
    $\alpha \left(1 -\Theta\!\left(\frac{1}{n}\right)\right)
    \min\left\{4\frac{k^2}{n^2}, \left(1-\frac{k}{n}\right)^2\right\}$.
	For $k \geq n/2$, the approximation ratio is 
    $\alpha \left(1 -\Theta\!\left(\frac{1}{n}\right)\right)
    \min\left\{4\frac{(n-k)^2}{n^2}, \left(\frac{k}{n}\right)^2\right\}$.
\end{theorem}
}

In our reduction, we run an algorithm 
(e.g., the {\sdp} relaxation by~\citet{goemans1994approximation}) for {\maxcut} without
cardinality constraint, i.e., the cut may contain an arbitrary number of
vertices on each side. Suppose this algorithm returns a cut $(\TempSet,\TempComSet)$.
Then we make refinements to $\TempSet$ to bring it ``closer'' to $\OldSet$,
i.e., we greedily move vertices until $\TempSet=\OldSet\symm \SelectSet$ 
for some set $\SelectSet$ with $|\SelectSet|=k$.  
Using the assumptions from the theorem, we obtain an $\bigO(1)$-approximation
algorithm assuming that $k=\Omega(n)$.
The pseudo\-code for this method is presented in the appendix.

\section{Graph partitioning with $k$ refinements}
\label{sec:general-framework}

The \emph{maximum graph-partitioning problem} (\maxgp) characterizes \dks and \ccmaxcut, as well as \emph{max-uncut with cardinality constraints} (\ccmaxuncut), and \emph{vertex cover with cardinality constraints} (\ccvc) with specific parameter settings. 
Previous papers~\citep{DBLP:journals/jal/feigel01,han2002improved,dblp:journals/rsa/halperinz02}
provided \sdp-based approximation algorithms for \maxgp.
Let us define \maxgp the same way as \citep{han2002improved}, by introducing parameters $\parzero, \parone, \partwo, \parthree$ into the \maxgp objective function, specified in Table~\ref{tab:case_distinctions}.
Our analysis in the remainder of this section holds for these specific settings. 

\begin{table}[t]
\centering
\caption{
Partition measures and corresponding values of 
$\parzero, \parone, \partwo, \parthree$ for the problems we study in this paper.
}
\begin{tabular}{lllrrrr}
\toprule
{Problem} \gpkc & {Problem} \maxgp & {Measure} $\partition$ & \parzero & \parone & \partwo & \parthree\\  \midrule \smallskip
\kdense & \dks & ${w(E[\cdot])}$ & $\frac{1}{4}$ & $\frac{1}{4}$ & $\frac{1}{4}$ & $\frac{1}{4}$ \\ \smallskip %
\maxcutkc & \ccmaxcut &$\cut(\cdot)$ & $\frac{1}{2}$ & $0$ & $0$ & $-\frac{1}{2}$\\  \smallskip %
\maxuncutkc & \ccmaxuncut &${w(E[V])} - \cut(\cdot)$ & $\frac{1}{2}$ & $0$ & $0$ & $\frac{1}{2}$\\  %
\vckc & \ccvc &${w(E[\cdot])} + \cut(\cdot)$ & $\frac{3}{4}$ & $\frac{1}{4}$ & $\frac{1}{4}$ & $-\frac{1}{4}$\\
\bottomrule
\end{tabular}
\label{tab:case_distinctions}
\end{table}

\begin{problem}[Maximum graph-partitioning (\maxgp)]
	\label{prob:max-graph-partition}
Given an undirected graph $G = (V, E, w)$ where $\abs{V}=n$ and an integer $k \in \mathbb{N}$, 
the predefined parameters $\parzero$, $\parone$, $\partwo$, $\parthree$ and the partition measure $\partition{\cdot}$,
\maxgp seeks to find a set of vertices $\OldSet \subseteq V$ where $|\OldSet| = k$, such that the partition measure $\partition{\OldSet}$ is maximized.

In particular, 
    let $\vecx\in\{-1,1\}^n$ be an indicator vector such that $x_i=1$ if and only if $i\in \OldSet$. 
    \maxgp is formulated as follows:
     \begin{equation}
    	\label{ip:max-graph-partition}
    \begin{aligned}
    \max_{\vecx} \quad 	& 
    \sum_{i<j} w(i,j) \, (\parzero + \parone x_i + \partwo x_j + \parthree x_i x_j),\\
    \quad \text{such that }	& \sum_i x_i = -n + 2k  \text{ and }\\ 
	& \vecx \in \{-1,1\}^n.
    \end{aligned}
    \end{equation}
The predefined parameters and partition measures are presented in Table~\ref{tab:case_distinctions}. 
\end{problem}

In this section, we generalize \maxgp to take into account refinements of
existing solutions. We call our generalized problem \emph{graph partitioning with $k$ refinements} (\gpkc) and define it formally below.
The \kdense and \maxcutkc from above are instances of \gpkc by specifying the partition measure $\partition{\cdot}$.

\begin{problem}[Graph partitioning with $k$ refinements (\gpkc)]
	\label{prob:graph-partition-local}
Given an undirected graph $G = (V, E, w)$ where $\abs{V} = n$, a subset $\OldSet \subseteq V$, and an integer $k \in \mathbb{N}$,
the predefined parameters $\parzero$, $\parone$, $\partwo$, $\parthree$ and the partition measure $\partition{\cdot}$,
\gpkc seeks to find a subset of vertices $\SelectSet \subseteq V$ with $\abs{\SelectSet} = k$, such that the partition measure $\partition{\OldSet \symm \SelectSet}$ is maximized.

In particular, let $\vecx^0 \in \{-1, 1\}^n$ be an indicator vector such that $x^0_i = 1$ if and only if $i \in \OldSet$ and 
    let $\vecx\in\{-1,1\}^n$ be an indicator vector such that $x_i=1$ if and only if $i\in \OldSet\symm \SelectSet$. 
    \gpkc is formulated as follows:
     \begin{equation}
    	\label{ip:densest-subgraph}
    \begin{aligned}
    \max_{\vecx} \quad 	& 
    \sum_{i<j} w(i,j) \, (\parzero + \parone x_i + \partwo x_j + \parthree x_i x_j),\\
    \quad \text{such that }	& 
    \sum_i x^0_i x_i = n-2k \text{ and } \\ & 
    \vecx \in \{-1,1\}^n.
    \end{aligned}
    \end{equation}
The predefined parameters and partition measures are presented in Table~\ref{tab:case_distinctions}. 
\end{problem}

Notice that when $\OldSet=\emptyset$ (and $\vecx^0=-\vecone$), the formulation of \gpkc becomes the same as  \maxgp. 
Taking into consideration the initial solution ($\OldSet \neq \emptyset$) changes the formulation of the constraint part. 
Hence, we mainly illustrate the constraint.  
We use the indicator vector $\vecx^0\in\{-1,1\}^{V}$ to encode the set $\OldSet$, i.e., we set $x^0_i = 1$ if $i\in \OldSet$ and $x^0_i = -1$ if $i\in V\setminus \OldSet$.  
We then set our solution~$\SelectSet$ to all $i$ such that
$x_i \neq x^0_i$, i.e., $\SelectSet=\{i\colon x_i\neq x^0_i\}$.
Observe that the constraint of the integer program implies that $\SelectSet$ contains exactly
$k$ vertices: we have $x^0_i x_i = 1$ if and only if $x^0_i = x_i$ and $x^0_i
x_i = -1$ if and only if $x^0_i \neq x_i$. 
Hence, the sum of all $x^0_i x_i$ is equal to $n-2\ell$, 
where $\ell=|\{ i \colon x_i \neq x^0_i\}|$ is the
number of entries in which $\vecx$ and $\vecx^0$ differ. Since in the constraint we
set $\sum_i x^0_ix_i=n-2k$, we get $|\SelectSet|=k$.

The objective functions of \gpkc are essentially the same as \maxgp, as well as the parameter settings.
Next, we illustrate the settings of \kdense and \maxcutkc. For \maxuncutkc and \vckc, we refer readers to \citet{han2002improved} for more details.
For \kdense, the term $(\frac{1}{4}+\frac{1}{4}x_i+\frac{1}{4}x_j+\frac{1}{4}x_ix_j)$ equals $1$ if $x_i=x_j=1$ and equals $0$ otherwise. 
Hence, the objective aims to maximize the
number of edges in $G[\OldSet\symm \SelectSet]$, i.e., we maximize ${w(E[\OldSet \symm \SelectSet])}$.
For \maxcutkc, the terms $(\frac{1}{2}-\frac{1}{2}x_ix_j) = 1$ if and only if $x_i \neq x_j$, 
i.e., when $i$ and $j$ are on different sides of the cut, otherwise $\frac{1}{2}-\frac{1}{2}x_ix_j = 0$. 
Thus, the objective function sums over all edges with
one endpoint in $\OldSet\symm \SelectSet$ 
and one endpoint in $\overline{\OldSet}\symm \SelectSet$, which is identical to $\cut{\OldSet \symm \SelectSet}$.

\subsection{SDP-based algorithm}
\label{sec:general:sdp}

Next, we present our \sdp-based algorithm. 
We will show that the \sdp-based algorithm, as applied by
\citet{DBLP:journals/algorithmica/FriezeJ97} to solve \ccmaxcut and by \citet{DBLP:journals/jal/feigel01} to solve \dks, can be adapted to solve our problem \gpkc.

\spara{The semidefinite program.}
We state the {\sdp} relaxation of the integer
program~\eqref{ip:densest-subgraph} as follows, where $\vecv_i \cdot \vecv_j$
denotes the inner product of $\vecv_i$ and $\vecv_j$:
\begin{equation} \label{sdp:densest-subgraph}
\begin{aligned}
	\max_{\vecv_0, \ldots, \vecv_n\in\mathbb{R}^n}~  & 
         \sum_{i < j} w(i,j) \, (\parzero + \parone \vecv_0 \cdot \vecv_i + \partwo \vecv_0 \cdot \vecv_j + \parthree \vecv_i \cdot \vecv_j),\\
    \text{such that }~ 	&
         \sum_i x_i^0 \vecv_i \cdot \vecv_0 = n - 2k, \\ & 
         \sum_{i,j} x_i^0 x_j^0 \vecv_i \cdot \vecv_j = (2k-n)^2, 
             					\text{ and } \\ & 
             					\vecv_i \in \mathcal{S}_{n}, \text{ for } i = 0, \ldots, n.
\end{aligned}
\end{equation}

In the \sdp, we replace each entry $x_i$ with the inner product $\vecv_i\cdot\vecv_0$ and each $x_ix_j$ with the inner product $\vecv_i \cdot \vecv_j$. 
All $\vecv_i$ are vectors from the unit sphere $\mathcal{S}_n$.
For technical reasons, we add the constraint $\smash{\sum_{i,j} x_i^0 x_j^0 \vecv_i \cdot \vecv_j} = (2k-n)^2$, which is a relaxation of the constraint $\smash{(\sum_{i,j} x_i^0 x_{i})^2} = (2k-n)^2$.
This constraint will help us to bound the variance of the random-rounding procedure.

\spara{The algorithm.}
Our algorithm for \gpkc consists of the following three steps:
\begin{enumerate}
\item We solve the {\sdp}~\eqref{sdp:densest-subgraph} to obtain a solution~$\vecv_0, \ldots, \vecv_n$. 
\item We obtain an indicator vector~$\overline{\vecx}\in\{-1,1\}^n$ (which we will later use to obtain~$\SelectSet$) as follows. 
We sample a unit vector $\vecr$ from $\mathcal{S}_n$ uniformly at random and obtain the indicator vector~$\overline{\vecx} \in\{-1,1\}^n$ by setting $\overline{x}_i=1$ if
$(\vecr \cdot \vecv_0)(\vecr \cdot \vecv_i) \geq 0$ and $\overline{x}_i=-1$ otherwise.
We call this step \emph{hyperplane rounding}.
\item As before, we set $\SelectSet=\{i \colon \overline{x}_i \neq x^0_i\}$ and define a temporary set $\TempSet = \{i \colon \overline{x}_i = 1\}$. 
Note that it is possible that $\abs{\SelectSet} \neq k$. 
In this case, we greedily modify $C$ to ensure that $\TempSet$ is a valid solution of \gpkc. We call this last step \emph{fixing $C$}.
\end{enumerate}

To boost the probability of success, 
the second and third steps of the algorithm are repeated
$\bigO(1/\varepsilon \lg(1/\varepsilon))$~times
and we return the solution with the largest objective function value.
The pseudocode of our concrete algorithms for \kdense and \maxcutkc are
presented in Appendices~\ref{appendix:densest-subgraph:sdp:pseudocode}
and~\ref{appendix:maxcutkc:sdp:pseudocode}, resp.

\spara{Our main results}. Before analyzing the \sdp-based algorithm, let us state our main results obtained through the \sdp-based algorithm for \dskc and \maxcutkc. 
For these problems, we provide a complete analysis of the constant factor approximation guarantees for $k=\Omega(n)$.
The specific value of~$k$ determines the constant factor approximation ratios, and we present them in Figure~\ref{fig:plots-approx-ratio}.
Note that, due to our hardness results from Corollary~\ref{cor:densest-subgraph-hardness}, this assumption is inevitable for \dskc if we wish to obtain $\bigO(1)$-approximation algorithms (assuming the Small Set Expansion Hypothesis~\citep{dblp:conf/stoc/raghavendras10}).
Moreover, the assumption $k=\Omega(n)$ is necessary for the analysis of the approximation factor, similar to previous work~\citep{DBLP:journals/jal/feigel01,DBLP:journals/algorithmica/FriezeJ97,han2002improved}.

\begin{theorem}
\label{thm:densest-subgraph-sdp}
Let $|\OldSet|=\Omega(n)$ and $k=\Omega(n)$ where $k \leq \min\{c|\OldSet|, n -
|\OldSet|\}$ for some constant $c\in(0,1)$. There exists an {\sdp}-based
randomized algorithm that runs in time $\bigOtilde(n^{3.5})$ and outputs a
$\bigO(1)$-approximate solution for \dskc with high probability.
\end{theorem}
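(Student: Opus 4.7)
\textbf{The plan} is to analyze the three-step algorithm already outlined in the text: solve the SDP~\eqref{sdp:densest-subgraph} with the \kdense parameter choice $\parzero=\parone=\partwo=\parthree=\tfrac{1}{4}$; apply hyperplane rounding to obtain an indicator $\overline{\vecx}$ and hence $\SelectSet$; and greedily repair $\SelectSet$ so that $\abs{\SelectSet}=k$. What the analysis directly establishes is an $\bigO(1)$-approximation for the \kdense objective $\abs{E[\OldSet\symm \SelectSet]}$; the final step is then a purely bookkeeping invocation of Lemma~\ref{lemma:k-densify-local-changes-connection}, whose hypotheses $k\leq c\abs{\OldSet}$ and $k\leq n-\abs{\OldSet}$ are exactly our assumptions, to translate this into an $\bigO(1)$-approximation for~\dskc. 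The SDP has $\bigO(n^2)$ variables and $\bigO(1)$ extra constraints, so standard interior-point methods solve it within the claimed $\bigOtilde(n^{3.5})$ time.

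\textbf{SDP validity and rounding.} Any integer solution $\vecx$ to~\eqref{ip:densest-subgraph} lifts to a feasible SDP point by setting $\vecv_i=x_i\vecv_0$ for an arbitrary fixed unit vector $\vecv_0$, so the SDP optimum $\mathrm{SDP}^{\star}$ upper-bounds $\max_{\abs{C}=k}\abs{E[\OldSet\symm C]}$. The standard hyperplane identities give $\Exp{\overline{x}_i}=1-\tfrac{2}{\pi}\arccos(\vecv_0\!\cdot\!\vecv_i)$ and, using $\sign(\vecr\!\cdot\!\vecv_0)^2=1$, $\Exp{\overline{x}_i\overline{x}_j}=1-\tfrac{2}{\pi}\arccos(\vecv_i\!\cdot\!\vecv_j)$. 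Following the Feige--Langberg rounding analysis for \dks~\citep{DBLP:journals/jal/feigel01}, the next step is to exhibit a constant $\gamma>0$ with
\[
\tfrac{1}{4}\bigl(1+\Exp{\overline{x}_i}+\Exp{\overline{x}_j}+\Exp{\overline{x}_i\overline{x}_j}\bigr) \;\geq\; \gamma\cdot \tfrac{1}{4}\bigl(1+\vecv_0\!\cdot\!\vecv_i+\vecv_0\!\cdot\!\vecv_j+\vecv_i\!\cdot\!\vecv_j\bigr)
\]
for all unit vectors. Summing over edges yields $\Exp{\abs{E[\OldSet\symm \SelectSet]}}\geq \gamma\cdot \mathrm{SDP}^{\star}$.

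\textbf{Size control and fix-up.} Since $\abs{\SelectSet}=(n-T)/2$ for $T:=\sum_i x_i^0\overline{x}_i$, I would bound the fluctuations of $T$. A brief calculation combining the two SDP constraints yields $\norm{\sum_i x_i^0\vecv_i-(n-2k)\vecv_0}^2=0$, i.e., $\sum_i x_i^0\vecv_i=(n-2k)\vecv_0$. Using this structural identity, the $\arccos$ expansion of $\Exp{\overline{x}_i\overline{x}_j}$, and the second SDP constraint $\sum_{i,j} x_i^0 x_j^0\,\vecv_i\!\cdot\!\vecv_j=(n-2k)^2$, a second-moment computation gives $\Var[T]=O(n)$; Chebyshev then yields $\abs{T-\Exp{T}}=O(\sqrt{n})$ with constant probability, and $\bigO\!\left(\tfrac{1}{\varepsilon}\log\tfrac{1}{\varepsilon}\right)$ independent repetitions boost this to high probability. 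The greedy fix-up flips entries of $\overline{\vecx}$ until $\abs{\SelectSet}=k$; since $k=\Omega(n)$, the averaged per-vertex edge contribution is $\Theta(\mathrm{SDP}^{\star}/n)$, and absorbing the fix-up loss into a revised constant $\gamma'>0$ proceeds as in the standard \sdp rounding of \dks and \ccmaxcut.

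\textbf{The main obstacle} will be the rounding inequality displayed above: $1-\tfrac{2}{\pi}\arccos(t)$ and $t$ are not pointwise comparable on $[-1,1]$, so the argument must combine all four terms and exploit the symmetric $\parzero=\parone=\partwo=\parthree=\tfrac{1}{4}$ coefficient pattern rather than arguing term-by-term. A secondary but non-trivial challenge will be the fix-up accounting: the $\arccos$ distortion means $\Exp{T}$ may differ from $n-2k$ by an $\Omega(n)$ additive term, so greedy repair may involve $\Omega(n)$ flips whose cumulative loss must still be bounded by a constant fraction of $\gamma\cdot \mathrm{SDP}^{\star}$ via the averaged per-vertex contribution and the assumption $k=\Omega(n)$.
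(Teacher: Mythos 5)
Your high-level architecture (solve the SDP with the $\parzero=\parone=\partwo=\parthree=\tfrac{1}{4}$ parametrization, hyperplane-round, greedily repair the size, then invoke Lemma~\ref{lemma:k-densify-local-changes-connection} to carry the \kdense ratio over to \dskc) matches the paper exactly, and your identification of the key rounding inequality is correct --- the paper establishes it as Lemma~\ref{lem:boundobj} specialized via Lemma~\ref{lem:beta} to get the ratio $\beta>0.796$. The gap is in your size-control step. You assert that the structural identity $\sum_i x_i^0\vecv_i=(n-2k)\vecv_0$ together with the second SDP constraint yields $\Var[T]=O(n)$ for $T=\sum_i x_i^0\overline{x}_i$, and you then apply Chebyshev. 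This is false. A single random hyperplane determines every coordinate $\overline{x}_i$ simultaneously, so the $\overline{x}_i$ are strongly and globally correlated, and the SDP constraints do not rule out $\Var[T]=\Omega(n^2)$. Concretely, take $|U|=n/2$, give every $i\in U$ the same vector $\vecw$ and every $i\notin U$ the same vector $\vecw'$ with $\vecw-\vecw'=\tfrac{2(n-2k)}{n}\vecv_0$; both SDP constraints are satisfied, yet $T=\tfrac{n}{2}\sign(\vecr\cdot\vecv_0)(\sign(\vecr\cdot\vecw)-\sign(\vecr\cdot\vecw'))\in\{-n,0,n\}$ and $\Var[T]=\Theta(n^2)$ for, e.g., $k=n/4$. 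This is not a minor technicality: the paper's own Section~\ref{sec:max-cut:sos} explicitly observes that after basic hyperplane rounding ``$\abs{\SelectSet}$ might be very far away from the desired size'' and that obtaining a variance bound of the type you want is precisely what requires the Lasserre/SoS machinery with decorrelation.

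What the paper actually does is avoid any variance bound for $T$. It forms a single compound random variable $Z=\SdpRatio+\gamma\eta\tfrac{n-\abs{\SelectSet}}{n-k}+\gamma\tfrac{\abs{\SelectSet}(2k-\abs{\SelectSet})}{n^2}$ (Lemma~\ref{lem:cut-bound-z}), lower-bounds $\Exp{Z}$ using the three first-moment estimates (Corollary~\ref{lem:dense-bound-size-3} for the objective, Lemmas~\ref{lem:dense-bound-size-1-short} and~\ref{lem:dense-bound-size-2-short} for $\Exp{\abs{\SelectSet}}$ and $\Exp{\abs{\SelectSet}(n-\abs{\SelectSet})}$), and then applies Markov's inequality (not Chebyshev) to the bounded nonnegative variable $1+\gamma\eta+\gamma-Z$ across $\Theta(n\log n)$ repetitions. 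A realization with $Z$ large gives, via Corollary~\ref{cor:lambda-tau}, a lower bound on $\SdpRatio$ as a function of $\mu=\abs{\SelectSet}/n$, which is then multiplied by the greedy fix-up factor $\kappa_\tau(\mu)=\tfrac{\tau^2}{\mu^2}$ from Lemma~\ref{lem:dense-move-all}, and the approximation guarantee is a max-min over $\gamma,\eta,\mu$. Your ``absorbing the fix-up loss into a revised constant $\gamma'$'' sweeps under the rug the fact that in any single round either the objective or $\mu$ can be bad; the $Z$-variable is the device that couples them so that a single Markov bound controls both at once. You would need to replace your Chebyshev step with this kind of combined argument for the proof to go through.
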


\begin{theorem}
\label{thm:max-cut-sdp}
	If $|\OldSet|=\Omega(n)$ and $k=\Omega(n)$,
	there exists an {\sdp}-based randomized algorithm that runs in time $\bigOtilde(n^{3.5})$ and returns a
	$\bigO(1)$-approximate solution for \maxcutkc with high probability.
\end{theorem}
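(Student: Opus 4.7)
The plan is to instantiate the three-step \sdp-based algorithm of Section~\ref{sec:general:sdp} with the max-cut parameter choice $(\parzero,\parone,\partwo,\parthree)=(\tfrac12,0,0,-\tfrac12)$ from Table~\ref{tab:case_distinctions}. First I would solve the relaxation~\eqref{sdp:densest-subgraph} to obtain unit vectors $\vecv_0,\vecv_1,\ldots,\vecv_n$ in time $\bigOtilde(n^{3.5})$ via standard interior-point methods; then draw $\vecr\in\S_n$ uniformly at random and round by setting $\overline{x}_i=+1$ whenever $(\vecr\cdot\vecv_0)(\vecr\cdot\vecv_i)\ge 0$ and $-1$ otherwise, producing the candidate $\SelectSet=\{i\colon \overline{x}_i\ne x^0_i\}$; finally repair $\SelectSet$ by greedy vertex toggles until $|\SelectSet|=k$. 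The rounding and repair are repeated $\bigO(\log n)$ times and the best cut is retained.

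For the objective value, the key observation is that $\overline{x}_i\overline{x}_j=\sign(\vecr\cdot\vecv_0)^2\sign(\vecr\cdot\vecv_i)\sign(\vecr\cdot\vecv_j)=\sign(\vecr\cdot\vecv_i)\sign(\vecr\cdot\vecv_j)$, so the $\vecv_0$ factor cancels in every pairwise product. Since the max-cut objective keeps only the $\parzero+\parthree\,\overline{x}_i\overline{x}_j=\tfrac12(1-\overline{x}_i\overline{x}_j)$ contribution per edge, its expectation equals $\arccos(\vecv_i\cdot\vecv_j)/\pi$, and the Goemans--Williamson bound yields
\[
\Exp{\cut{\OldSet\symm\SelectSet}}\ \ge\ \alpha_{GW}\cdot\sum_{i<j}w(i,j)\bigl(\tfrac12-\tfrac12\,\vecv_i\cdot\vecv_j\bigr)\ \ge\ \alpha_{GW}\cdot\OPT,
\]
with $\alpha_{GW}\approx 0.878$, so a single trial achieves a constant fraction of $\OPT$ with constant probability by the standard deficit argument.

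The second step is to show that $|\SelectSet|$ is close to $k$. Let $Z=\sum_i x^0_i\overline{x}_i$, so that $|\SelectSet|=(n-Z)/2$. From $\Exp{\overline{x}_i}=1-2\arccos(\vecv_0\cdot\vecv_i)/\pi$, the pointwise bound $|1-2\arccos(t)/\pi-t|\le 1$ on $[-1,1]$, and the first \sdp constraint $\sum_i x^0_i(\vecv_0\cdot\vecv_i)=n-2k$, I get $|\Exp{Z}-(n-2k)|=O(n)$. Likewise, $\Exp{\overline{x}_i\overline{x}_j}=1-2\arccos(\vecv_i\cdot\vecv_j)/\pi$ combined with the auxiliary constraint $\sum_{i,j}x^0_ix^0_j(\vecv_i\cdot\vecv_j)=(2k-n)^2$ gives $|\Exp{Z^2}-(2k-n)^2|=O(n^2)$, hence $\Var(Z)=O(n^2)$. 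Chebyshev's inequality then yields $||\SelectSet|-k|\le\varepsilon n$ with constant probability for any fixed $\varepsilon>0$; since the approximation-factor event of the previous paragraph also holds with constant probability, the two events co-occur with constant probability, and the $\bigO(\log n)$ repetitions amplify this to high probability.

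The final and hardest step is to bound the loss incurred by the greedy repair. At most $\varepsilon n$ vertex toggles are needed to drive $|\SelectSet|$ exactly to $k$, and each toggle changes $\cut{\OldSet\symm\SelectSet}$ by at most the weighted degree of the moved vertex. Following the amortization argument of~\citet{DBLP:journals/algorithmica/FriezeJ97} and~\citet{han2002improved}, one shows that a suitably ordered choice of the $\varepsilon n$ cheapest swaps contributes only an $O(\varepsilon)$ fraction of the total weighted degree of~$G$, which under the assumption $|\OldSet|,k=\Omega(n)$ is itself within a constant factor of $\OPT$. Choosing $\varepsilon$ a sufficiently small constant therefore preserves the $\bigO(1)$-approximation, and the total running time is dominated by the \sdp solve, yielding the claimed $\bigOtilde(n^{3.5})$ bound. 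The chief obstacle throughout is reconciling the oblivious hyperplane rounding with a \emph{linear} cardinality constraint: the variance bound crucially relies on the non-standard \sdp constraint $\sum_{i,j}x^0_ix^0_j(\vecv_i\cdot\vecv_j)=(2k-n)^2$, and the $k,|\OldSet|=\Omega(n)$ hypothesis is exactly what lets the additive loss from greedy fixing be charged against the multiplicative approximation on the main cut term.
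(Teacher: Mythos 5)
Your algorithm is the paper's algorithm, and your first two steps (the $\alpha_{GW}$-bound on the expected cut after hyperplane rounding, via Lemma~\ref{lem:random-projection-1} and Lemma~\ref{lem:alpha}) match Corollary~\ref{lem:cut-bound-obj}. The gap is in your third step, the claim that $\bigl||\SelectSet|-k\bigr|\leq\varepsilon n$ holds with constant probability for \emph{any} fixed $\varepsilon>0$. The bounds you invoke are vacuous: $|\Exp{Z}-(n-2k)|=O(n)$ and $\Var(Z)=O(n^2)$ hold trivially for any $\pm1$-valued rounding, and Chebyshev at scale $\varepsilon n$ with variance $O(n^2)$ gives a failure probability $O(1/\varepsilon^2)$, which is useless for small $\varepsilon$ --- yet your final step needs $\varepsilon$ to be an arbitrarily small constant so that the repair cost is an $O(\varepsilon)$ fraction of $\OPT$. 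Moreover, the needed concentration is not just unproven but unavailable at this level of the relaxation: the coordinates are rounded through a single shared Gaussian direction $\vecr$, so they remain strongly pairwise correlated, and the auxiliary constraint $\sum_{i,j}x^0_ix^0_j\,\vecv_i\cdot\vecv_j=(2k-n)^2$ only yields expectation bounds such as $\Exp{|\SelectSet|(n-|\SelectSet|)}\geq\alpha(nk-k^2)$ (Lemma~\ref{lem:dense-bound-size-2-short}); it does not force $|\SelectSet|$ near $k$. The paper states this explicitly in Section~\ref{sec:max-cut:sos}: after hyperplane rounding $|\SelectSet|$ "might be very far away" from $k$, and obtaining $|\SelectSet|\approx k$ up to $\delta^{1/48}n$ is precisely what requires the Lasserre/sum-of-squares machinery with global-correlation rounding, not the degree-2 \sdp. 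A secondary issue: even if both your "good cut" and "good size" events held with (small) constant probability, they need not co-occur; you would need a joint argument.

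The paper's proof avoids both problems by never asking $|\SelectSet|$ to be close to $k$. It lets $\mu=|\SelectSet|/n$ be arbitrary, bounds the loss of the greedy fixing as a function of $\mu$ (Lemma~\ref{lemma:cut-move-all}: a factor $\min\{\tau^2/\mu^2,(1-\tau)^2/(1-\mu)^2\}(1-\tfrac{1}{\Theta(n)})$, proved by the telescoping argument of Lemma~\ref{lem:cut-blackbox-deduction}), and couples the cut quality with the size terms in a \emph{single} bounded random variable $Z$ (Lemma~\ref{lem:cut-bound-z}, with auxiliary parameters $\gamma,\eta$), so that one Markov bound plus $N$ repetitions guarantees the combined quantity is large with high probability --- this is the Frieze--Jerrum/Feige--Langberg trick that replaces your "two events co-occur" step. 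Finally, the worst case over $\mu\in[0,1]$ of $\kappa_\tau(\mu)\lambda_\tau(\mu,\gamma,\eta)$ is optimized over $\gamma,\eta$ and shown to be a constant whenever $\tau=k/n=\Omega(1)$. To repair your write-up you would have to either adopt this coupled analysis or move to the SoS algorithm of Section~\ref{sec:max-cut:sos}, where the size concentration you want is actually established (Lemma~\ref{lem:bound-variance}), at the cost of a much larger running time than $\bigOtilde(n^{3.5})$.
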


We note that the running time is dominated by solving the semidefinite program with $\Theta(n)$ variables and $\Theta(n)$ constraints, and the time complexity is $\bigOtilde(n^{3.5})$~\citep{jiang2020faster}.

\subsection{Heuristic}
\label{sec:densest-subgraph:heuristic}
We note that the time complexity of the \sdp-based algorithms is dominated by
solving the \sdp. By the property of our formulation, the number of constraints
is $\bigO(n)$.  The time complexity of solving the SDP using the
state-of-the-art method is $\bigO(n^{3.5} polylog(1/\epsilon))$~\citep{jiang2020faster}, where $\epsilon$ is an
accuracy parameter.

As this time complexity is undesirable in practice, we also provide an efficient
heuristic for \gpkc.  Our heuristic computes a set~$\SelectSet$, 
by starting with $\SelectSet=\emptyset$
and greedily \emph{adding} $k$~vertices to~$\SelectSet$. 
More concretely, while $|\SelectSet|<k$, we greedily add vertices to $\SelectSet$ 
by computing the value of the objective function $\partition_u = \partition{\OldSet\symm (\SelectSet\cup\{u\}}$,
for each vertex $u\in V\setminus \SelectSet$,
and adding to $\SelectSet$ the vertex~$u^*$ that increases the objective
function the most.
The heuristic terminates when $|\SelectSet|=k$.

\section{Sum-of-squares algorithm}
\label{sec:max-cut:sos}

In this section, we present an optimal approximation algorithm for \maxcutkc under some regimes of $k=\Omega(n)$, assuming the Unique Games Conjecture~\citep{DBLP:conf/stoc/Khot02a}. 
Our main result is stated in Theorem~\ref{theorem:sos-main-theorem}.
Due to Corollary \ref{cor:max-cut-hardness}, the approximation algorithm from
our theorem is optimal for 
$\tau \in (0,0.365) \cup (0.635,1)$, assuming the Unique Games Conjecture~\citep{DBLP:conf/stoc/Khot02a}.
\begin{theorem}
\label{theorem:sos-main-theorem}
Let $\tau\in (0,1)$ be a constant, and consider \maxcutkc with $k=\tau \cdot n$ refinements. Then for every $\varepsilon>0$, there is a randomized polynomial-time algorithm that runs in time $n^{O(1/poly(\varepsilon,\tau))}$, and with high probability, outputs a solution that approximates the optimal solution within a factor of $\alpha_{*} - \varepsilon$, where the value of $\alpha_{*}$ is approximately $0.858$. 
\end{theorem}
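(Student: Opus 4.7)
My plan is to adapt the sum-of-squares (SoS) algorithm of \citet{DBLP:conf/soda/RaghavendraT12} for \ccmaxcut to our refinement setting. The starting observation is that \maxcutkc is structurally a Max-$2$-CSP with a single linear constraint on $\pm 1$ variables: the substitution $\bm{y}\in\{-1,1\}^n$ with $y_i := x_i^0 x_i$ turns the refinement constraint $\sum_i x_i^0 x_i = n-2k$ into the canonical cardinality constraint $\sum_i y_i = n-2k$, while each edge $(i,j)\in E$ contributes $\tfrac{1}{2}(1 - x_i^0 x_j^0\, y_i y_j)$ to the cut, i.e., a disagreement predicate when $x_i^0 x_j^0 = +1$ and an agreement predicate when $x_i^0 x_j^0 = -1$. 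Both predicates are symmetric binary Boolean CSPs, so the problem falls into the class handled by the SoS framework of \citet{DBLP:conf/soda/RaghavendraT12}, whose worst-case ratio is exactly $\alpha_{*}$.

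Concretely, I would first form the degree-$r$ SoS relaxation (with $r = \poly(1/\varepsilon, 1/\tau)$) over a pseudo-expectation $\tilde{\E}$ on $\bm{y}\in\{-1,1\}^n$, enforcing $y_i^2 = 1$, the cardinality constraint $\sum_i y_i = n-2k$, and its square. The optimum is an upper bound on $\OPT$. Next, I would apply global correlation rounding: sample a random set $S\subseteq V$ with $|S| = O_{\varepsilon,\tau}(1)$, draw an assignment for $\bm{y}|_S$ from the pseudo-marginals, and condition $\tilde{\E}$ on it. The standard information-theoretic argument shows that with high probability the conditional pseudo-distribution satisfies $\E_{i,j}[I(y_i;y_j)] \leq \varepsilon$, so independent rounding from the conditional marginals is $O(\varepsilon)$-close to a product distribution on all bilinear test functions.

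Two quantities then need to be controlled. For the objective, the per-edge analysis of \citet{DBLP:conf/soda/RaghavendraT12}, applied to a pair of vertices whose pseudo-marginals satisfy the local SoS constraints, gives that each edge is satisfied with probability at least $\alpha_{*} - O(\varepsilon)$ times its SoS contribution; summing over edges yields expected objective at least $(\alpha_{*}-O(\varepsilon))\cdot\OPT$. For the cardinality constraint, the conditional marginals still satisfy $\tilde{\E}[\sum_i y_i] = n-2k$ because conditioning preserves linear marginals, and by near-independence of the rounded coordinates, $\sum_i y_i$ is concentrated around $n-2k$ with deviation $O(\sqrt{n\log n})$ with high probability. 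I would then patch the rounded solution by greedily flipping $O(\sqrt{n\log n})$ coordinates so that the constraint holds with equality. Since $k=\tau n$ and $\OPT = \Omega(n)$ on every non-trivial instance, and each flip changes the cut by at most the maximum weighted degree, the total patching loss is $o(\OPT)$.

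The main obstacle, and the reason SoS is invoked rather than the SDP of Section~\ref{sec:general:sdp}, is that matching the sharp ratio $\alpha_{*}$ requires coupling the global linear constraint with each edge term in a way that the basic SDP cannot simulate: in Section~\ref{sec:general:sdp} the constraint is enforced by a greedy patching step done \emph{after} rounding, which already costs a constant factor. Conditional SoS rounding instead decouples variables globally while preserving the linear constraint inside the marginals. The subtle technical point I would have to verify is that the per-edge rounding of \citet{DBLP:conf/soda/RaghavendraT12}, stated for max-cut predicates under a cardinality constraint, extends to agreement predicates; this is expected because both predicates are symmetric under $y_i \mapsto -y_i$ and the rounding scheme respects this symmetry, so the same tight integrality-gap calculation should transfer without change.
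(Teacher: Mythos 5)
Your high-level plan matches the paper's: degree-$r$ Lasserre relaxation, decorrelation by conditioning, biased rounding, then patch the constraint. Your substitution $y_i := x_i^0 x_i$ is a genuinely different entry point. The paper keeps the variables $\vecx$ and modifies the \emph{constraint} ($\sum_j \Pr(x_j = x_j^0 \mid \cdot) = n-k$), so that every edge remains a plain disagreement predicate; you instead keep the standard cardinality constraint on $\vecy$ but the edges become a mix of agree/disagree predicates depending on $\sign(x_i^0 x_j^0)$. Your symmetry argument ($y_i \mapsto -y_i$ maps agree to disagree and flips the bias $\kappa_i$) is correct in spirit, so either route leads to the same per-edge $\alpha_*$ bound, but the paper's formulation avoids having to re-verify the per-edge rounding lemma for the agreement case.

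There are, however, two concrete gaps in your quantitative analysis of the patching step. First, your claimed concentration ``$\sum_i y_i$ is concentrated around $n-2k$ with deviation $O(\sqrt{n\log n})$'' is not what decorrelation gives you. Conditioning only controls the \emph{average} pairwise mutual information, $\E_{i,j}[I(y_i;y_j)] \leq \delta$, which (via a Pinsker-type bound $|\operatorname{Cov}(y_i,y_j)| = O(\sqrt{I(y_i;y_j)})$ and Jensen) yields $\Var[\sum_i y_i] = O(n^2\sqrt{\delta})$, not $O(n)$. This is exactly what the paper's Lemma~\ref{lem:varaince-bound} establishes ($\Var \leq C|V|^2\delta^{1/12}$), and by Chebyshev the deviation is $\Theta(n\cdot\poly(\delta))$, i.e., a \emph{constant} (but small) fraction of $n$. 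You would need uniform pairwise independence, not average low mutual information, to get $\sqrt{n}$-scale concentration.

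Second, given that the number of coordinates to flip is $\Theta(n \cdot \poly(\varepsilon))$ and not $o(n)$, the ``each flip loses at most the maximum weighted degree, so the total loss is $o(\OPT)$'' argument breaks down: in a weighted graph, moving a $\poly(\varepsilon)$-fraction of vertices each contributing up to $\Delta$ can cost $\Theta(n\poly(\varepsilon)\cdot\Delta)$, which is not automatically $o(\OPT)$. The fix (which the paper uses in the proof of Theorem~\ref{theorem:sos-main-theorem}) is an averaging argument: let $\zeta(i)$ be vertex $i$'s contribution to the cut, observe $\sum_{i\in C}\zeta(i) \leq 2\overline{\Rnd}$, and move the $|s|$ vertices with smallest $\zeta(i)$; this bounds the \emph{relative} loss by $|s|/|C| = O(\poly(\varepsilon))$ regardless of degree distribution. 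With these two corrections your argument goes through and delivers the $(1-O(\poly(\varepsilon)))\alpha_*$ ratio that the theorem claims.
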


Our algorithm relies on the Lasserre hierarchy formulation~\citep{DBLP:journals/siamjo/Lasserre02} of the problem along with the algorithmic ideas introduced by \citet{DBLP:conf/soda/RaghavendraT12} for solving \ccmaxcut. 
We show that their method can be applied in our setting when an initial partition is given.
In addition, we show that their approximation ratio still applies in our setting.
Due to the complexity of the entire algorithm, here we only give an overview of
our main ideas; we present the detailed algorithm and its analysis in Appendix~\ref{sec:sos_detailed}.

Let $\overline{\vecx}$ be the partition obtained by
applying the randomized algorithm introduced by \citet{DBLP:conf/soda/RaghavendraT12}.
Specifically, it is obtained by the rounding scheme on an uncorrelated Lasserre SDP solution, which we illustrate in the appendix. 
Let $\overline{\Rnd}$ be the cut value of the partition induced by $\overline{\vecx}$, and $\OPT$ be the optimal solution of \maxcutkc. 
We state the main result that we obtain from applying the method of~\citet{DBLP:conf/soda/RaghavendraT12} in Lemma~\ref{lem:bound-variance}.
We note that the specific algorithm and its analysis are an extension of this
method, with minor (though not trivial) changes. 

\begin{restatable}{lemma}{boundvariance}
\label{lem:bound-variance}
Let $\tau\in (0,1)$ be a constant, and $k=\tau \cdot n$. 
Let $\delta >0$ be a small constant.
There is a randomized polynomial-time algorithm that runs in time $n^{O(1/poly(\delta,\tau))}$, and with
high probability returns a solution $\overline{\vecx}$  such that 
$\overline{\Rnd} \geq 0.858 \cdot \OPT (1-\delta)^2 $  and such that 
$\left|\sum_{i \in V} x_i^0 \overline{x}_i - (n-2k) \right| \leq \delta^{1/48} |V|$.
\end{restatable}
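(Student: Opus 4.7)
The plan is to adapt the Lasserre-hierarchy framework of \citet{DBLP:conf/soda/RaghavendraT12} to the refinement setting. I would begin by writing the Lasserre SDP relaxation at level $t = O(1/\poly(\delta,\tau))$ for the integer program defining \maxcutkc, including the linear constraint $\sum_i x_i^0 x_i = n - 2k$. Since $\vecx^0 \in \{\pm 1\}^n$ is fixed input data, the substitution $y_i := x_i^0 x_i$ turns the problem into one in which the linear constraint has the standard form $\sum_i y_i = n - 2k$ and the objective becomes a sum of edge terms $\tfrac{1}{2}(1 - x_i^0 x_j^0\, y_i y_j)$; edges with $x_i^0 = x_j^0$ ask for $y_i \neq y_j$ and edges with $x_i^0 \neq x_j^0$ ask for $y_i = y_j$. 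After this substitution, the feasible region and structure are exactly those of a cardinality-constrained $2$-CSP, so the Raghavendra--Tan machinery applies.

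The core of the argument uses their conditioning step: starting from the Lasserre pseudo-distribution, one conditions on $O(1/\poly(\delta,\tau))$ carefully chosen variables to obtain a pseudo-distribution whose global correlation is at most $\delta^{O(1)}$, while preserving (in pseudo-expectation) both the objective value and the linear constraint. This step is essentially unchanged from \citet{DBLP:conf/soda/RaghavendraT12}, as it depends only on information-theoretic properties of the Lasserre hierarchy. I would then apply their rounding scheme, namely, hyperplane rounding against a random Gaussian on the local SDP vectors, to produce the candidate $\overline{\vecx}$.

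The two conclusions then follow from two complementary arguments. For the approximation guarantee, the Goemans--Williamson edge-by-edge analysis yields at least an $\alpha_{*} \approx 0.858$ fraction of the SDP objective whenever the induced pair distributions on edges are close to independent, which is precisely the content of the small-global-correlation condition; the two factors of $(1-\delta)$ absorb the loss from conditioning (the pseudo-expectation of the objective shifts by at most a $\delta$ fraction of $\OPT$) and from the mild distortion in the per-edge rounded contributions. For the near-feasibility claim, the expectation of $\sum_i x_i^0 \overline{x}_i$ under hyperplane rounding equals its SDP value, which is $n-2k$, and its variance $\sum_{i,j} x_i^0 x_j^0 \, \mathrm{Cov}(\overline{x}_i, \overline{x}_j)$ is controlled by the post-conditioning global correlation; Chebyshev's inequality then yields a deviation of at most $\delta^{1/48}|V|$ with constant probability, where the exponent $1/48$ reflects the chain from a global-correlation bound of order $\delta^{1/24}$ to its square-root appearance in the standard-deviation estimate. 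Repeating the rounding boosts success to high probability.

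The main technical obstacle I anticipate is checking that the Raghavendra--Tan conditioning-plus-rounding pipeline transfers cleanly despite the $\pm 1$ coefficients $x_i^0 x_j^0$ scattered through both the objective and the variance expression. In particular, their variance bound is usually stated for $\sum_{i,j} \mathrm{Cov}(\overline{x}_i, \overline{x}_j)$, and one must show that inserting arbitrary $\pm 1$ weights does not hurt the estimate; this should follow by applying the global-correlation argument directly to the reparameterized variables $y_i$, under which the weights collapse and the constraint is the usual linear one. Propagating the correct dependence on $\delta$ through all these bounds, in particular verifying the exponent $1/48$ and jointly conditioning so that both the approximation ratio and the constraint slack hold simultaneously, is where the bulk of the bookkeeping lies.
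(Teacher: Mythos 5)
Your overall pipeline is the same one the paper uses: a degree-$O(1/\poly(\delta,\tau))$ Lasserre relaxation that carries the refinement constraint, the Raghavendra--Tan conditioning step to drive the average mutual information down to $\delta$, biased hyperplane rounding with thresholds matching the SDP biases, a covariance-via-mutual-information bound plus Chebyshev for the constraint slack, and repetition of the rounding to get high probability. The paper additionally has to argue (via a Markov/union-bound argument, using a feasible product-distribution solution whose value is a constant fraction of the total edge weight) that a \emph{single} conditioning simultaneously loses only a $(1-\delta)$ factor of the objective and achieves low global correlation; you correctly flag this joint conditioning as the bookkeeping step, so that part is fine.

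The genuine gap is in the reparameterization $y_i := x_i^0 x_i$. It standardizes the constraint to $\sum_i y_i = n-2k$, but it de-standardizes the objective: every edge with $x_i^0 \neq x_j^0$ becomes an \emph{equality} (``uncut'') predicate $\tfrac{1}{2}(1+y_i y_j)$, so the $y$-problem is a cardinality-constrained mixture of cut and uncut constraints, not \ccmaxcut. The constant $0.858$ you invoke is not a generic Goemans--Williamson edge-by-edge fact (unbiased hyperplane rounding gives $\approx 0.878$); it is the outcome of the computer-assisted analysis by \citet{DBLP:conf/soda/RaghavendraT12} of their \emph{biased} threshold rounding for the cut predicate, and it does not automatically cover equality predicates under that biased rounding. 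So ``the Raghavendra--Tan machinery applies'' does not yet yield the claimed ratio for roughly half of the edges. The gap is fixable: for an uncut edge one can negate $(\vecv_j,\kappa_j)$, which flips the rounded value of $y_j$ (the threshold $\varPhi^{-1}\bigl(\tfrac{1-\kappa_j}{2}\bigr)$ negates as well) and maps the configuration to a cut edge with the same SDP contribution, so the worst-case per-edge ratio is unchanged --- but this symmetry argument must actually be made. The paper sidesteps the issue by \emph{not} substituting: it keeps the pure \maxcut objective in the original variables, so the $0.858$ per-edge analysis applies verbatim, and instead adapts the constraint side, defining the rounding bias as $\kappa_i=\vecv_i\cdot\vecv_0$ so that $\Exp{\overline{x}_i}=\kappa_i$ and $\Exp{\sum_i x_i^0\overline{x}_i}=n-2k$, and bounding $\Var\bigl[\sum_i x_i^0\overline{x}_i\bigr]\le C|V|^2\delta^{1/12}$ directly, since the covariance bound $O\bigl(\sqrt{I(x_i,x_j)}\bigr)$ is insensitive to the $\pm 1$ weights (your worry about the weights in the variance is indeed harmless, for exactly this reason).
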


We note that compared to our analysis from Section~\ref{sec:general-framework},
the main differences are as follows: The \sdp-relaxation in
Section~\ref{sec:general-framework} also obtains a solution with objective
function value approximately $0.858 \cdot \OPT$ after the hyperplane rounding.
However, after the hyperplane rounding, we might have that $\abs{\SelectSet}$ is
very far away from the desired size of $\abs{\SelectSet}=k$. This makes the
greedy fixing step (Step~3 in the algorithm) expensive and causes a relatively
larger loss in the approximation ratio. This is where the strength of the Lasserre
hierarchy comes in: Lemma~\ref{lem:bound-variance} guarantees that
$\left|\sum_{i \in V} x_i^0 \overline{x}_i - (n-2k) \right| \leq \delta^{1/48} |V|$, 
which implies that $\abs{C}$ is very close to the desired size of $\abs{C}=k$.
Thus, we have very little loss in the approximation ratio when greedily ensuring
the size of~$C$ and obtain a better result overall.

Next, we use Lemma~\ref{lem:bound-variance} to directly prove our main theorem.

\begin{proof}[Proof of Theorem~\ref{theorem:sos-main-theorem}]
Observe that $\overline{\vecx}$ does not necessarily satisfy the constraint on the number of refinements.
We aim to obtain a partition $\vecx$ such that $\sum_{i \in V} x_i^0 x_i = n-2k$. 

Let $s = \sum_{i \in V} x_i^0 \overline{x}_i - (n - 2k)$, and $C$ be the set of vertices that changed their partitions. Observe that $s = (n - 2\abs{C}) - (n - 2k) = 2(k - \abs{C})$.
This implies that we need to move $\frac{s}{2}$
vertices\footnote{We allow negative values of $s$  to mean that we are moving vertices from $C$ to $\overline{C}$.}
from $\overline{C}$ to $C$, where $\overline{C}$ is the complement set of $C$,
i.e., $\overline{C} = V \setminus C$, and
\begin{equation*}
 |s| \leq \delta^{1/48} |V|.
\end{equation*}

Recall that $\overline{\Rnd}$ denotes the cut value induced by $\overline{\vecx}$. For $i \in V$ 
let us denote with $\zeta(i)$ the value that node $i$ contributes to the cut. We have that 
\begin{equation} \label{eq:s_eq}
	\sum_{i\in C} \zeta(i) \leq \sum_{i \in V} \zeta(i) \leq 2 \cdot \overline{\Rnd},
\end{equation}
and also
\begin{equation}\label{eq:sbar_eq}
	\sum_{i\in \overline{C}} \zeta(i) \leq \sum_{i \in V} \zeta(i) \leq 2 \cdot \overline{\Rnd}.
\end{equation}
In case we need to move $|s|$ vertices from $C$ to $\overline{C}$, we move vertices $i \in C$ with the smallest values of $\zeta(i)$. 
Let us use $\left \{x_i\right\}_{i \in V}$ to denote the final assignment obtained by moving $\frac{s}{2}$ vertices from $\overline{C}$  to
$C$, and let us use $\Rnd$ to denote the value of the cut under this assignment.
In this case due to \eqref{eq:s_eq} we have that 
\begin{equation*}
	\Rnd \geq \overline{\Rnd}\cdot (1-|s|/|C|),
\end{equation*}
Similarly, if we need to move $|s|$  vertices from $\overline{C}$ to $C$, we move vertices $i \in \overline{C}$ with the
smallest values of $\zeta(i)$. 
In this case due to \eqref{eq:sbar_eq} we have that 
\begin{equation*}
	\Rnd \geq \overline{\Rnd}\cdot (1-|s|/|\overline{C}|),
\end{equation*}
By the previous discussion, we have that 
\begin{equation*}
	\Rnd \geq \overline{\Rnd}\cdot \left(1-|s|/\min(|C|,|\overline{C}|)\right).
\end{equation*}
Now, since $k = \tau n$ for constant $\tau \in (0, 1)$, we can choose $\delta$
sufficiently small so that $\abs{V} - (2 \delta^{1/96} +  \delta^{1/48}) \abs{V}
\geq k \geq (2 \delta^{1/96} +  \delta^{1/48}) \abs{V}$.
For such chosen $\delta$ we have 
\begin{equation*}
	\Rnd \geq \overline{\Rnd}\cdot \left(1-\frac{\delta^{1/48}n  }{ 2\delta^{1/96} n + \delta^{1/48}n - \delta^{1/48} n}\right) \geq 
	\overline{\Rnd}\cdot (1-\delta^{1/96}).
\end{equation*}
Since $\overline{\Rnd} \geq 0.858 \cdot \OPT (1-\delta)^2$, we have that
\begin{equation*}
	\Rnd \geq (1-\delta^{1/96}) \cdot (1-\delta)^2  \cdot 0.858 \cdot \OPT.
\end{equation*}
This shows that we have an $(1-3\delta^{1/96})\cdot 0.858$-approximation
algorithm. By setting $\delta=(\varepsilon/3)^{96}$, we indeed obtain our
desired $(1-\varepsilon) 0.858$-approximation algorithm.
\end{proof}

\section{Experimental evaluation}
\label{sec:experiments}
We evaluate our approximation algorithms and heuristics on a collection of real-world and synthetic datasets. 
Our implementation is publicly available.\footnote{Our implementation is accessible through \url{https://github.com/SijingTu/2023-OPTRefinement-code/}}
In this section, we only report our findings for \dskc, 
while the results for \maxcutkc are in Appendix~\ref{sec:exp:cut}.
We aim to answer the following questions: 
\begin{description}
\item[{\rm{\bf RQ1}:}] Do our algorithms increase the density of the ground-truth subgraph?
\item[{\rm{\bf RQ2}:}] Do our algorithms restore a good solution after removing some nodes from the ground-truth subgraph?
\item[{\rm{\bf RQ3}:}] What are the differences between the nodes identified by different algorithms?
\item[{{\rm{\bf RQ4}:}}] {Are our algorithms scalable?}
\item[{\rm{\bf RQ5}:}] Which algorithm has the best performance?
\end{description}

\spara{Datasets.}
We use three types of datasets: 
(1)~Wikipedia politician page networks where the nodes represent 
politicians, who are labeled with their political parties, 
and the edges represent whether one politician appears in the Wikipedia page of
another politician.
(2)~Networks labeled with ground-truth communities collected from
SNAP~\citep{snapnets}.
(3)~Synthetic networks generated using the stochastic block model (SBM). 

As for the \emph{ground-truth subgraphs}, let $G[U]$ be the ground-truth
subgraph induced by the set of vertices $U$; we select $U$ in different ways
depending on the datasets.  For the real-world datasets, we select $U$ according
to the nodes' labels: specifically, for Wikipedia networks, we select all the
nodes from one political party as $U$, and for SNAP networks, we select the
largest community as $U$.  As for the synthetic networks generated by SBM, we
always set one of the planted communities as $U$.  

The statistics of the datasets and the properties of the ground-truth subgraphs
are {presented in both Table~\ref{tab:not-move-out-10} and Table~\ref{tab:appendix:move-out-10}}. 
More detailed descriptions,
including how we set the parameters to generate our random graphs, can be found
in Appendix~\ref{sec:add-exp:data}.

\begin{table*}[t]
	\centering
	\caption{
	{\small 
	Network statistics and average relative increase {in} density with respect to the ground-truth subgraphs.
	Here, $n$ and $m$ {represent} the number of nodes and edges in the graph;
  $n_0$ and $\rho_0$ {denote} the number of nodes and density of the ground-truth subgraph;
  $n^*$ and $\rho^*$ {indicate} the number of nodes and density of the densest subgraph.
    An algorithm not terminating within 2 hours is denoted by \_\,;
	{for \denseSQD, it also indicates that the algorithm does not output a result, }
	as no $\sigma$-quasi-elimination order exists. 
	We set $k = 10\% \, n_0$.}}
	\label{tab:not-move-out-10}
	\resizebox{\textwidth}{!}{%
		\begin{tabular}{@{}rrrrrrrrrrrrrrrrrrrrr}
    \toprule
  \multirow{2}{*}{\textsf{dataset}} & 
  \multicolumn{6}{c}{Network Statistics} &
  \multicolumn{3}{c}{Algorithms} & 
  \multicolumn{2}{c}{Blackbox Methods}&
  \multicolumn{2}{c}{Baselines} \\
  \cmidrule(lr){2-7}
  \cmidrule(lr){8-10} 
  \cmidrule(lr){11-12} 
  \cmidrule(lr){13-14}
& $n$ & $m$ & $n_0$ & $\rho_0$  & $n^*$ & $\rho^*$ & \denseGreedy & \denseSDPalgo & \denseSQD & \densePeelMerge & \denseSDPMerge & \denserandom \\
\midrule
\balanced & 1\,000 & 74\,940 & 250 & 37.39 & 1\,000 & 74.94 & \rewrite{\textbf{0.007}} & \rewrite{\textbf{0.004}} & \rewrite{0.003} & \rewrite{\emph{0.003}} & \rewrite{\emph{0.003}} & \rewrite{-0.043}\\
\dense & 1\,000    & 81\,531 & 250 & 99.68 & 250 & 99.68 & \textbf{-0.057} & \textbf{-0.057} & \_ & \textbf{-0.057} & \textbf{-0.057} & \rewrite{-0.078}\\
\sparse & 1\,000   & 81\,192 & 250 & 24.93 & 250 & 100.10 & \rewrite{0.059} & \rewrite{\textbf{0.065}} & \rewrite{0.036} & \rewrite{\textbf{0.065} }& \rewrite{\textbf{0.065}} & \rewrite{-0.021}\\
\midrule
\es & 205 & 372 & 128 & 1.56 & 22 & 3.09 & \textbf{0.202} & 0.193 & 0.170 & \emph{0.193} & \emph{0.193} & \rewrite{-0.047}\\
\de & 768 & 3\,059 & 445 & 3.28 & 129 & 7.80 & 0.270 & \textbf{0.277} & 0.254 & \textbf{0.277} & \textbf{0.277} & \rewrite{-0.060}\\
\gb & 2\,168 & 18\,617 & 1\,160 & 6.55 & 234 & 17.85 & \textbf{0.321} & \textbf{0.321} & 0.290 & \textbf{0.321} & \textbf{0.321} & \rewrite{-0.072}\\ 
\us & 3\,912 & 18\,359 & 2\,014 & 2.97 & 303 & 11.37 & \textbf{0.595} & \textbf{0.595} & 0.567 & \rewrite{0.594} & \textbf{0.595} & \rewrite{-0.049}\\
\midrule
\dblp    & 317\,080    & 1\,049\,866  & 7\,556 & 2.40  & 115 & 56.57  & 0.436 & \_ & \emph{0.849} & \rewrite{\textbf{1.151}} & \_ & -0.084\\
\amazon  & 334\,863    & 925\,872     & 328 & 2.07          & 97 & 4.80    & \emph{0.075} & \_ & \_ & \rewrite{\textbf{0.079}} & \_ & -0.089\\
\youtube & 1\,134\,890 & 2\,987\,624  & 2\,217 & 4.09     & 1\,959 & 45.60  & 
\textbf{1.586} & \_ & \_ & \rewrite{\textbf{1.586}} & \_ & -0.088\\
\bottomrule
\end{tabular}
	}
\end{table*}

\begin{table*}[t]
	\centering
	\caption{
	\small{Network statistics and average relative increase {in} density with respect to the ground-truth subgraph with $k= 10\%\, n_0$ random nodes moved out. 
	Here, $n$ and $m$ {represent} the number of nodes and edges in the graph;
  $n_0$ and $\rho_0$ {denote} the number of nodes and density of the ground-truth subgraph;
  $n^*$ and $\rho^*$ {indicate} the number of nodes and density of the densest subgraph.
	An underscore (\_) {denotes} that an algorithm does not finish within 2 hours;
	{for \denseSQD, it also indicates that the algorithm does not output a result, }
	as no $\sigma$-quasi-elimination order exists. 
	We set $\sigma = 5$.
	{\denseinit denotes the solution containing the $k$ nodes removed from the ground-truth subgraph.}}
	}
	\label{tab:appendix:move-out-10}
	\resizebox{\textwidth}{!}{
\begin{tabular}{@{}rrrrrrrrrrrrrrrrrrrrr}
    \toprule
  \multirow{2}{*}{\textsf{dataset}} & 
  \multicolumn{6}{c}{Network Statistics} &
  \multicolumn{3}{c}{Algorithms} & 
  \multicolumn{2}{c}{Blackbox Methods}&
  \multicolumn{2}{c}{Baselines} \\
  \cmidrule(lr){2-7}
  \cmidrule(lr){8-10} 
  \cmidrule(lr){11-12} 
  \cmidrule(lr){13-14}
& $n$ & $m$ & $n_0$ & $\rho_0$  & $n^*$ & $\rho^*$ & \denseGreedy & \denseSDPalgo & \denseSQD & \densePeelMerge & \denseSDPMerge & \denserandom & \denseinit \\
\midrule
\balanced & 1\,000 & 74\,940 & 250 & 37.39 & 1\,000 & 74.94 & \rewrite{\textbf{0.110}} & \rewrite{\textbf{0.110}} & \_ & \rewrite{0.102} & \rewrite{\textbf{0.110}} & -0.045 & \rewrite{\textbf{0.110}}\\
\dense & 1\,000    & 81\,531 & 250 & 99.68 & 250 & 99.68 
& \textbf{0.112} & \textbf{0.112} & \textbf{0.112} & \textbf{0.112} & \textbf{0.112} & \rewrite{-0.078} & \textbf{0.112}\\
\sparse & 1\,000   & 81\,192 & 250 & 24.93 & 250 & 100.10 & \rewrite{\textbf{0.112}} & \rewrite{0.109} & \rewrite{\_} & \rewrite{0.082} & \rewrite{\emph{0.111}} & \rewrite{-0.015} & 0.111\\
\midrule
\es & 205 & 372 & 128 & 1.56 & 22 & 3.09 & \rewrite{\textbf{0.424}} & \rewrite{0.420} & \rewrite{0.389} & \rewrite{0.408} & \rewrite{\emph{0.420}} & \rewrite{-0.069} & \rewrite{0.200}\\
\de & 768 & 3\,059 & 445 & 3.28 & 129 & 7.80 & \rewrite{0.378} & \rewrite{\textbf{0.380}} & \rewrite{0.358} & \rewrite{0.379} & \rewrite{\textbf{0.380}} & \rewrite{-0.084} & \rewrite{0.098}\\
\gb & 2\,168 & 18\,617 & 1\,160 & 6.55 & 234 & 17.85 & \rewrite{\textbf{0.421}} & \rewrite{\textbf{0.421}} & \rewrite{0.370} & \rewrite{\textbf{0.421}} & \rewrite{\textbf{0.421}} & \rewrite{-0.085} & \rewrite{0.084}\\
\us & 3\,912 & 18\,359 & 2\,014 & 2.97 & 303 & 11.37 & \rewrite{\textbf{0.775}} & \rewrite{0.774} & \rewrite{0.735} & \rewrite{\textbf{0.775}} & \rewrite{\textbf{0.775}} & \rewrite{-0.035} & \rewrite{0.125}\\
\midrule
\dblp       & 317\,080   & 1\,049\,866 & 7\,556 & 2.40  & 115 & 56.57 & \rewrite{0.542} & \_ & \rewrite{\emph{1.046}} & \rewrite{\textbf{1.417}} & \_ & \rewrite{-0.092} & \rewrite{0.109}\\
\amazon  & 334\,863    & 925\,872 & 328 & 2.07  & 97 & 4.80 & \rewrite{\textbf{0.160}} & \_ & \_ & \rewrite{\emph{0.134}} & \_ & \rewrite{-0.097} & \rewrite{0.098}\\
\youtube& 1\,134\,890 & 2\,987\,624 & 2\,217 & 4.09 & 1\,959 & 45.60  & \rewrite{\emph{1.893}} & \_ & \_ & \rewrite{\textbf{1.894}} & \_ & \rewrite{-0.096} & \rewrite{0.135}\\
\bottomrule
\end{tabular}

	}
\end{table*}

\spara{Algorithms.} 
We evaluate six algorithms for solving \dskc.
We use \denseSDPalgo from Theorem~\ref{thm:densest-subgraph-sdp},
\denseGreedy from Section~\ref{sec:densest-subgraph:heuristic}, 
and two methods based on the black-box reduction from Theorem~\ref{thm:densest-subgraph-reduction}. 
In the black-box reduction, we use two solvers for \dks:  
the peeling algorithm of~\citet{asahiro2000greedily}
and the {\sdp} algorithm of~\citet{DBLP:journals/jal/feigel01},
which we denote as \densePeelMerge and \denseSDPMerge, respectively.
We also compare our methods with the \denseSQD ($\sigma$\emph{-quasi-densify}) 
algorithm~\citep{matakos2022strengthening}, which solves \kdense, where we set $\sigma =5$ 
as in the original paper.
We also consider a baseline that picks $k$ random nodes~(\denserandom);
for \denserandom, we repeat the procedure $5$ times and compute the average results. 

\spara{Evaluation.} 
We evaluate all methods with respect to the relative increase of density,
where we compare to the density of the initial subgraph~$U$.
Formally, assuming that a method $\algo$ returns a subgraph $U_{\!\algo}$,
the \emph{relative increase of density} with respect to the density of $U$ is ${(d(U_{\!\algo}) - d(U))}/{d(U)}$. 
We say that method $\algo$ performs $X$ times better than method $\algoB$, 
if the relative increase of density 
from the subgraph returned by $\algo$ is $X$ times larger than $\algoB$.  

\spara{Initialization using ground-truth subgraphs.}
{We begin by examining} whether our algorithms indeed increase the density of a given subgraph. 
{For this purpose}, we initialize~$U$ as a ground-truth subgraph from our dataset and apply the algorithms directly to the instance $(G, U, k)$. 
The relative increases {in} densities {compared to} $d(U)$ are presented in Table~\ref{tab:not-move-out-10} and Figure~\ref{fig:density-direct-ratio}.

{As shown in Table~\ref{tab:not-move-out-10} and Figure~\ref{fig:density-direct-ratio}, 
all algorithms increase the density of the ground-truth subgraph in most datasets. } 
Specifically, in Table~\ref{fig:density-direct-ratio}, on the datasets with larger ground-truth subgraphs, 
i.e., \us, \dblp {and \youtube}, the relative increase of densities surpasses $40\%$. 
The only outlier in which the algorithms decrease the density is \dense; this is because the ground truth 
subgraph is already the densest subgraph.
These findings indicate that our algorithms are capable of identifying denser subgraphs over the input
instances, addressing {\bf RQ1}.

\begin{figure}[t]
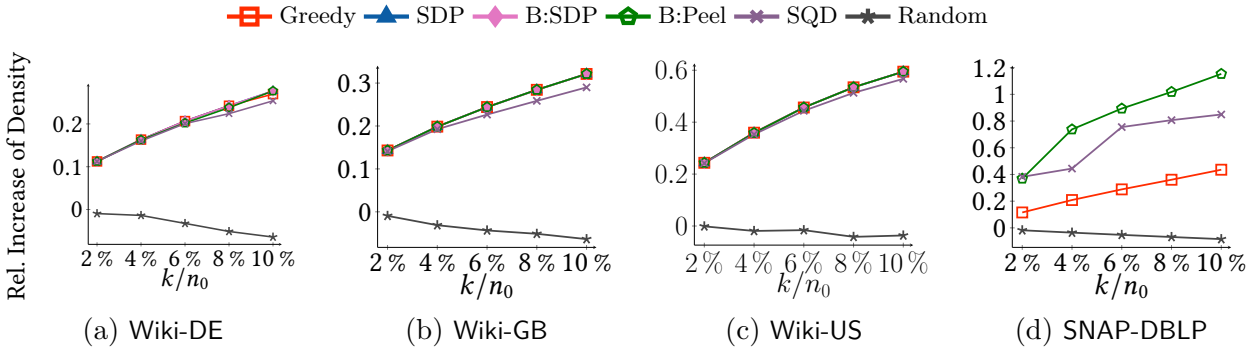

    \centering
 	\inputtikz{ds_plots/legend_not_move_out} \\
	\begin{tabular}{cccc}
		\resizebox{0.23\columnwidth}{!}{%
			\inputtikz{ds_plots/ratio_wiki_gb_results_not_move_out_ratio}
		}&
		\resizebox{0.23\columnwidth}{!}{%
			\inputtikz{ds_plots/ratio_wiki_us_results_not_move_out_ratio}
		}&
		\resizebox{0.23\columnwidth}{!}{%
			\inputtikz{ds_plots/ratio_dblp_results_not_move_out_ratio}
		}&
		\resizebox{0.23\columnwidth}{!}{%
			\inputtikz{ds_plots/ratio_youtube_results_not_move_out_ratio}
		}
		\\
		{(a)~{\gb}} &
		{(b)~{\us}} &
		{(c)~{\dblp}} &
		{(d)~{\youtube}} \\
	\end{tabular}
	\caption{Relative increase of density for varying values of~$k$.
	We initialized $U$ as a ground-truth subgraph.
    No results of \SDPalgo and \denseSDPMerge are reported on \dblp {and \youtube} due to the prohibitive time complexity.}
	\label{fig:density-direct-ratio}
\end{figure}

\begin{figure}[t]
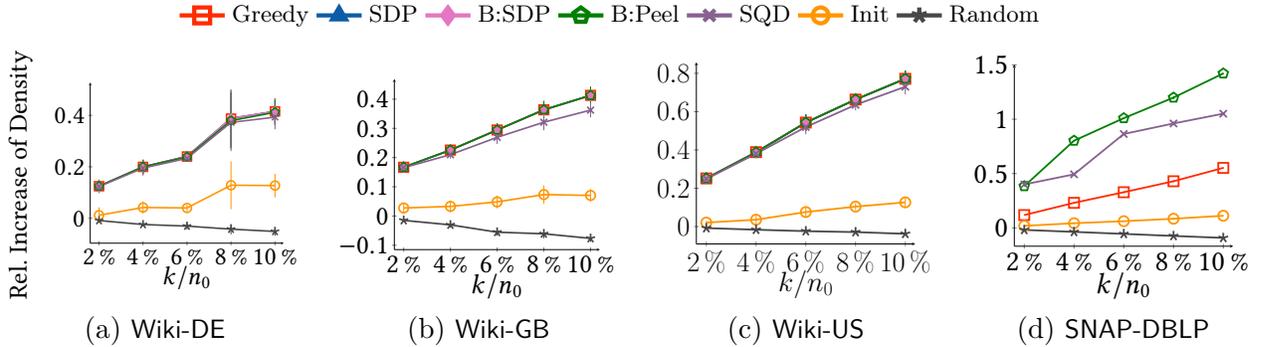

\centering 	
 \inputtikz{ds_plots/legend_move_out} \\
	\begin{tabular}{cccc}
		\resizebox{0.23\columnwidth}{!}{%
		    \inputtikz{ds_plots/ratio_wiki_gb_results_move_out_ratio}
		}&
		\resizebox{0.23\columnwidth}{!}{%
			\inputtikz{ds_plots/ratio_wiki_us_results_move_out_ratio}
		}&
		\resizebox{0.23\columnwidth}{!}{%
			\inputtikz{ds_plots/ratio_dblp_results_move_out_ratio}
		}&
		\resizebox{0.23\columnwidth}{!}{%
			\inputtikz{ds_plots/ratio_youtube_results_move_out_ratio}
		}
		\\
		{(a)~{\gb}} &
		{(b)~{\us}} &	
		{(c)~{\dblp}} &
		{(d)~{\youtube}} \\
	\end{tabular}
	\caption{Relative increase of density for varying values of~$k$.
	   We initialized $U$ as a {ground-truth subgraph with $k$~nodes removed uniformly at random}.
    No results of \SDPalgo and \denseSDPMerge are reported on \dblp {and \youtube} due to the prohibitive time complexity.}
	\label{fig:density-out-select-ratio}
\end{figure}

\spara{Initialization via ground-truth subgraphs with $k$ nodes removed.}  
Next, we study whether our algorithms can uncover subgraphs of
	densities equal to or greater than those of the ground-truth subgraphs
	after they were slightly perturbed.
{For this purpose}, we initialize $\hat{U}$ as a ground-truth subgraph $U$ with $k$ random nodes removed. 
We then apply the algorithms to the instance $(G, \hat{U}, k)$. 
We let \denseinit denote the solution containing the $k$ nodes that are removed to obtain $\hat{U}$. 
Our results are presented in {Table~\ref{tab:appendix:move-out-10}} and Figure~\ref{fig:density-out-select-ratio}, which show the average relative increase in density with respect to $d(\hat{U})$ across 5 initializations.
{In Figure~\ref{fig:density-out-select-ratio} we also show} the corresponding standard deviations.

{Table~\ref{tab:appendix:move-out-10}} and Figure~\ref{fig:density-out-select-ratio} show that, across various datasets, 
all algorithms consistently find subgraphs that are denser than \denseinit for all values of $k$. 
Additionally, the difference in density between the algorithms and \denseinit increases 
as more nodes are removed (i.e., larger $k$). 
Our observation suggests that our algorithms are as good, if not better, 
at finding subgraphs with comparable densities, and this  
observation addresses {\bf RQ2}.

Next, study whether our algorithms can reconstruct ground-truth subgraphs.
We again start with the ground-truth subgraphs and randomly remove $k$ nodes. 
Then, we run our algorithms to see if they can identify the removed $k$ nodes 
or if they select different nodes that were not part of the ground-truth subgraph.
The results are reported in Figure~\ref{fig:similarity-country}, which illustrates the 
Jaccard similarity among the nodes selected by the algorithms and those identified by \denseinit. 
The Jaccard similarity between two sets $A$ and $B$ is defined as $J(A, B) = \frac{|A \cap B|}{|A \cup B|}$.

Figure~\ref{fig:similarity-country} shows distinct behaviors of the algorithms for 
both the real-world and the synthetic datasets. 
We start by analyzing the results on the Wikipedia datasets. 
We notice that these datasets contain sparse ($\rho_0$ is around half of $\rho^*$) 
and large (around half of the nodes) ground-truth subgraphs, and many nodes within the ground-truth subgraph 
maintain more connections to the nodes outside of this subgraph.
This structure leads all the algorithms to select different nodes than \denseinit.
Moreover, the nodes selected by \denseSDPalgo, \denseSDPMerge, and \densePeelMerge 
have a high degree of similarity in all three datasets, whereas \denseGreedy demonstrates a distinct pattern in \de. 
This distinction can be attributed to the ``peeling'' routine inherent in the prior three algorithms, 
which iteratively \emph{removes} nodes with the lowest degree starting from the
graph from the intermediate step.
On the other hand, \denseGreedy iteratively \emph{adds} nodes with the highest
degrees.
The different procedures thus result in very different node selections in some datasets.  

Next, we analyze the results on the synthetic datasets.
Recall that in \sparse, despite the ground-truth subgraph being sparse, 
each node within this subgraph maintains more connections internally than with
nodes outside (in expectation). 
Under this construction, \denseGreedy (which does not contain the ``peeling'' routine) and 
\densePeelMerge (which highly relies on the ``peeling'' routine) exhibit
significant performance differences. 
\denseGreedy exhibits high similarity to \denseinit, indicating that it finds the same set of $k$ removed nodes. 
However, \densePeelMerge finds a different set of nodes compared to \denseinit. 
Combined with its superior performance over \denseGreedy on the \dblp dataset in 
Figure~\ref{fig:density-direct-ratio} and Figure~\ref{fig:density-out-select-ratio}, 
this behavior suggests that \densePeelMerge tends to find globally denser structures 
when the ground-truth subgraph is much sparser compared to the densest subgraph.
On the other hand, \denseGreedy effectively uncovers locally dense structures, 
albeit at the expense of overlooking globally denser structures that are less interconnected 
within the local subgraph.

We now look into \dense. All algorithms show high similarity to \denseinit,
which can be explained as the ground-truth subgraph is the densest subgraph and is much denser than other subgraphs.
Hence, any sufficiently good algorithm should recover this ground-truth subgraph.
Lastly, we look into \balanced, where only \densePeelMerge shows a different
selection compared to \denseinit, further implying its ``non-local'' property. 

We remark that the performance of \denseSDPalgo and \denseSDPMerge is between
\densePeelMerge and \denseGreedy. 
Specifically, they find the removed nodes for \balanced, 
and find nodes that belong to a denser subgraph for \sparse. 
As \denseSQD's choice is generally different from all the other algorithms, 
we present a more detailed analysis of \denseSQD in Appendix~\ref{sec:add-exp:sqd}. 

As anticipated, the algorithms exhibit varied node selection behaviors across different datasets. 
\denseGreedy favors nodes with stronger connections to the initial subgraph.
In contrast, \densePeelMerge is inclined to identify nodes with globally higher connectivity. 
\denseSDPMerge and \denseSDPalgo offer a balance between these frameworks, addressing {\bf RQ3}.

\begin{figure}[t!]
	\centering 
    \begin{tabular}{ccc}
		\includegraphics[width=0.24\columnwidth]{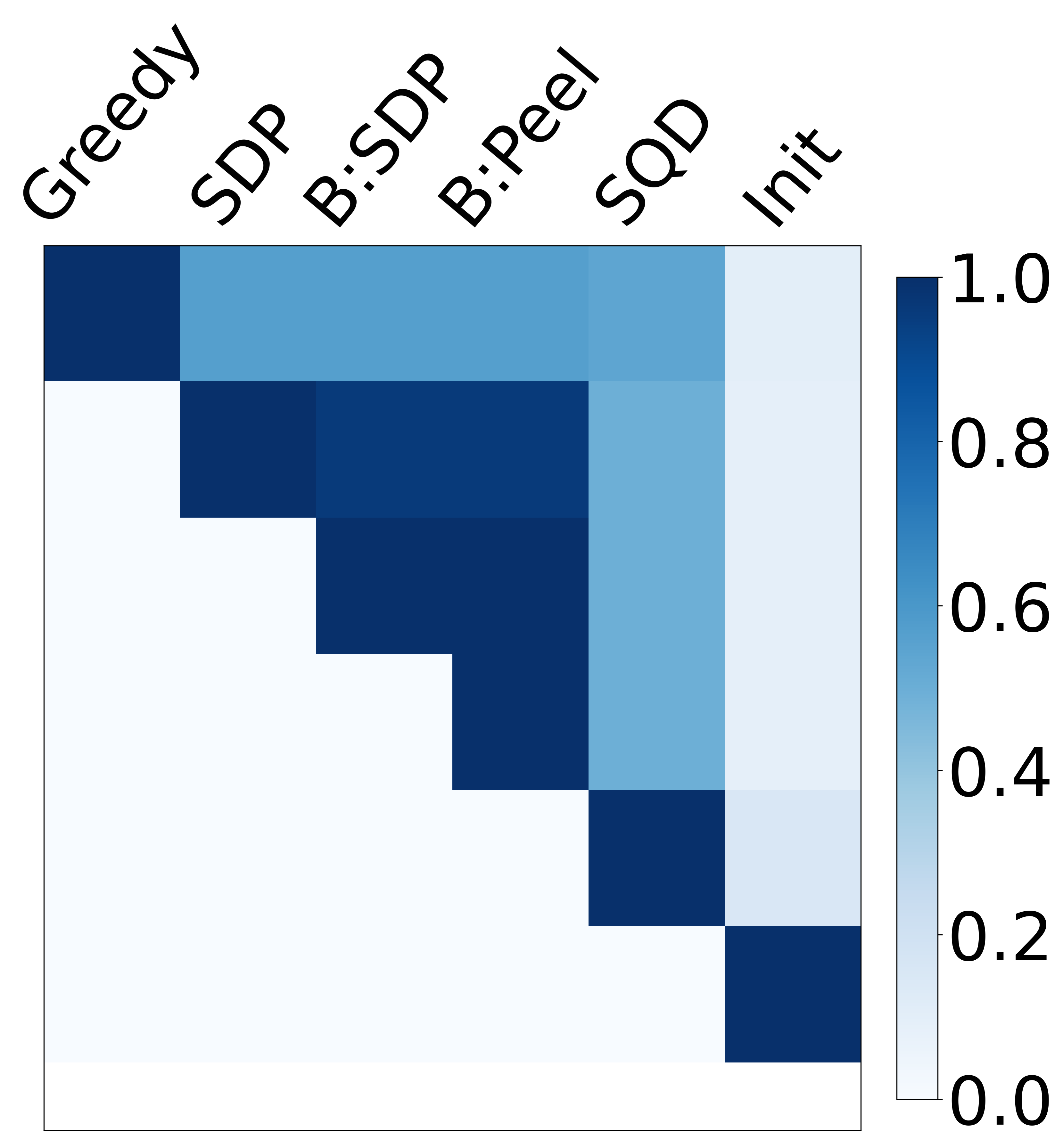}&
		\includegraphics[width=0.24\columnwidth]{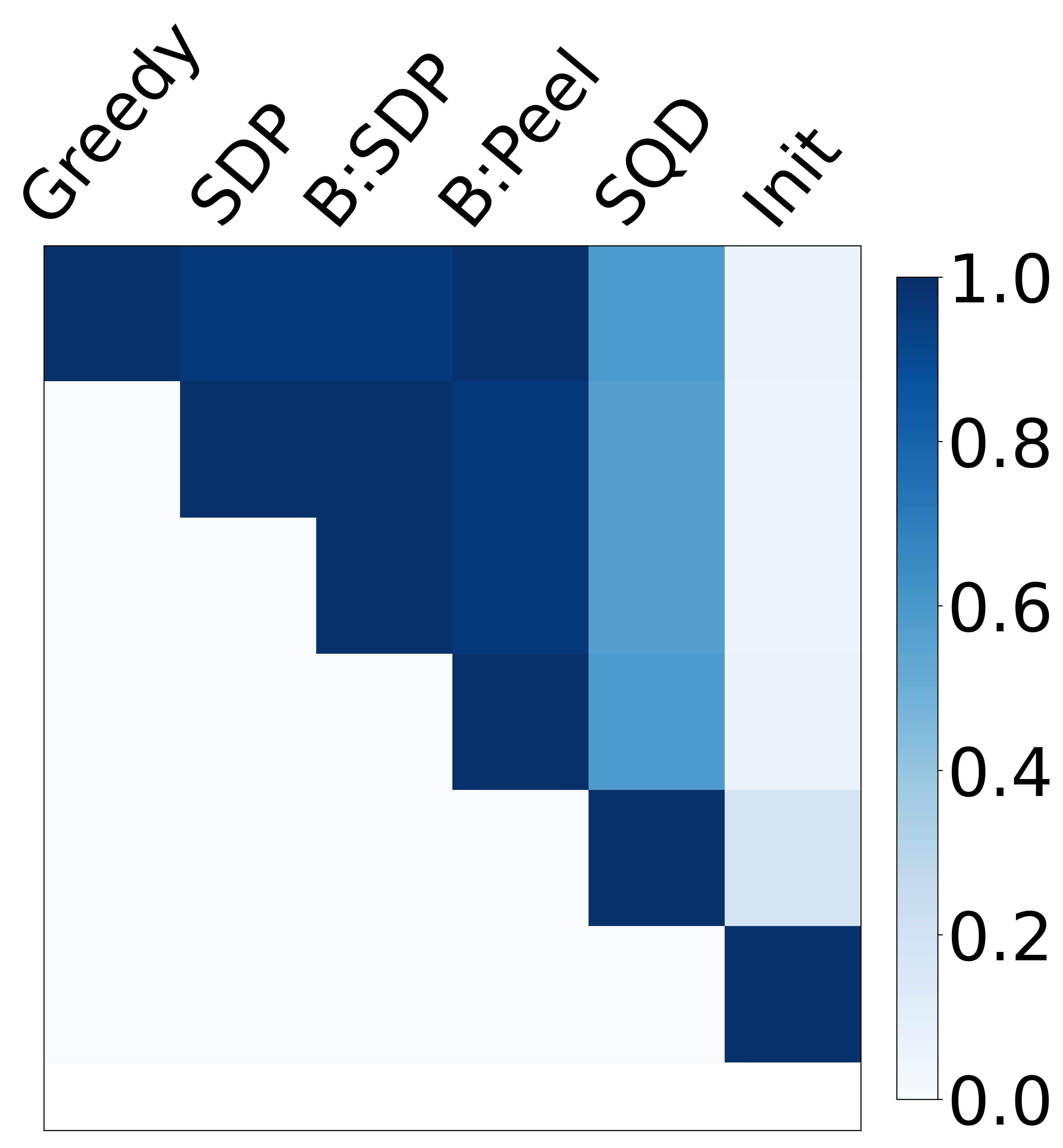}&
		\includegraphics[width=0.24\columnwidth]{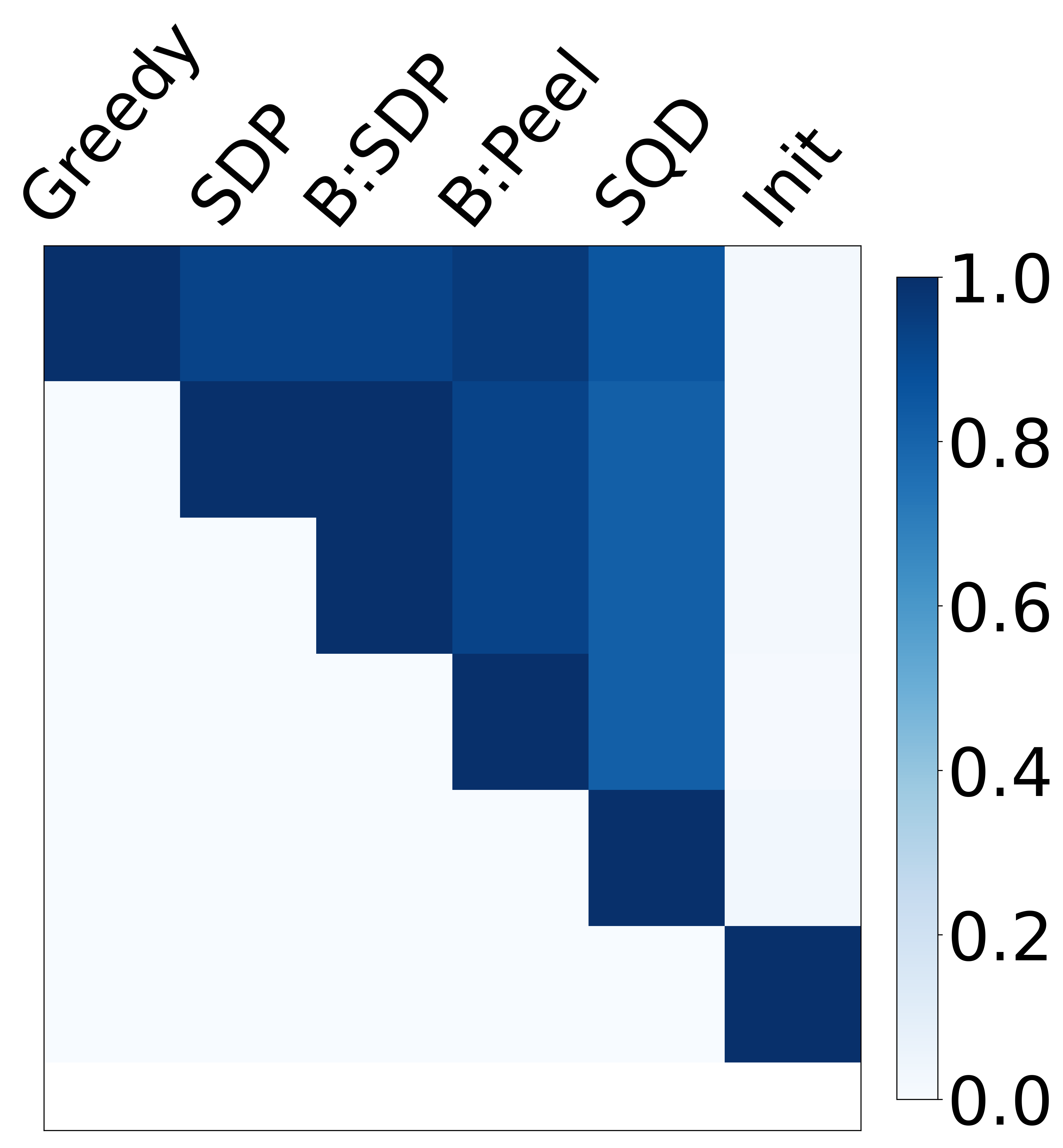}
		\\
		(a)~{\de} &
		(b)~{\gb} &
		(c)~{\us}  \\
		\includegraphics[width=0.24\columnwidth]{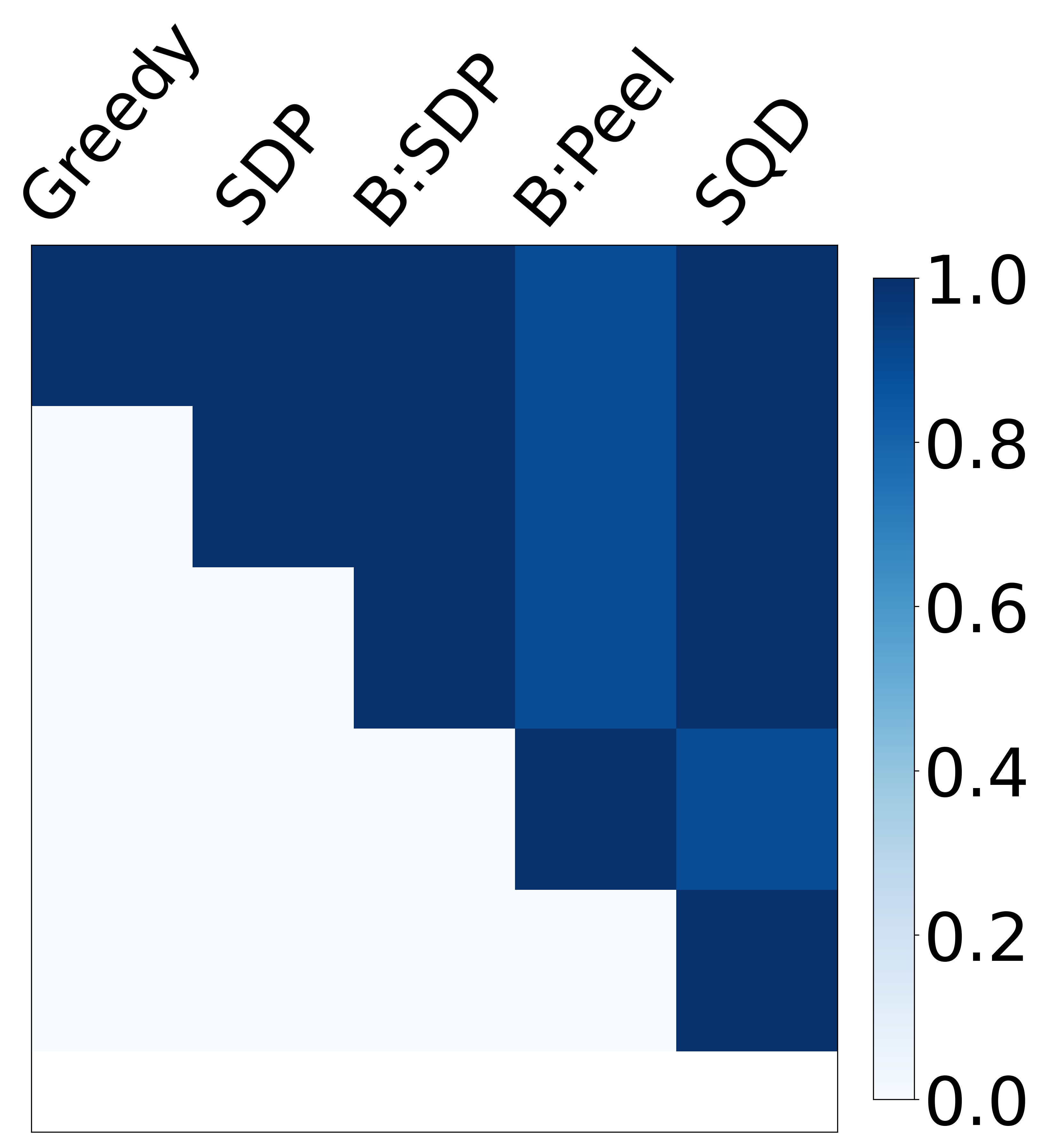}&
		\includegraphics[width=0.24\columnwidth]{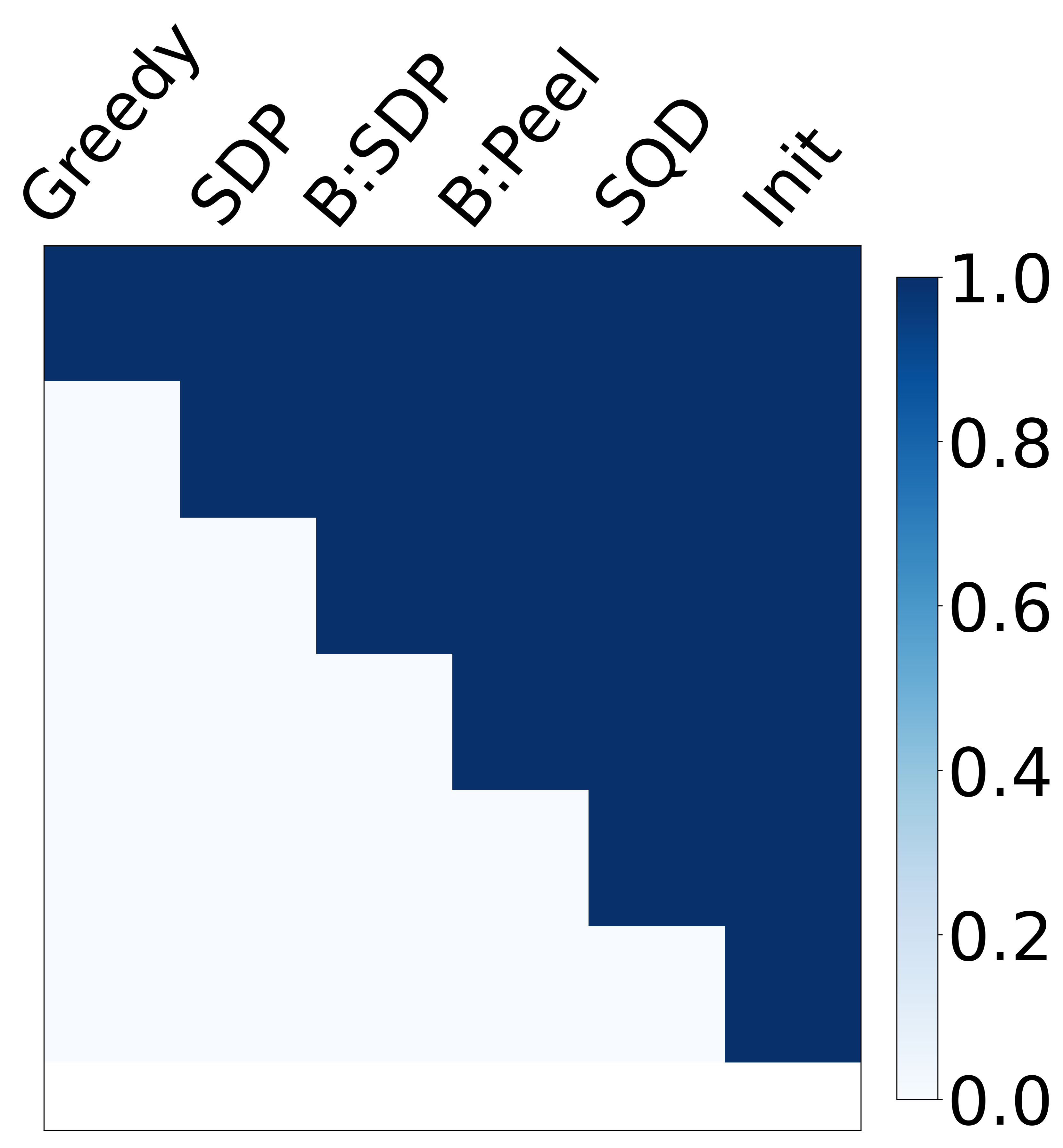}&
		\includegraphics[width=0.24\columnwidth]{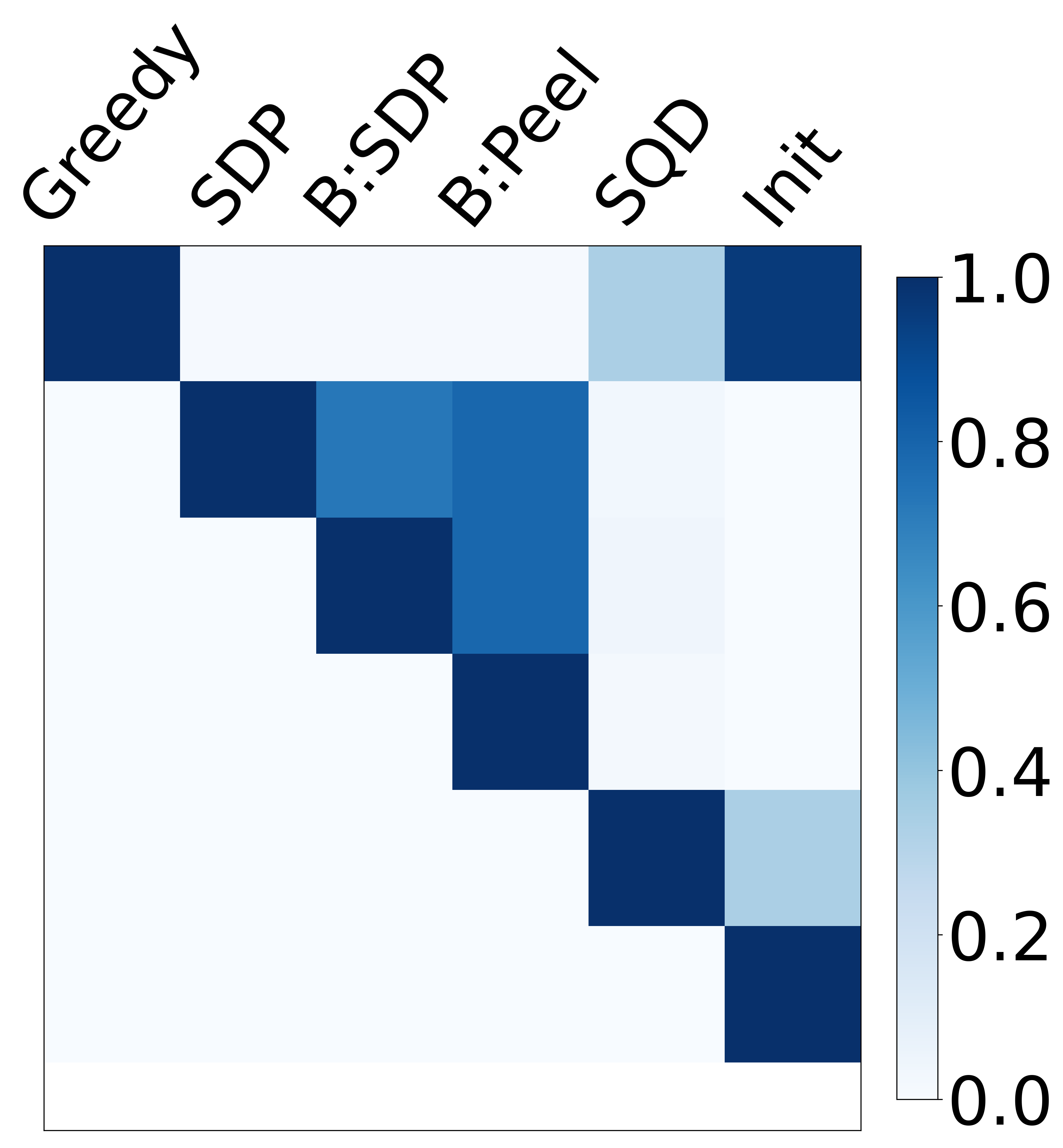}
		\\
		(d)~{\balanced}&
		(e)~{\dense}&
		(f)~{\sparse} \\
	\end{tabular}
	\caption{Jaccard similarities between 
 the nodes selected by different algorithms. 
 We initialized $U$ as a {ground-truth subgraph with $k$~nodes removed uniformly at random}.
 {\denseinit denotes the solution containing the $k$ nodes removed from the ground-truth subgraph.}
 The similarity matrix is symmetric along the diagonal; to make a clean presentation, we only show its upper triangle. Here we set $k = 50$. }
	\label{fig:similarity-country}
\end{figure}

\spara{Scalability of the algorithms.}
{
In Figure~\ref{fig:density-out-country-ratio-time}, we present the running times
of our algorithms for the four largest datasets. 
We observe that \denseSDPalgo and \denseSDPMerge are not scalable due to the 
high cost of solving the \sdp.
\denseSQD also becomes prohibitive for larger datasets, exhibiting super-linear running times as $k$ increases.
The baseline approach \denseGreedy is also not scalable on the largest dataset, \youtube, taking around $20$ minutes. 
The most scalable algorithm is \densePeelMerge, which runs in $100$ seconds on \youtube with one million nodes. 
This is because the core procedure of \densePeelMerge only requires repeatedly
removing nodes with the lowest degree, and the entire algorithm can be
implemented in time $O(n+m)$.
In conclusion, we find that \densePeelMerge is the most scalable algorithm among all algorithms, addressing {\bf RQ4}.
}

\begin{figure}[t!]
	\inputtikz{ds_plots/legend_move_out}
	\centering 
    \begin{tabular}{cccc}
		\resizebox{0.25\columnwidth}{!}{%
			\inputtikz{ds_plots/wiki_us_results_move_out_ratio_time}
		}&
		\resizebox{0.23\columnwidth}{!}{%
			\inputtikz{ds_plots/dblp_results_move_out_ratio_time}
		}&
        \resizebox{0.22\columnwidth}{!}{%
			\inputtikz{ds_plots/amazon_results_move_out_ratio_time}
		}&
        \resizebox{0.23\columnwidth}{!}{%
			\inputtikz{ds_plots/youtube_results_move_out_ratio_time}
		}\\
		(a) {\us} &
		(b) {\dblp} &
		(c) {\amazon} &
		(d) {\youtube} \\		
	\end{tabular}
	\caption{Running time for varying values of $k$. We initialize $U$ as a ground-truth subgraph and then
	remove $k$ nodes uniformly at random. We set $k = 2\%, 4\%, 6\%, 8\%, 10\% n_0$.
		}
	\label{fig:density-out-country-ratio-time}
\end{figure}

\spara{Selecting the best algorithm.} 
Last, we summarize the performance of all algorithms and select the best algorithm according to different applications.

For smaller datasets, such as Wikipedia and SBM datasets, 
the differences among the algorithms are marginal. 
However, \denseSDPalgo, \denseSDPMerge, {and \denseGreedy} show a slight advantage over \densePeelMerge, as shown in Table~\ref{tab:not-move-out-10} and Table~\ref{tab:appendix:move-out-10}.

For mid-sized datasets such as \dblp and \amazon, 
\denseSDPalgo and \denseSDPMerge fail to produce results 
due to the computational complexity of solving a semidefinite program.
{In terms of output quality, \densePeelMerge is only outperformed by \denseGreedy on \amazon, as shown in Table~\ref{tab:appendix:move-out-10}. 
However, on \dblp, \densePeelMerge's performance is 
2.5 times better than that of \denseGreedy; on \youtube, \densePeelMerge also marginally outperforms \denseGreedy.}
This is because \densePeelMerge is more efficient at identifying globally denser subgraphs. 

To conclude, {\densePeelMerge} emerges as the preferable choice for tasks with strict scalability demands. 
In scenarios involving smaller graphs, \denseSDPalgo, {\denseSDPMerge and \denseGreedy} 
output high-quality results. 
For all other cases, \densePeelMerge is recommended for 
its balance of output quality and scalability, 
thus addressing {\bf RQ5}.

\spara{Summary of the appendix.}
More details on the running times are provided in 
Appendices~\ref{sec:add-exp:running-time-by-k} and \ref{sec:add-exp-running-time-by-n}. 
Additional analysis in Appendix~\ref{sec:add-exp:dense:sparse} 
explores the nodes selected by different algorithms on \sparse.
Figure~\ref{fig:appendix:density-out-select-ratio} and 
Figure~\ref{fig:appendix:density-not-out-select-ratio} 
illustrate the quality of various algorithms across a broader range of 
$k$ values using synthetic datasets. 
Last, a comprehensive analysis of \denseSQD is presented in 
Appendix~\ref{sec:add-exp:sqd}, where \denseSQD's distinct performances 
in selecting node sets and its consistent underperformance relative 
to other algorithms are discussed. 

\section{Conclusion}
\label{sec:conclusion}
{
In this paper, we introduce the \emph{OptiRefine framework}, which characterizes problems aimed at increasing the quality of an existing solution through a small number of refinements. 
Our contributions are as follows:
First, we generalize maximum graph-partitioning problems (\maxgp), introducing the graph partitioning with $k$ refinements (\gpkc) problem. 
Second, we develop \sdp-based approaches for solving \gpkc, providing detailed analyses of maximum cut with $k$ refinements (\maxcutkc) and densest subgraph with $k$ refinements (\dskc). 
Third, we propose scalable algorithms based on a black-box reduction by leveraging the connections between our problems and \dks and \maxcut.
}

{
From a theoretical perspective, we identify two areas for future investigation. 
The first concerns whether we can apply the \sdp-based approach to problems
with more general parameters \parzero, \parone, \partwo, and \parthree, which characterize problems beyond \maxgp.
The second area involves analyzing the computational complexity of \maxcutkc when $k \in o(n)$, particularly given that the linear relaxation approach with pipage rounding for \ccmaxcut achieves a $0.5$-approximation regardless of $k$'s regime~\citep{ageev1999approximation}.
}

{
From a practical perspective, we only briefly discussed strategies for selecting
the parameter $k$, and mostly treated $k$ as a given input. We consider
exploring principled methods for determining $k$ to be a worthwhile direction for future research.
}

\section*{Acknowledgment}
We sincerely thank Per Austrin for the discussions and for providing valuable feedback on the sum-of-squares algorithm. 
We appreciate Antonis Matakos for sharing the source code for the SQD algorithm.
This research is supported by the ERC Advanced Grant REBOUND (834862), 
the EC H2020 RIA project SoBigData++ (871042), 
the Vienna Science and Technology Fund (WWTF) [Grant ID: 10.47379/VRG23013], 
the Approximability and Proof Complexity project funded by the Knut and Alice Wallenberg Foundation
and the Wallenberg AI, Autonomous Systems and Software Program (WASP) funded by the Knut and Alice Wallenberg Foundation.
The computations were enabled by resources provided by the National Academic Infrastructure for Supercomputing in Sweden (NAISS), partially funded by the Swedish Research Council through grant agreement no. 2022-06725.

\bibliographystyle{plainnat}
\bibliography{paper}
\clearpage

\appendix
\clearpage
\section{Omitted content of Section~\ref{sec:densest-subgraph}}
\label{appendix:ds-proofs}

\subsection{Proof of Lemma~\ref{lem:densest-subgraph-generalization}}
Let $G = (V, E)$ and $k$ be the input for the \dks problem. 
Suppose that we have an algorithm~$\algo$ for the \dskc problem. 
We apply $\algo$ on~$G$ with~$U = \emptyset$ and with the same parameter $k$. 
The output of $\algo$ is a set~$C$ of size~$|C|=k$, which we use as the solution~$U$ for the \dks problem. 
Since we did not make any changes to $G$, the running time of $\algo$ on~$G$ is $\bigO(T(m,n,k))$. 
We obtain our approximation guarantees by observing that 
both problems have feasible solutions consisting of exactly 
$k$ vertices, and since $U = \emptyset$, the optimal solutions 
for both problems coincide.

\subsection{Proof of Corollary~\ref{cor:densest-subgraph-hardness}}
We conclude from Lemma~\ref{lem:densest-subgraph-generalization} that if there is no 
$\beta_k$-approximation algorithm for the \dks problem, 
then there also does not exist a $\beta_k$-approximation algorithm 
for the \dskc problem with the same 
parameter $k$.

\citet{DBLP:conf/stoc/LeeG22} showed
that the \dks problem for $k = \tau \cdot n$ is 
difficult to approximate within a factor of~$O(\tau \log(1/\tau))$ under the \emph{Small Set Expansion Conjecture}. 
This implies our corollary.

\subsection{Proof of Lemma~\ref{lemma:k-densify-local-changes-connection}}
Let $C_1^*$ be the optimal set of vertices selected for the \kdense problem.
Let $C_1$ be the set of vertices chosen by an $\alpha$-approximation algorithm for the \kdense problem.
Observe that $|C_1^*| = |C_1| = k$ and ${w(E[U \cup C_1])} \geq \alpha {w(E[U \cup C_1^*])}$.

Let $C^*$ be the optimal set of vertices selected for the \dskc problem. Notice that $\abs{C^*} = k$. 
Using the condition $k \leq c\abs{U}$, we find that $|U \symm C^*|\geq (1-c)|U|$ and $|U| \geq \frac{1}{(1 + c)} \abs{U \symm C^*}$. 
Consequently, $|U \symm C^*| \geq \frac{(1 - c)}{(1+c)}\abs{U \symm C_1^*} = \frac{(1 - c)}{(1+c)}\abs{U \cup C_1^*}$.
Notice that the equality holds because $k \leq n - \abs{U}$, and hence the
optimal set of vertices~$C_1^*$ for \kdense does not intersected with $U$,
implying that $U \symm C_1^* = U \cup C_1^*$. 

Since $C_1^*$ is the optimal set of vertices for the \kdense problem, we have ${w(E[U \cup C_1^*])} \geq {w(E[U \symm C^*])}$.

Based on the above analysis, we can derive the following inequality that illustrates the performance of the $\alpha$-approximation algorithm for the \kdense problem,

\begin{equation}
\begin{aligned}
\frac{{w(E[U \symm C^*])}}{\abs{U \symm C^*}} 
&\leq \frac{(1+c)}{(1-c)} \cdot \frac{{w(E[U \cup C_1^*])}}{\abs{U \cup C_1^*}} \\
&\leq \frac{1}{\alpha} \cdot \frac{(1+c)}{(1-c)} \cdot \frac{{w(E[U \cup C_1])}}{\abs{U \cup C_1}} .\\
\end{aligned}
\end{equation}

The second inequality holds since $C_1$ is an $\alpha$-approximation solution for the \kdense problem. 
This concludes the proof, demonstrating that an $\alpha$-approximate solution for the \kdense problem implies a $\frac{1-c}{1+c}\alpha$-approximate solution for the \dskc problem.

\subsection{Pseudocode for solving \kdense and \dskc with a \dks solver (in Section~\ref{sec:densest-subgraph:black-box})}
\label{appendix:densest-subgraph:black-box:pseudocode}

The pseudocode of our black-box method on \kdense to \dks
from Section~\ref{sec:densest-subgraph:black-box}, 
is illustrated as Algorithm~\ref{algo:dense-black}.

\begin{algorithm}[H]
\caption{Solve \kdense using a solver for \dks}
\label{algo:dense-black}
\begin{algorithmic}[1]
\Require{$(G=(V,E,w),\OldSet,k)$, access to a solver for \dksplus}
\State Create a new graph $G_1 = ((V\setminus \OldSet)\cup \{i^*\}, E_1, w_1)$,
	where $i^*$ corresponds to the contracted vertices in $U$  
\State Set $E_1 \gets E[V\setminus \OldSet] \cup \{ (i^*,j) \colon (i,j)\in E \text{ for } i\in \OldSet, j\in V\setminus \OldSet\}$
\State Set $w_1(i,j) \gets w(i,j)$ for alledges $(i,j) \in E[V \setminus \OldSet]$
\ForAll{$(i^*,j) \in E_1$}
\State $w_1(i^*,j) \gets \sum_{i\in \OldSet \colon (i,j)\in E} w(i,j)$
\EndFor
\State Solve \dksplus in $G_1$ to obtain a set of vertices $U'$ of size $k+1$
\If{$i^* \in U'$}
\State $U' \gets U'\setminus \{i^*\}$
\Else
\State $j \gets \argmin_{j\in U'} \text{weighted degree of } j \text{ in } G[U \cup (U' \setminus \{i^*\})]$
\State $U' \gets U' \setminus \{j\}$
\EndIf
\State \Return{$U'$}
\end{algorithmic}
\end{algorithm}

\subsection{Proof of Lemma~\ref{lemma:k-densify-dks}}
\label{appendix:ds-proofs:kdensifydks}

We use the reduction from \kdense to \dksplus that we described in
Section~\ref{sec:relationships-densest}.  We present the pseudocode of the
reduction in Algorithm~\ref{algo:dense-black}.

We start by defining notation.
Let $G=(V,E,w)$ be the input graph for \kdense.
We let $G_1 = (V_1, E_1, w_1)$ be the graph defined in the reduction.
Formally, we set
$V_1 = (V \setminus U) \cup \{i^*\}$, where $i^*$ is the special vertex we newly
added to the graph, which represents the contracted vertices in $U$, and
$E_1 = E[V \setminus U] \cup \{(i^*, j)\colon (i,j) \in E \mbox{ for } i\in U, j \in V \setminus U\}$. 
We set $w_1(i^*,j) = \sum_{i \in U \colon j \in V \setminus U, (i,j)\in E} w(i,j)$. 

Now let $C_2^*$ be the optimal set of vertices selected for \dksplus
on $G_1$.  Let $C_2$ be the set of vertices selected for 
\dksplus on $G_1$ by an
$\alpha_{k+1}$-approximation algorithm.  Notice that $|C_2^*| = |C_2| = k+1$. 
Furthermore, we let $C_1^*$ denote the optimal set of vertices selected for
\kdense in $G$.

By definition $\frac{{w(E_1[C_2])}}{k+1} \geq \alpha_{k+1} \frac{{w(E_1[C_2^*])}}{k+1} \geq \alpha_{k+1} \frac{{w(E_1[C_1^* \cup \{i^*\}])}}{k+1}$.
The first inequality follows from the $\alpha_{k+1}$-approximation algorithm,
the second inequality follows from
${w(E_1[C_2^*])} \geq {w(E_1[C_1^* \cup \{i^*\}])}$
since $C_2^*$ is the optimal solution for \dksplus in $G_1$ and since
$C_1^*\cup\{i^*\}$ is a feasible solution for \dksplus in $G_1$.
It follows that ${w(E_1[C_2])} \geq \alpha_{k+1} {w(E_1[C_1^* \cup \{i^*\}])}$. 

Notice that if $i^* \in C_2$, then $\abs{C_2 \setminus \{i^*\}} = k$ and $C_2 \setminus \{i^*\}$ is a feasible solution for \kdense in $G$. 
If $i^* \notin C_2$, we have to pick a node $v \in C_2$ so that $C_2 \setminus \{v\}$ becomes a feasible solution for \kdense.
To obtain our approximation ratios, we we make the following case distinctions.

\emph{Case~1:} $i^* \in C_2$. The induced edge weights in the graph $G$ after adding
$C_2\setminus \{i^*\}$ to $U$ are given by ${w(E[U \cup (C_2 \setminus \{i^*\})])}$.
In addition, for the sum of edge weights induced by $C_2$ on $G_1$ we have that
$|E_1[C_2]|= {w(E[C_2 \setminus \{i^*\}])} + {w(E[U, C_2 \setminus \{i^*\}])}$. 
We then derive the following bound, 

\begin{equation}
    \begin{aligned}
        {w(E[U \cup (C_2 \setminus \{i^*\})] )}&= {w(E[U])} + {w(E[C_2 \setminus \{i^*\}])} + {w(E[U, C_2 \setminus \{i^*\}])} \\
		&= {w(E[U])} + | E_1[C_2]| \\
		&\geq {w(E[U])} + \alpha_{k+1}|E_1[C_1^* \cup \{i^*\}]| \\
		&\geq \alpha_{k+1} ({w(E[U])} + |E_1[C_1^* \cup \{i^*\}]|) \\
		&= \alpha_{k+1} {w(E[U \cup C_1^*])}. 
    \end{aligned}
\end{equation}

\emph{Case~2:} $i^* \notin C_2$. 
Let $a$ be a node in $C_2$ with smallest weighted degree in $G_1[C_2]$, i.e., $a = \arg\min_{u \in C_2} \sum_{v \in C_2}w_1(u, v)$.
The weight of induced edges after adding $C_2 \setminus \{a\}$ to $U$, i.e.,
${w(E[U \cup (C_2 \setminus \{a\})])}$, is then at least
${w(E[U])} + \frac{k-1}{k+1}{w(E_1[C_2 \cup \{i^*\})}$. 
Now we obtain that:
\begin{equation}
    \begin{aligned}
		{w(E[U \cup (C_2 \setminus \{a\})])}&= {w(E[U])} + {w(E[C_2 \setminus \{a\}])} + {w(E[U, C_2 \setminus \{a\}])} \\
		&= {w(E[U])} + |E_1[(C_2 \setminus \{a\}) \cup \{i^*\}]| \\
        &\geq {w(E[U])} + {w(E_1[C_2 \setminus \{a\}])} \\
		&\geq {w(E[U])} + \frac{k-1}{k+1}{w(E_1[C_2])} \\
		&\geq  {w(E[U])} + \frac{k-1}{k+1} \alpha_{k+1}{w(E_1[C_2^*])} \\
        &= {w(E[U])} + \frac{k-1}{k+1} \alpha_{k+1}{w(E_1[C_1^* \cup \{i^*\}])} \\
		&\geq \frac{k-1}{k+1} \alpha_{k+1} ({w(E[U])} + {w(E_1[C_1^* \cup \{i^*\}])}).\\
		&= \frac{k-1}{k+1} \alpha_{k+1} {w(E[U \cup C_1^*])}.
    \end{aligned}
\end{equation}
As ${w(E[U\cup C_1^*])}$ is the optimal objective function value for \kdense, we
have finished the proof. 

\subsection{Proof of Theorem~\ref{thm:densest-subgraph-reduction}}
\label{proof:densest-subgraph-reduction}

The theorem follows directly from combining Lemmas~\ref{lemma:k-densify-dks}
and~\ref{lemma:k-densify-local-changes-connection}.

\clearpage
\section{Omitted content of Section~\ref{sec:max-cut}}
\label{appendix:mc-proofs}
\subsection{Proof of Lemma~\ref{lem:max-cut-generalization}} 
Let $G=(V,E,w)$ and $k$ be an input for \ccmaxcut. The goal is to find $C \subseteq V$ such that $\cut{C}$ is maximized. 
Suppose we have an algorithm~$\algo$ for \maxcutkc. We apply~$\algo$ with input $(G = (V, E, w), U=\emptyset, k)$. 
Now~$\algo$ returns a set~$\SelectSet$ of size $|\SelectSet|=k$ to maximize $\cut{U \symm C}$.
Given that $U=\emptyset$, $\cut{U \symm C} = \cut{C}$.

As the constraint and the objective function for these two problems are the same, if~$\algo$ is $\alpha_{k}$-approximation algorithm for \maxcutkc, then it is $\alpha_{k}$-approximation for \ccmaxcut. 
As we directly use $C$ as the solution for \ccmaxcut, the running time is $\bigO(T(m,n, k))$. 

\subsection{Proof of Corollary~\ref{cor:max-cut-hardness}}
This is a direct result from Lemma~\ref{lem:max-cut-generalization} and the
hardness results of \cite{DBLP:conf/approx/AustrinS19}.

\subsection{Pseudocode for the black-box-based algorithm for \maxcutkc from Section~\ref{sec:max-cut:black-box}}
\label{appendix:maxcutkc:blackbox:pseudocode}

The pseudocode for black-box-based algorithm for \maxcutkc, assuming access to a solver to \maxcut. As we discussed in Section~\ref{sec:max-cut:black-box}, is illustrated in Algorithm~\ref{algo:maxcutkc-blackbox}.

\begin{algorithm}[H]
\caption{An algorithm for \maxcutkc, assuming access to a black-box \maxcut solver}
\label{algo:maxcutkc-blackbox}
\begin{algorithmic}[1]
\Require{$(G(V, E, w), \OldSet, k)$, access to a \maxcut solver}
\State Let $\SelectSet \leftarrow \{i \colon x_i \neq x^0_i\}$ be the solution
	of applying our black-box \maxcut solver on $G$
\State \textbf{Fixing $\SelectSet$:} Set $\FixSelectSet \gets \SelectSet$ 
\While{$\abs{\FixSelectSet} \neq k$}
    \If{$\abs{\FixSelectSet} < k$}
        \State $u^* \leftarrow \arg\max_{u \in V \setminus \FixSelectSet} \cut{\OldSet \symm (\FixSelectSet \cup \{u\})}$
        \State $\FixSelectSet \leftarrow \FixSelectSet \cup \{u^*\}$
    \ElsIf{$\abs{\FixSelectSet} > k$}
        \State $u^* \leftarrow \arg\max_{u \in \FixSelectSet} \cut{\OldSet \symm (\FixSelectSet \setminus \{u\})}$
        \State $\FixSelectSet \leftarrow \FixSelectSet \setminus \{u^*\}$
    \EndIf
\EndWhile
\State \Return{$\FixSelectSet$}
\end{algorithmic}
\end{algorithm}

\subsection{Proof of Theorem~\ref{thm:max-cut-black-box}}
\label{appendix:mc-proofs:blackbox}

{
Before we prove the theorem, we observe the following symmetry property:
	$\cut{\OldSet\symm (V\setminus \SelectSet)} = \cut{\OldSet\symm \SelectSet}$.
	We formally state it in Lemma~\ref{lem:max-cut-local-observation}.
This implies that in the approximation ratio analysis, we
only need to consider the case when $k \leq n/2$.
Concretely, if $k > n/2$, we simply run the algorithm with cardinality constraint equal to $n-k$.
Assume that we get $\SelectSet$ as the solution for \maxcutkc.
Then, we can return $V\setminus \SelectSet$ as the desired solution with the cardinality constraint satisfied. 
This solution gives us the same cut value as selecting $\SelectSet$ as solution. 
In Figure~\ref{fig:maxcutkr-example}, we illustrate the \maxcutkc problem on a
graph and its symmetry property. 
}

{
\begin{lemma}
\label{lem:max-cut-local-observation}
$\cut{(V \setminus \OldSet)\symm \SelectSet} = \cut{\OldSet\symm (V\setminus \SelectSet)} = \cut{\OldSet\symm \SelectSet}$ for any $\SelectSet\subseteq V$ and $\OldSet\subseteq V$.
\end{lemma}
}

\begin{proof}
{
For any set $A \subseteq V$, we denote $V \setminus A$ by $\overline{A}$.
Our goal is to prove that $\OldComSet \symm \SelectSet= \OldSet \symm \SelectComSet = \overline{\OldSet \symm \SelectSet}$, since this condition directly implies that $\cut{\OldSet \symm \SelectSet} = \cut{\OldSet \symm \SelectComSet} = \cut{\OldComSet \symm \SelectSet}$.
We expand the three formulations to show that all of them are equal. 
}

{
First, let us expand $\OldComSet \symm \SelectSet$: 
\begin{equation}
	\begin{aligned}
		\OldComSet \symm \SelectSet &= (\OldComSet \setminus \SelectSet) \cup (\SelectSet \setminus \OldComSet) \\
		&= (\OldComSet \cap \SelectComSet) \cup (\SelectSet \cap \OldSet) \\
		&= (\OldComSet \cup \SelectSet) \cap (\SelectComSet \cup \OldSet) \\
		&= (\OldSet \cup \SelectComSet) \cap (\SelectSet \cup \OldComSet)
	\end{aligned}
\end{equation}
}

{
Second, let we expand $\OldSet \symm \SelectComSet$: 
\begin{equation}
	\begin{aligned}
		\OldSet \symm \SelectComSet &= (\OldSet \setminus \SelectComSet) \cup (\SelectComSet \setminus \OldSet) \\
		&= (\OldSet \cap \SelectSet) \cup (\SelectComSet \cap \OldComSet) \\
		&= (\OldSet \cup \SelectComSet) \cap (\SelectSet \cup \OldComSet)
	\end{aligned}
\end{equation}
}

{
Finally, let us expand the right-hand side: 
\begin{equation}
	\begin{aligned}
		\overline{\OldSet \symm \SelectSet} &= \overline{(\OldSet \setminus \SelectSet) \cup (\SelectSet \setminus \OldSet)} \\
		&= \overline{\OldSet \setminus \SelectSet} \cap \overline{\SelectSet \setminus \OldSet} \\
		&= \overline{\OldSet \cap \SelectComSet} \cap \overline{\SelectSet \cap \OldComSet} \\
		&= (\OldComSet \cup \SelectSet) \cap (\SelectComSet \cup \OldSet) 
		&= (\OldSet \cup \SelectComSet) \cap (\SelectSet \cup \OldComSet)\\
	\end{aligned}
\end{equation}
}

{
We can see that $\OldComSet \symm \SelectSet = \OldSet \symm \SelectComSet = \overline{\OldSet \symm \SelectSet}$.
}
\end{proof}

Let $(G=(V, E, w), \OldSet, k = \tau n)$ be an instance of \maxcutkc, and let
$(\OldSet, \OldComSet)$ be the initial cut of the graph $G$.
{
Due to the discussion and the lemma above, 
we only consider the case when $k \leq n/2$, hence, $\tau \leq 1/2$ in the
following analysis.}

{
After running the \maxcut solver on $G$ (without the given cut), we obtain a cut $(\TempSet, \TempComSet)$. 
Set $\SelectSet$ to be one of the two sets $\TempSet \symm \OldSet$ and
$\TempSet \symm \OldComSet$ such that the cardinality of $\SelectSet$ is at most $\frac{n}{2}$.
}

Let $\mu$ be
such that $\abs{\SelectSet} = \mu n$. 
Now we make changes to $C$ until we have that $\abs{C}=k$. 
If $\abs{C}>k$ this is done by greedily removing the vertices from~$C$ which have the smallest
contribution to the cut; if $\abs{C}<k$, this is done by greedily adding the
vertices to~$C$ which increase the cut the most.
We present the pseudocode of our reduction in Algorithm~\ref{algo:maxcutkc-blackbox}.

For each node $a \in V$, let $\cutnode{a}{\OldSet}$ be the sum of edge weights of node $a$ in the cut
$(\OldSet, \OldComSet)$, i.e.,
$\cutnode{a}{\OldSet} = \sum_{(a, v)\in E[\OldSet, \OldComSet]} w(a, v)$. 

We will calculate how much the greedy algorithm reduces to the approximation ratio. To do so, we make case distinctions on $\abs{\SelectSet} > k$ or $\abs{\SelectSet} <k$, and we give the following Lemma~\ref{lem:cut-blackbox-deduction}.  

\begin{lemma}
	\label{lem:cut-blackbox-deduction}
	Let $\tau$ be such that $k = \tau n$ and assume a \maxcut solver returns the partition of the cut with $\abs{\SelectSet} = \mu n$, where $\mu \in (0, 1]$. 
	Suppose the initial cut has cut value $M_0$.
	After conducting the greedy procedure to meet the cardinality constraint, the sum of edge weights of the cut is lower bounded by $(1 - \Theta(\frac{1}{n}))\min \{\frac{\tau^2}{\mu^2}, \frac{(1-\tau)^2}{(1-\mu)^2}\} M_0$. 
\end{lemma}

\begin{proof}
	Let $M$ be the sum of edge weights in the cut at any stage of the greedy
	procedure. Note that initially $M$ is equal to $M_0$, and let the initial
	$\SelectSet$ be equal to $\SelectSet_0$. At the $i$th round of the greedy
	algorithm, $M$ becomes $M_i$, and $\SelectSet$ becomes $\SelectSet_i$. 
	
\emph{Case~1:} $\abs{\SelectSet} > \tau n$.
	We notice that $\sum_{u\in \SelectSet} \cutnode{u}{\OldSet \symm \SelectSet} \leq  2M$. 
Since the algorithm always selects a node in $C$ to maximize the cut for the
next round, i.e., $u^* = \arg \max_{u \in \SelectSet} \cut{\OldSet \symm (\SelectSet \setminus \{u\})}$,
the cut value drops at most $\frac{2M}{\abs{\SelectSet}}$.

It follows from the previous analysis that $M_i \geq M_{i-1} - \frac{2M_{i-1}}{\abs{\SelectSet_{i-1}}}$. 
Thus at the $t$-th step, 
\begin{displaymath}
	M_t \geq M_0 \prod_{i=1}^t \left(1 - \frac{2}{\abs{C_{i-1}}}\right)
	= M_0 \prod_{i=1}^t \left(\frac{\abs{C_{i-1}} - 2}{\abs{C_{i-1}}}\right).
\end{displaymath}

Note that $\abs{\SelectSet_{i -2}} = \abs{\SelectSet_{i}} + 2$, and in the end
$|\SelectSet_t| = \tau n$. Therefore, we can cancel out most of the terms and
obtain:
\begin{displaymath}
	\begin{aligned}
		M_{t} &\geq \frac{(\abs{\SelectSet_{t-2}}-2)(\abs{\SelectSet_{t-1}}-2)}{\abs{\SelectSet_0}\abs{\SelectSet_1}}M_{0} \\
		&= \frac{\tau n(\tau n - 1)}{\mu n (\mu n - 1)}M_{0} \\
		&\geq \frac{\tau^2 - \frac{\tau}{n}}{\mu^2} M_0 \\
		&= \frac{\tau^2}{\mu^2}\left(1 - \frac{1}{n\tau}\right) M_0.
	\end{aligned}
\end{displaymath}

\emph{Case~2:} $\abs{\SelectSet} < \tau n$.
Similarly, we notice that $\sum_{u\in \vertices \setminus \SelectSet} \cutnode{u}{\OldSet \symm \SelectSet} \leq  2M$. 
As the algorithm always selects a node in $\vertices \setminus \SelectSet$ to
maximize the cut for the next round, i.e., $u^* = \arg \max_{u \in \vertices \setminus \SelectSet} \cut{\OldSet \symm (\SelectSet \cup \{u\})}$, the cut
value drops at most $\frac{2M}{n - \abs{\SelectSet}}$. 

It follows from the previous analysis that $M_i \geq M_{i-1} - \frac{2M_{i-1}}{n-\abs{\SelectSet_{i-1}}}$. Thus at the $t$-th step, 
\begin{displaymath}
	M_t \geq M_0 \prod_{i=1}^t \left(1 - \frac{2}{n-\abs{C_{i-1}}}\right). 
\end{displaymath}

Note that $\abs{\SelectSet_{i-2}} = \abs{\SelectSet_{i}} - 2$, and in the end $\abs{\SelectSet_t} = \tau n$. 
Thus,
\begin{displaymath}
	\begin{aligned}
		M_{t} &\geq \frac{(n - \abs{\SelectSet_{t-2}}-2)(n - \abs{\SelectSet_{t-1}}-2)}{(n - \abs{\SelectSet_{0}})(n - \abs{\SelectSet_{1}})} M_{0} \\
		&= \frac{(n - \abs{\SelectSet_{t}})(n - \abs{\SelectSet_{t}}-1)}{(n - \abs{\SelectSet_{0}})(n - \abs{\SelectSet_{0}}+1)} M_{0} \\
		&= \frac{(n - \tau n)(n - \tau n - 1)}{(n - \mu n) (n - \mu n + 1)} M_{0}\\
		&= \frac{(1- \tau )(1 - \tau  - 1/n)}{(1 - \mu ) (1 - \mu  + 1/n)} M_{0}\\
		&= \frac{(1- \tau )^2}{(1 - \mu )^2}\frac{(1- \mu )(1 - \tau  - 1/n)}{(1 - \tau) (1 - \mu  + 1/n)} M_{0} \\
  		&= \frac{(1- \tau )^2}{(1 - \mu )^2}\frac{(1- \mu )\cdot n - \frac{(1-\mu)}{1-\tau})}{(1 - \mu) n + 1} M_{0} \\
        &\geq \frac{(1- \tau )^2}{(1 - \mu )^2} \left(1 - \frac{\frac{2 - \tau - \mu}{1 - \tau}}{(1 - \mu) n + 1}\right) M_0.
	\end{aligned}
\end{displaymath}

Combining the bounds of these two cases, we obtain that
$M_{t} \geq \min \left\{\frac{\tau^2}{\mu^2}(1 - \frac{1}{n\tau}),
		\frac{(1- \tau )^2}{(1 - \mu)^2} \left(1 - \frac{\frac{2 - \tau - \mu}{1 - \tau}}{(1 - \mu) n + 1}\right)\right\}M_0$.
Both $\frac{1}{n \tau}$ and $\frac{\frac{2 - \tau - \mu}{1 - \tau}}{(1 - \mu)n + 1}$ are in $\Theta(\frac{1}{n})$. 
We thus conclude $M_{t} \geq (1 - \Theta(\frac{1}{n}))\min \left\{\frac{\tau^2}{\mu^2}, \frac{(1-\tau)^2}{(1 - \mu)^2}\right\} M_0$. 
\end{proof}

Lemma~\ref{lem:cut-blackbox-deduction} implies that, if the \maxcut solver obtains any constant approximation ratio result, and that $\tau$ is a constant (i.e., $k = \Omega(n)$). We obtain a constant approximation algorithm for \maxcutkc. 
Specifically, let the approximation ratio for \maxcut be $\alpha$ (for instance,
$\alpha > 0.87$ for the algorithm of \citet{goemans1994approximation}),
the approximation for \maxcutkc becomes at least $(1 - \Theta(\frac{1}{n}))\min\{4\tau^2, (1-\tau)^2\} \alpha$ in the worst
case, where {we use that $\mu^2 \leq \frac{1}{4}$} and $(1-\mu)^2 \leq 1$. 
{The inequality $\mu^2 \leq \frac{1}{4}$ comes from how we pick $\SelectSet$: we always pick $\SelectSet$ with the smaller cardinality.}
The approximation
ratio in the theorem follows since $k = \tau n$.
The time complexity comes from solving the \maxcut problem using a blackbox solver and constructing and maintaining a priority queue for the greedy procedure.

\clearpage
\section{Omitted content of Section~\ref{sec:general-framework}}
\label{appendix:general}
{
In this section, we provide the content omitted from Section~\ref{sec:general-framework}. The section is structured as follows: We begin by outlining the general \sdp-based algorithm framework in Sections~\ref{appendix:general-tools}, \ref{sec:hyperplane}, and \ref{sec:fix-c}. 
Following this, we give a detailed analysis for \kdense and \maxcutkc. Specifically, the pseudocode for \dskc is presented in Section~\ref{appendix:densest-subgraph:sdp:pseudocode}, and the pseudocode for \maxcutkc is in Section~\ref{appendix:maxcutkc:sdp:pseudocode}. 
We then present the proof of Theorem~\ref{thm:densest-subgraph-sdp}, the main theorem for \dskc and \kdense, in Section~\ref{sec:proof:thm-densest-subgraph-overview}, and the proof of Theorem~\ref{thm:max-cut-sdp}, the main theorem for \maxcutkc, in Section~\ref{appendix:mc-proofs:sdp}.
}

\subsection{Useful tools for the analysis}
\label{appendix:general-tools}
Let us first introduce additional notation and lemmas
that we will repeatedly use in this section.  We define the constants
\begin{equation}
	\label{eq:const-alpha}
	\alpha = \min_{0 \leq \theta \leq \pi} \frac{2}{\pi} \frac{\theta}{1 - \cos \theta} > 0.87856,
\end{equation}
and
\begin{equation}
	\label{eq:const-beta}
	\beta = \min_{0 \leq \theta < \arccos(-1/3)} \frac{2}{\pi} \frac{2\pi - 3 \theta}{1 + 3 \cos \theta} > 0.79607.
\end{equation}
We also introduce the binary function $\sign \colon \Real \rightarrow \{-1, 1\}$ such that $\sign(x) = 1$ if $x \geq 0$, 
and $\sign(x) = -1$ if $x <0$. Additionally, we will use the following lemmas.

\begin{lemma}[Lemma 2.1 in~\citet{goemans1994approximation}]
\label{lem:alpha}	
It holds for any $y \in [-1, 1]$ that $\arccos(y) \geq \alpha \pi \frac{1}{2}(1-y)$, where $\alpha$ is defined in Equation~\eqref{eq:const-alpha}. 
\end{lemma}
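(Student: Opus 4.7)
The plan is to reduce the claimed inequality to the very definition of the constant $\alpha$ by the substitution $y=\cos\theta$. Concretely, since $y\in[-1,1]$, there is a unique $\theta\in[0,\pi]$ with $y=\cos\theta$, and by definition of the principal branch $\arccos(y)=\theta$. With this substitution the inequality $\arccos(y)\geq \alpha\pi\cdot\tfrac{1}{2}(1-y)$ becomes
\[
\theta \;\geq\; \alpha\,\pi\cdot\frac{1-\cos\theta}{2}.
\]

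For $\theta\in(0,\pi]$ we have $1-\cos\theta>0$, so dividing through is legal and the inequality is equivalent to
\[
\frac{2}{\pi}\cdot\frac{\theta}{1-\cos\theta} \;\geq\; \alpha.
\]
But by Equation~\eqref{eq:const-alpha}, $\alpha$ is defined precisely as the minimum of $\tfrac{2}{\pi}\tfrac{\theta}{1-\cos\theta}$ over $\theta\in[0,\pi]$, so the inequality holds for every such $\theta$ by definition. I will note in passing that the minimum is attained at an interior point (numerically $\theta^*\approx 2.331$), which justifies $\alpha>0.87856$, but this numerical fact is not needed for the inequality itself.

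It only remains to handle the boundary case $\theta=0$, i.e.\ $y=1$. There both sides equal zero, so the inequality holds trivially (and the minimum in the definition of $\alpha$ is interpreted as the limit $\lim_{\theta\to 0^+}\tfrac{2}{\pi}\tfrac{\theta}{1-\cos\theta}=+\infty$, which does not affect the minimum). There is no real obstacle here; the entire content of the lemma is the tautology ``the pointwise ratio is at least its minimum,'' and the only step requiring care is checking that the substitution $y=\cos\theta$ is a bijection between $y\in[-1,1]$ and $\theta\in[0,\pi]$ so that the quantifier ``for all $y$'' matches the quantifier ``for all $\theta$'' under which $\alpha$ is defined.
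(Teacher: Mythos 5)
Your proof is correct: the paper itself gives no proof of this lemma (it is imported verbatim as Lemma~2.1 of \citet{goemans1994approximation}), and your substitution $y=\cos\theta$ reducing the claim to the definition of $\alpha$ in Equation~\eqref{eq:const-alpha} is exactly the standard argument underlying the cited result. The handling of the boundary case $y=1$ and the bijectivity of $\theta\mapsto\cos\theta$ on $[0,\pi]$ are the only points needing care, and you address both.
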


\begin{lemma}[Lemma 1.2 in~\citet{goemans1994approximation}]
	\label{lem:random-projection-1}
	Let $\vecv_i$ and $\vecv_j$ be vectors on the unit sphere $\mathcal{S}_n$,
	let $\vecr$ be a unit vector that is drawn uniformly randomly from $\mathcal{S}_n$,	then $\pr(\sign(\vecv_i \cdot \vecr) \neq \sign(\vecv_j \cdot \vecr)) = \frac{1}{\pi} \arccos(\vecv_i \cdot \vecv_j)$. 
\end{lemma}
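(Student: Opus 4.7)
The plan is to reduce the question to a two-dimensional problem by exploiting rotational symmetry of the uniform measure on $\mathcal{S}_n$, and then to compute the answer as an arc-length ratio on a circle. First, I would let $W$ be the subspace spanned by $\vecv_i$ and $\vecv_j$ (a plane, when $\vecv_i$ and $\vecv_j$ are linearly independent; the degenerate cases $\vecv_i = \pm\vecv_j$ trivially give probability $0$ or $1$, matching $\arccos(\pm 1)/\pi$). Since $\vecv_i \cdot \vecr = \vecv_i \cdot \vecr_W$, where $\vecr_W$ is the orthogonal projection of $\vecr$ onto $W$, and likewise for $\vecv_j$, both signs depend only on $\vecr_W$. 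Invariance of the uniform distribution on $\mathcal{S}_n$ under any rotation that fixes $W^\perp$ pointwise then implies that the direction of $\vecr_W$, on the probability-$1$ event that $\vecr_W \neq 0$, is uniformly distributed on the unit circle in $W$.

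Second, I would fix an orthonormal basis of $W$ in which $\vecv_i = (1,0)$ and $\vecv_j = (\cos\theta, \sin\theta)$, where $\theta := \arccos(\vecv_i \cdot \vecv_j) \in [0,\pi]$. Writing the normalized projection of $\vecr$ as $(\cos\phi, \sin\phi)$ with $\phi$ uniform on $[0, 2\pi)$, we have $\sign(\vecv_i \cdot \vecr) = \sign(\cos\phi)$ and $\sign(\vecv_j \cdot \vecr) = \sign(\cos(\phi - \theta))$. The set of $\phi$ on which the two signs disagree is precisely the symmetric difference of the arcs $A = (-\pi/2, \pi/2)$ and $B = (\theta - \pi/2, \theta + \pi/2)$. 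A direct computation gives $|A \cap B| = \pi - \theta$, hence $|A \symm B| = 2\theta$, and dividing by the circumference $2\pi$ yields the claimed probability $\theta/\pi = \arccos(\vecv_i \cdot \vecv_j)/\pi$.

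The only substantive step is the reduction to two dimensions via rotational invariance; everything after that is a direct measure computation on two overlapping arcs. Measure-zero complications (namely $\vecv_i \cdot \vecr = 0$, $\vecv_j \cdot \vecr = 0$, or $\vecr_W = 0$) do not affect the probability and can be safely ignored, so the main thing to be careful about is stating the symmetry argument cleanly and checking the endpoint cases $\theta \in \{0, \pi\}$.
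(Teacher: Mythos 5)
Your proof is correct. Note that the paper itself does not prove this statement---it is imported verbatim (with citation) from \citet{goemans1994approximation}---and your argument is exactly the standard one from that source: reduce by rotational invariance to the plane spanned by $\vecv_i$ and $\vecv_j$, observe the projected direction is uniform on the circle, and compute the disagreement event as two arcs of total length $2\theta$, giving $\theta/\pi = \frac{1}{\pi}\arccos(\vecv_i \cdot \vecv_j)$; the degenerate and measure-zero cases are handled appropriately.
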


\begin{lemma}[Lemma 2.3 in~\citet{goemans1994approximation}]
\label{lem:beta}
Let $\vecv_i$, $\vecv_j$ and $\vecv_k$ be three vectors on the unit sphere $\mathcal{S}_n$, 
let $y_1 = \vecv_i \cdot \vecv_j$, $y_2 = \vecv_j \cdot \vecv_k$ and $y_3 = \vecv_i \cdot \vecv_k$. 
It follows that $1 - \frac{1}{2\pi} (\arccos(y_1) + \arccos(y_2) + \arccos(y_3)) \geq \frac{\beta}{4} (1 + y_1 + y_2 + y_3)$, where $\beta$ is defined in Equation~\eqref{eq:const-beta}. 
\end{lemma}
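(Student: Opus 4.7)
The plan is to interpret the left-hand side probabilistically and then reduce the resulting three-angle inequality to a one-variable problem whose answer is precisely the defining expression of $\beta$ in Equation~\eqref{eq:const-beta}.

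First, I would show that the left-hand side equals the probability that the three signs $\sign(\vecr\cdot\vecv_i)$, $\sign(\vecr\cdot\vecv_j)$, $\sign(\vecr\cdot\vecv_k)$ all agree when $\vecr$ is drawn uniformly from $\mathcal{S}_n$. For each pair $\{a,b\}$, Lemma~\ref{lem:random-projection-1} gives the disagreement probability $p_{ab}=\frac{1}{\pi}\arccos(\vecv_a\cdot\vecv_b)$. Among three $\pm 1$ signs, the number of disagreeing pairs is either $0$ (all agree) or $2$ (exactly one sign is the odd one out), so if $q$ denotes the probability that all three agree, the expected number of disagreeing pairs gives $p_{ij}+p_{jk}+p_{ik}=2(1-q)$. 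Hence $q=1-\frac{1}{2\pi}(\arccos y_1+\arccos y_2+\arccos y_3)$, matching the left-hand side; in particular it lies in $[0,1]$.

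Second, setting $\theta_\ell=\arccos(y_\ell)\in[0,\pi]$ and $s=\theta_1+\theta_2+\theta_3$, the inequality becomes $(2\pi-s)/(2\pi)\ge\tfrac{\beta}{4}(1+\cos\theta_1+\cos\theta_2+\cos\theta_3)$. If the parenthesised quantity is non-positive the claim is trivial, so I assume $1+\sum_\ell\cos\theta_\ell>0$. Viewing this as minimising the ratio $f/g=[1-s/(2\pi)]/[(1+\sum_\ell\cos\theta_\ell)/4]$ over feasible $(y_1,y_2,y_3)$, the Lagrangian stationarity condition $\nabla f=\lambda\nabla g$ reduces to $(1-y_\ell^2)^{-1/2}$ being constant in $\ell$, forcing $|y_\ell|$ constant. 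Two families of interior critical points arise: the symmetric configuration $y_1=y_2=y_3$, and a ``tetrahedral-like'' family with $y_a=y_b=-y_c$, i.e., $\theta_a=\theta_b=\theta$ and $\theta_c=\pi-\theta$.

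Third, a direct computation rules out the second family. Its Gram-matrix PSD constraint $\det\ge 0$ reduces to $(2\cos\theta-1)(\cos\theta+1)^2\le 0$, i.e., $\theta\ge\pi/3$, and $f/g$ simplifies to $(\pi-\theta)/(\pi\cos^2(\theta/2))$, which is at least $8/9$ on $[\pi/3,\pi]$ and therefore exceeds $\beta\approx 0.796$. Boundary cases with $|y_\ell|=1$ (two vectors coinciding or antipodal) are easily checked directly. Hence the worst case is symmetric, and the inequality reduces to $(2\pi-3\theta)/(2\pi)\ge\tfrac{\beta}{4}(1+3\cos\theta)$; the positivity $1+3\cos\theta>0$ is exactly $\theta<\arccos(-1/3)$, and rearrangement yields $\beta\le\tfrac{2}{\pi}\cdot\tfrac{2\pi-3\theta}{1+3\cos\theta}$, which holds by the definition of $\beta$ in Equation~\eqref{eq:const-beta}.

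The main obstacle is the symmetrisation in the second and third steps: because $\cos$ is neither concave nor convex on $[0,\pi]$, Jensen's inequality cannot be invoked to replace the three angles by their common average. The Lagrangian identifies only a few candidate configurations, but ruling out the non-symmetric family requires verifying feasibility against the PSD (equivalently, spherical-triangle-inequality) constraints and separately bounding $f/g$ on that family, rather than a one-line convexity argument.
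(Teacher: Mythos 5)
First, note that the paper does not prove this statement at all: it is quoted verbatim as Lemma~2.3 of \citet{goemans1994approximation}, so there is no in-paper argument to compare against; your proposal follows the same general strategy as the classical Goemans--Williamson analysis (probabilistic interpretation to dispose of the case $1+y_1+y_2+y_3\le 0$, then reduction of the ratio to the symmetric configuration $y_1=y_2=y_3$, which is exactly how $\beta$ is defined). Your first step is correct, and your computations for the ``two equal, one negated'' family check out: the Gram determinant is $(1+\cos\theta)^2(1-2\cos\theta)$, and the ratio $(\pi-\theta)/(\pi\cos^2(\theta/2))\ge 8/9>\beta$ on the feasible range.

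The genuine gap is in the reduction step. You minimize the ratio $f/g$ over the set of \emph{realizable} triples $(y_1,y_2,y_3)$, i.e.\ over the elliptope $\{\mathrm{Gram}\succeq 0\}$, but your candidate list consists only of unconstrained interior stationary points (the two $|y_\ell|$-constant families) plus the corner cases $|y_\ell|=1$. The boundary of the elliptope is the two-dimensional surface $\det(\mathrm{Gram})=0$ --- coplanar triples, where $\theta_1+\theta_2+\theta_3=2\pi$ or some $\theta_a=\theta_b+\theta_c$ holds with all $|y_\ell|<1$ --- and a minimizer there need not satisfy $\nabla f=\lambda\nabla g$, so it is not covered by your Lagrange analysis; ``two vectors coinciding or antipodal'' is only a measure-zero corner of this boundary. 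This is not a removable technicality: if the realizability constraints are dropped, the inequality is simply false (e.g.\ $\theta=(\pi,\pi/3,\pi/3)$ satisfies $\sum_\ell\theta_\ell\le 2\pi$ but gives left-hand side $1/6<\beta/4$), so the geometry of the feasible set, including its boundary, must enter the argument explicitly. The boundary can in fact be handled: on the component $\theta_1+\theta_2+\theta_3=2\pi$ one has the identity $1+\sum_\ell\cos\theta_\ell=-4\prod_\ell\cos(\theta_\ell/2)\le 0$, so the claim is trivial there, and on the components $\theta_a=\theta_b+\theta_c$ a short one-variable computation bounds the ratio by roughly $0.82>\beta$; you would also need a word on attainment, since near points such as $(y_1,y_2,y_3)=(-1,-1,1)$ both $f$ and $g$ vanish and one must verify (as is the case) that the ratio does not dip below $\beta$ in the limit. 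Without these pieces the claim ``the worst case is symmetric'' is asserted rather than proved.
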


\subsection{Analysis of the hyperplane rounding}
\label{sec:hyperplane}
{In this subsection, we analyze the second step of the \sdp-based algorithm for \gpkc.}
In particular, we derive bounds that hold for the \gpkc after hyperplane rounding. 
Our bounds are presented in Lemma~\ref{lem:cut-bound-z}.
Similar to
\citet{DBLP:journals/algorithmica/FriezeJ97,DBLP:journals/jal/feigel01,han2002improved},
we introduce the random variable~$Z$ that controls the number of nodes that change the position and the objective value. 
Then we derive a lower bound on $Z$ with high probability using Markov's inequality. 
We present the proof at the end of this section. 

In our proof, we use the following notation.
We let $\overline{\vecx} \in \{-1, 1\}^{n}$ be the indicator vector
obtained from the hyperplane rounding;
notice that $\overline{\vecx}$ does not necessarily obey the cardinality constraint. 
We set $\overline{\SdpRnd}$ to the objective value of \gpkc induced by $\overline{\vecx}$.
Let $\OPT$ be the optimal value of \gpkc.
We let $\SdpRatio = \frac{\overline{\SdpRnd}}{\OPT}$ and $\SdpRatioAvg = \frac{\Exp{\overline{\SdpRnd}}}{\OPT}$.
Notice that $\SdpRatio$ is a random variable and $\Exp{\SdpRatio} = \SdpRatioAvg$.
We let $\tau$ be such that $k = \tau n$ and notice that $\tau=\Omega(1)$ since
$k=\Omega(n)$.
Moreover, we let $\SelectSet=\{i \colon \overline{x}_i \neq x^0_i\}$ and $\TempSet = \{i \colon \overline{x}_i = 1\}$.

\begin{lemma}
\label{lem:cut-bound-z}
Let $\gamma$ and $\eta$ be two constants such that $\gamma \in [0.1, 5]$ and $\eta \in [\frac{1-\alpha + 0.01}{\alpha}, +\infty)$.
Let $Z$ be the random variable such that 
$$Z = \SdpRatio + \gamma \eta \frac{n-\abs{\SelectSet}}{n-k} + \gamma \frac{\abs{\SelectSet}(2k - \abs{\SelectSet})}{n^2}.$$
	After repeating the hyperplane rounding at least $N = \frac{1 - p + \epsilon p}{\epsilon p} \log(\frac{1}{\epsilon})$ times, 
 where $p$ is a constant lower bounded by $0.0005$, 
 it holds that with probability at least $1 - \epsilon$, 
 $Z \geq (1-\epsilon) [\SdpRatioAvg + \gamma \eta \alpha + \gamma (\alpha (1 - \tau)^2 - 1 + 2\tau)]$.
\end{lemma}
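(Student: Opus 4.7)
The plan is to establish a lower bound $\Exp{Z} \geq \SdpRatioAvg + \gamma\eta\alpha + \gamma(\alpha(1-\tau)^2 - 1 + 2\tau)$ for a single hyperplane-rounding trial, apply Markov's inequality to a deterministic upper envelope of $Z$ to deduce a per-trial success probability of order $\epsilon p$, and then amplify by $N$ independent trials.

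For the expectation I would decompose $Z$ into its three summands. The first gives $\Exp{\SdpRatio} = \SdpRatioAvg$. For the others, Lemma~\ref{lem:random-projection-1} implies $\pr(\overline{x}_i \neq x_i^0) = \tfrac{1}{\pi}\arccos(x_i^0\,\vecv_0\cdot\vecv_i)$. Applying Lemma~\ref{lem:alpha} to both $\arccos(y)$ and $\arccos(-y) = \pi - \arccos(y)$, together with the first SDP constraint $\sum_i x_i^0\,\vecv_0\cdot\vecv_i = n-2k$, yields the two-sided bound $\alpha k \leq \Exp{\abs{\SelectSet}} \leq (1-\alpha)n + \alpha k$. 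Since $\overline{x}_i\overline{x}_j = \sign(\vecr\cdot\vecv_i)\sign(\vecr\cdot\vecv_j)$, we also have $\Exp{\overline{x}_i\overline{x}_j} = 1 - \tfrac{2}{\pi}\arccos(\vecv_i\cdot\vecv_j)$. Expanding $(n-2\abs{\SelectSet})^2 = \sum_{i,j} x_i^0 x_j^0\,\overline{x}_i\overline{x}_j$, handling both sign cases of $x_i^0 x_j^0$ with Lemma~\ref{lem:alpha}, and invoking the second SDP constraint $\sum_{i,j} x_i^0 x_j^0\,\vecv_i\cdot\vecv_j = (n-2k)^2$ then gives $\Exp{\abs{\SelectSet}^2}/n^2 \leq \Exp{\abs{\SelectSet}}/n - \alpha\tau(1-\tau)$.

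Substituting these into $\Exp{Z}$ and writing $\mu = \Exp{\abs{\SelectSet}}/n$ makes $\Exp{Z}$ an affine function of $\mu$ with slope $\bigl((2\tau-1)(1-\tau) - \eta\bigr)/(1-\tau)$. Because $\max_{\tau\in[0,1]}(2\tau-1)(1-\tau) = 1/8$ (attained at $\tau = 3/4$), and the hypothesis $\eta \geq (1-\alpha+0.01)/\alpha$ forces $\eta > 1/8$, this slope is non-positive, so the minimum of $\Exp{Z}$ over the admissible range $\mu \in [\alpha\tau,\,1-\alpha(1-\tau)]$ is attained at the upper endpoint $\mu = 1-\alpha(1-\tau)$. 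Plugging this value in and simplifying produces precisely the target lower bound on $\Exp{Z}$.

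For the concentration step I would exhibit a deterministic upper bound $M$ on $Z$: the second summand is at most $\gamma\eta/(1-\tau)$, the third at most $\gamma\tau^2$, and $\SdpRatio$ is $O(1)$ since the rounded unconstrained objective is at most a constant multiple of $\OPT$ in the regime $k=\Omega(n)$. Markov's inequality applied to the non-negative random variable $M - Z$ then gives $\pr(Z < (1-\epsilon)\Exp{Z}) \leq (M - \Exp{Z})/(M - (1-\epsilon)\Exp{Z})$, so a single round succeeds with probability at least $\epsilon p$ for a constant $p \geq 0.0005$ determined by $\Exp{Z}/(M - \Exp{Z})$; taking $N$ independent trials as in the statement and keeping the best outcome drives the overall failure probability below $\epsilon$. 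The main obstacle is controlling $\SdpRatio$ uniformly, since $\overline{\SdpRnd}$ ignores the cardinality constraint and could a priori exceed $\OPT$, so a problem-specific upper bound on $\overline{\SdpRnd}/\OPT$ is required; the explicit $0.01/\alpha$ margin in the hypothesis on $\eta$, together with the range $\gamma \in [0.1,5]$, is calibrated precisely so that $\Exp{Z}/(M - \Exp{Z})$ admits the uniform lower bound yielding $p \geq 0.0005$.
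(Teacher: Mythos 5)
Your overall strategy matches the paper's: lower-bound $\Exp{Z}$ by combining Lemmas~\ref{lem:boundobj}, \ref{lem:dense-bound-size-1-short}, and \ref{lem:dense-bound-size-2-short}, then exhibit a pointwise upper bound $M$ on $Z$, apply Markov's inequality to the nonnegative variable $M-Z$ to get a per-trial success probability of order $\epsilon p$, and amplify across $N$ trials. Your treatment of the expectation is actually cleaner than what the paper makes explicit: writing the lower bound as an affine function of $\mu = \Exp{\abs{\SelectSet}}/n$, verifying the coefficient of $\mu$ is nonpositive (using $(2\tau-1)(1-\tau)\le 1/8 < \eta$), and evaluating at the endpoint $\mu = 1-\alpha(1-\tau)$ avoids the $\tau\lessgtr \tfrac12$ case split that a term-by-term application of the two size bounds would require, and it reproduces $\SdpRatioAvg + \gamma\eta\alpha + \gamma(\alpha(1-\tau)^2 - 1 + 2\tau)$ exactly.

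Where you diverge — and where the proposal does not close — is the deterministic upper bound $M$. The paper takes $M = 1 + \gamma\eta + \gamma$, and it is precisely this $\tau$-independent $M$ that makes the chain $p > \tfrac{0.01\gamma}{1 + 1.01\gamma/\alpha} > 0.0005$ work for all $\gamma\in[0.1,5]$ and all $\tau$. Your $M = O(1) + \gamma\eta/(1-\tau) + \gamma\tau^2$ is the honest pointwise bound on the second and third summands (and indeed the paper's $\gamma\eta$-bound on the middle term only holds when $\abs{\SelectSet}\ge k$, so your version is more defensible); but the $1/(1-\tau)$ factor means $p = \mathrm{lb}(\Exp{Z})/M$ is not uniformly $\ge 0.0005$ and degrades as $\tau\to 1$, so the stated constant is not recovered by this route. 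You also define $p$ as $\Exp{Z}/(M-\Exp{Z})$ rather than as $\mathrm{lb}(\Exp{Z})/M$, which is what actually appears in the bound $\tfrac{1-p}{1-(1-\epsilon)p}$. Finally, you flag but do not resolve the pointwise boundedness of $\SdpRatio = \overline{\SdpRnd}/\OPT$; this is a real concern, and the paper's own justification (``$\SdpRatioAvg < 1$'') addresses only the expectation, not $Z$ itself. In short: same approach and a nicer expectation computation, but the concentration step as you set it up does not reproduce the uniform $p\ge 0.0005$ of the lemma without the paper's tighter choice of $M$ (or an argument that the worst-case $\tau$ is bounded away from $1$).
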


To prove Lemma~\ref{lem:cut-bound-z}, we must introduce several bounds on $\SdpRatioAvg$ and the expected value of $\abs{\SelectSet}$. 
{
We begin with the bound on $\SdpRatioAvg$. 
For Lemma~\ref{lem:cut-bound-z} to be valid, we rely on the fact that $\SdpRatioAvg > 0$ for the parameter settings in Table~\ref{tab:case_distinctions}. 
The values of $\SdpRatioAvg$ for \kdense and \maxcutkc are presented in Corollaries~\ref{lem:dense-bound-size-3} and~\ref{lem:cut-bound-obj}, respectively. 
For the $\SdpRatioAvg$ values related to \vckc and \maxuncutkc, we refer readers to \citep{DBLP:journals/jal/feigel01}.
}

\begin{lemma}
\label{lem:boundobj}
Let $\tilde{\SdpRatioAvg}$ be the maximum value such that the inequality
\begin{equation*}
\begin{aligned}
     &(\parzero + \parone + \partwo + \parthree) - \nicefrac{2}{\pi} \cdot (\parone \arccos(\vecv_0 \cdot \vecv_i) + \partwo \arccos(\vecv_0 \cdot \vecv_j) + \parthree \arccos(\vecv_i \cdot \vecv_j)) \\
     &\geq \tilde{\SdpRatioAvg} \cdot (\parzero + \parone \vecv_0 \cdot \vecv_i + \partwo \vecv_0 \cdot \vecv_j + \parthree \vecv_i \cdot \vecv_j)
\end{aligned}
\end{equation*}
holds for every $\vecv_0, \vecv_i, \vecv_j \in \mathcal{S}_{n}$. It holds that $\SdpRatioAvg \geq \tilde{\SdpRatioAvg}$.
\end{lemma}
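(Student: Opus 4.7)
The plan is to lower-bound the expected objective value $\Exp{\overline{\SdpRnd}}$ of the hyperplane-rounded solution term-by-term using the hypothesized pointwise inequality, and then conclude via the SDP relaxation of the integer program.

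First, I would compute the expectation of each rounded variable by exploiting the rounding rule. By definition, $\overline{x}_i = 1$ if and only if $\sign(\vecr \cdot \vecv_0) = \sign(\vecr \cdot \vecv_i)$, so applying Lemma~\ref{lem:random-projection-1} gives $\Exp{\overline{x}_i} = 1 - \frac{2}{\pi}\arccos(\vecv_0 \cdot \vecv_i)$. Likewise, $\overline{x}_i \overline{x}_j = 1$ precisely when $\sign(\vecr \cdot \vecv_i) = \sign(\vecr \cdot \vecv_j)$, so the same lemma yields $\Exp{\overline{x}_i \overline{x}_j} = 1 - \frac{2}{\pi}\arccos(\vecv_i \cdot \vecv_j)$. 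Note that the constant term $\parzero + \parone + \partwo + \parthree$ arises naturally since the all-ones coefficient combines with $\Exp{\overline{x}_i}$ and $\Exp{\overline{x}_i \overline{x}_j}$ evaluated at angle $0$.

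Next, by linearity of expectation,
\[
\Exp{\overline{\SdpRnd}} = \sum_{i<j} w(i,j)\Bigl[(\parzero + \parone + \partwo + \parthree) - \frac{2}{\pi}\bigl(\parone \arccos(\vecv_0 \cdot \vecv_i) + \partwo \arccos(\vecv_0 \cdot \vecv_j) + \parthree \arccos(\vecv_i \cdot \vecv_j)\bigr)\Bigr].
\]
By the definition of $\tilde{\SdpRatioAvg}$, each bracketed term is at least $\tilde{\SdpRatioAvg}\,(\parzero + \parone \vecv_0 \cdot \vecv_i + \partwo \vecv_0 \cdot \vecv_j + \parthree \vecv_i \cdot \vecv_j)$. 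Since the edge weights $w(i,j)$ are non-negative, we may apply the inequality edgewise and sum to obtain $\Exp{\overline{\SdpRnd}} \geq \tilde{\SdpRatioAvg} \cdot \mathrm{SDP}$, where $\mathrm{SDP}$ is the value of the objective of~\eqref{sdp:densest-subgraph} on the optimum vectors $\vecv_0, \ldots, \vecv_n$.

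Finally, since~\eqref{sdp:densest-subgraph} is a relaxation of the integer program~\eqref{ip:densest-subgraph} (every feasible $\vecx \in \{-1,1\}^n$ lifts to unit vectors satisfying both SDP constraints by taking $\vecv_i = x_i \vecv_0$), we have $\mathrm{SDP} \geq \OPT$, and hence $\SdpRatioAvg = \Exp{\overline{\SdpRnd}}/\OPT \geq \tilde{\SdpRatioAvg}$. The main subtlety is verifying that the pointwise inequality can indeed be pushed through the weighted sum, which hinges only on $w(i,j) \geq 0$; the cardinality constraints of the SDP are not invoked here, because we are bounding only the objective value of the rounded indicator, not its feasibility (that step is deferred to the $C$-fixing stage analyzed in Lemma~\ref{lem:cut-bound-z}).
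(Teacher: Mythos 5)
Your proposal is correct and follows essentially the same route as the paper: compute the per-pair expectation via Lemma~\ref{lem:random-projection-1} and the fact that $\overline{x}_i = \sign(\vecr\cdot\vecv_0)\sign(\vecr\cdot\vecv_i)$ (so $\overline{x}_i\overline{x}_j = \sign(\vecr\cdot\vecv_i)\sign(\vecr\cdot\vecv_j)$), apply the pointwise inequality edgewise using $w(i,j)\geq 0$, and close with the SDP-relaxation bound $\mathrm{SDP}\geq\OPT$. Your remark that the cardinality constraint plays no role at this stage is a correct observation the paper leaves implicit.
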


\begin{proof}
First, we observe that the condition implies that
\begin{equation*}
\begin{aligned}
     &\sum_{i<j} w(i,j) \left( (\parzero + \parone + \partwo + \parthree) - \nicefrac{2}{\pi} \cdot (\parone \arccos(\vecv_0 \cdot \vecv_i) + \partwo \arccos(\vecv_0 \cdot \vecv_j) + \parthree \arccos(\vecv_i \cdot \vecv_j)) \right) \\
     &\geq \tilde{\SdpRatioAvg} \cdot \sum_{i<j} w(i,j) (\parzero + \parone \vecv_0 \cdot \vecv_i + \partwo \vecv_0 \cdot \vecv_j + \parthree \vecv_i \cdot \vecv_j).
\end{aligned}
\end{equation*}

Next, we show that 
\begin{align*}
    &\Exp{(\parzero + \parone \overline{x}_i + \partwo \overline{x}_j + \parthree \overline{x}_i \overline{x}_j)}\\
    &= (\parzero + \parone + \partwo + \parthree) - \nicefrac{2}{\pi} \cdot (\parone \arccos(\vecv_0 \cdot \vecv_i) + \partwo \arccos(\vecv_0 \cdot \vecv_j) + \parthree \arccos(\vecv_i \cdot \vecv_j))
\end{align*}
by using Lemma~\ref{lem:random-projection-1} as follows, 
\begin{equation*}
    \begin{aligned}
        &\Exp{(\parzero + \parone \overline{x}_i + \partwo \overline{x}_j + \parthree \overline{x}_i \overline{x}_j)} \\
        &=  \parzero + \parone \Exp{\sign((\vecr \cdot \vecv_0)(\vecr \cdot \vecv_i))} + \partwo \Exp{\sign((\vecr \cdot \vecv_0)(\vecr \cdot \vecv_j))} + \parthree \Exp{\sign((\vecr \cdot \vecv_i)(\vecr \cdot \vecv_j))} \\
        &= \parzero + \parone (1 - \nicefrac{2}{\pi}\cdot \arccos{(\vecv_0 \cdot \vecv_i)}) + \partwo (1 - \nicefrac{2}{\pi}\cdot \arccos{(\vecv_0 \cdot \vecv_j)}) + \parthree (1 - \nicefrac{2}{\pi}\cdot \arccos{(\vecv_i \cdot \vecv_j)})\\
        &= c_0 + c_1 + c_2 + c_3 - \nicefrac{2}{\pi}\cdot (\parone \arccos{(\vecv_0 \cdot \vecv_i)} + \partwo \arccos{(\vecv_0 \cdot \vecv_j)} + \parthree \arccos{(\vecv_i \cdot \vecv_j)}).
    \end{aligned}
\end{equation*}

Next, notice that 
$\OPT \leq \sum_{i<j} w(i,j) (\parzero + \parone \vecv_0 \cdot \vecv_i + \partwo \vecv_0 \cdot \vecv_j + \parthree \vecv_i \cdot \vecv_j)$ 
and 
$\Exp{\overline{\SdpRnd}} =\sum_{i<j} w(i,j) \Exp{\parzero + \parone \overline{x}_i + \partwo \overline{x}_j + \parthree \overline{x}_i \overline{x}_j}$, and we get that
\begin{align*}
    \Exp{\overline{\SdpRnd}} \geq \tilde{\SdpRatioAvg} \OPT.
\end{align*}
By definition we get $\SdpRatioAvg \geq \tilde{\SdpRatioAvg}$, and we conclude the lemma. 

\end{proof}

\begin{corollary}
\label{lem:dense-bound-size-3}
For \kdense, it holds that $\SdpRatioAvg = \frac{\Exp{{w(E[\TempSet])}}}{\OPT} \geq \beta$. 
\end{corollary}

\begin{proof}
    First, we note that $\Exp{{w(E[\TempSet])}} = \Exp{\overline{\SdpRnd}}$ when the problem is \kdense.
    Next, recall that for \kdense we have $\parzero = \frac{1}{4}$, $\parone =
	\frac{1}{4}$, $\partwo = \frac{1}{4}$, $\parthree = \frac{1}{4}$ in
	Lemma~\ref{lem:boundobj}. Now the ratio $\beta$ is obtained by applying Lemma~\ref{lem:beta}.
\end{proof}

\begin{corollary}
	\label{lem:cut-bound-obj}
	For \maxcutkc, it holds that
 $\SdpRatioAvg = \frac{\Exp{\abs{\cut[\TempSet]}}}{\OPT} \geq \alpha$. 
\end{corollary}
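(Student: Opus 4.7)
The plan is to mirror the proof of the companion corollary for \kdense (Corollary~\ref{lem:dense-bound-size-3}) but with the max-cut parameters from Table~\ref{tab:case_distinctions}, namely $\parzero=\tfrac{1}{2}$, $\parone=\partwo=0$, $\parthree=-\tfrac{1}{2}$. The key observation is that for \maxcutkc the rounded objective $\overline{\SdpRnd}$ coincides with $\cut{\TempSet}$, since the per-edge contribution $(\tfrac{1}{2}-\tfrac{1}{2}\overline{x}_i\overline{x}_j)$ equals $1$ exactly when $\overline{x}_i\neq\overline{x}_j$, i.e.\ when the edge crosses the cut $(\TempSet,\overline{\TempSet})$. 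Hence $\Exp{\cut{\TempSet}}=\Exp{\overline{\SdpRnd}}$.

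Next, I would invoke Lemma~\ref{lem:boundobj} and reduce the claim $\SdpRatioAvg\geq\alpha$ to showing that the pointwise inequality
\begin{equation*}
(\parzero+\parone+\partwo+\parthree)-\tfrac{2}{\pi}\bigl(\parone\arccos(\vecv_0\cdot\vecv_i)+\partwo\arccos(\vecv_0\cdot\vecv_j)+\parthree\arccos(\vecv_i\cdot\vecv_j)\bigr)
\geq \alpha\bigl(\parzero+\parone\,\vecv_0\cdot\vecv_i+\partwo\,\vecv_0\cdot\vecv_j+\parthree\,\vecv_i\cdot\vecv_j\bigr)
\end{equation*}
holds for all unit vectors $\vecv_0,\vecv_i,\vecv_j$. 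Substituting the max-cut parameters, both the constant term on the left and the $\vecv_0$-dependent terms vanish, so the inequality collapses to
\begin{equation*}
\tfrac{1}{\pi}\arccos(\vecv_i\cdot\vecv_j)\;\geq\;\tfrac{\alpha}{2}\bigl(1-\vecv_i\cdot\vecv_j\bigr).
\end{equation*}

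This is precisely the Goemans--Williamson inequality recorded as Lemma~\ref{lem:alpha} (with $y=\vecv_i\cdot\vecv_j\in[-1,1]$): $\arccos(y)\geq\alpha\pi\cdot\tfrac{1}{2}(1-y)$. Dividing by $\pi$ yields exactly what is needed. Applying Lemma~\ref{lem:boundobj} then gives $\SdpRatioAvg\geq\alpha$, which together with the identity $\Exp{\cut{\TempSet}}=\Exp{\overline{\SdpRnd}}$ proves the corollary.

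There is no real obstacle here: once the parameters for \maxcutkc are plugged in, all the cross-terms involving $\vecv_0$ cancel, and the bound reduces to a direct application of Lemma~\ref{lem:alpha}. The only thing to be careful about is noting that the constant constraint $\sum_i x_i^0\vecv_i\cdot\vecv_0=n-2k$ and the variance constraint play no role in the edge-wise bound, so they can be ignored for the purpose of establishing $\SdpRatioAvg\geq\alpha$; they only enter later when controlling $\abs{\SelectSet}$ in Lemma~\ref{lem:cut-bound-z}.
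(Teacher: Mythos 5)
Your proposal is correct and follows the same route as the paper's proof: identify $\Exp{\cut{\TempSet}}$ with $\Exp{\overline{\SdpRnd}}$, plug the \maxcutkc parameters $\parzero=\tfrac12$, $\parone=\partwo=0$, $\parthree=-\tfrac12$ into Lemma~\ref{lem:boundobj}, and observe that the resulting pointwise inequality is exactly Lemma~\ref{lem:alpha}. You simply spell out the parameter substitution and cancellation that the paper leaves implicit.
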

\begin{proof}
    First, we note that $\Exp{\abs{\cut[\TempSet]}} = \Exp{\overline{\SdpRnd}}$ when the problem is \maxcutkc.
    Next, recall that for \maxcutkc we have $\parzero = \frac{1}{2}$, $\parone =
	0$, $\partwo = 0$, $\parthree = -\frac{1}{2}$ in Lemma~\ref{lem:boundobj}.
	Now the ratio $\alpha$ is obtained by applying Lemma~\ref{lem:alpha}.
\end{proof}

Next, we give the analysis on bounding $\Exp{\abs{\SelectSet}}$ and $\Exp{\abs{\SelectSet}(n - \abs{\SelectSet})}$ in Lemma~\ref{lem:dense-bound-size-1-short} and Lemma~\ref{lem:dense-bound-size-2-short}.
{Notice that for \maxcutkc, we can no assume that \(\abs{\SelectSet} \leq \frac{n}{2}\) in our analysis.
This is because the size of $\SelectSet$ is dependent on the input $k$.
In particular, Lemma~\ref{lem:dense-bound-size-1-short} does not hold by simply replacing $\SelectSet$ with $V\setminus \SelectSet$.}

\begin{restatable}{lemma}{denseboundsizeone}
\label{lem:dense-bound-size-1-short}
The expected size of $C$ can be bounded by $\Exp{\abs{\SelectSet}} \geq \alpha k$ and $\Exp{\abs{\SelectSet}} \leq n - n \alpha + \alpha k$.
\end{restatable}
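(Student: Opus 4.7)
The plan is to reduce $\Exp{\abs{\SelectSet}}$ to a sum of hyperplane-rounding probabilities, collapse the two sign cases for $x_i^0$ into a single clean expression, and then sandwich that sum using Lemma~\ref{lem:alpha} together with the SDP constraint $\sum_i x_i^0 \vecv_i\cdot\vecv_0 = n-2k$.

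First I would fix an index $i$ and compute $\Pr[i\in\SelectSet]=\Pr[\overline{x}_i\neq x_i^0]$. By the rounding rule, $\overline{x}_i=1$ exactly when $\sign(\vecr\cdot\vecv_0)=\sign(\vecr\cdot\vecv_i)$, so Lemma~\ref{lem:random-projection-1} gives $\Pr[\overline{x}_i=-1]=\tfrac{1}{\pi}\arccos(\vecv_0\cdot\vecv_i)$. Splitting into the case $x_i^0=1$ (where $\Pr[i\in\SelectSet]=\tfrac{1}{\pi}\arccos(\vecv_0\cdot\vecv_i)$) and $x_i^0=-1$ (where $\Pr[i\in\SelectSet]=1-\tfrac{1}{\pi}\arccos(\vecv_0\cdot\vecv_i)$), and using the identity $\pi-\arccos(y)=\arccos(-y)$, both cases unify into
\[
\Pr[i\in\SelectSet]=\tfrac{1}{\pi}\arccos\!\bigl(x_i^0\,\vecv_0\cdot\vecv_i\bigr).
\]
Linearity of expectation then yields $\Exp{\abs{\SelectSet}}=\tfrac{1}{\pi}\sum_i \arccos(x_i^0\,\vecv_0\cdot\vecv_i)$.

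For the lower bound, I would apply Lemma~\ref{lem:alpha} to each term with $y=x_i^0\,\vecv_0\cdot\vecv_i\in[-1,1]$ to obtain $\arccos(y)\geq \alpha\pi(1-y)/2$. Summing and invoking the SDP constraint gives
\[
\Exp{\abs{\SelectSet}} \;\geq\; \tfrac{\alpha}{2}\sum_i\bigl(1-x_i^0\,\vecv_0\cdot\vecv_i\bigr) \;=\; \tfrac{\alpha}{2}\bigl(n-(n-2k)\bigr) \;=\; \alpha k.
\]
For the upper bound, I would apply Lemma~\ref{lem:alpha} instead to $-y$ and use $\arccos(y)=\pi-\arccos(-y)\leq \pi-\alpha\pi(1+y)/2$. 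Summing and again invoking the constraint $\sum_i x_i^0\,\vecv_0\cdot\vecv_i = n-2k$ yields
\[
\Exp{\abs{\SelectSet}} \;\leq\; n-\tfrac{\alpha}{2}\sum_i\bigl(1+x_i^0\,\vecv_0\cdot\vecv_i\bigr) \;=\; n-\tfrac{\alpha}{2}\bigl(n+(n-2k)\bigr) \;=\; n-n\alpha+\alpha k.
\]

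There isn't really a hard step: once the probability collapses to $\tfrac{1}{\pi}\arccos(x_i^0\,\vecv_0\cdot\vecv_i)$, the two bounds are immediate by pairing the two sides of the Goemans--Williamson inequality with the single linear SDP constraint. The only subtlety is the clean unification of the $x_i^0=\pm 1$ cases via $\arccos(-y)=\pi-\arccos(y)$, which is what makes the final formulas symmetric and constraint-friendly.
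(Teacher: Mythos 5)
Your proposal is correct and follows essentially the same route as the paper: both reduce $\Pr[i\in\SelectSet]$ via Lemma~\ref{lem:random-projection-1} and a case split on $x_i^0=\pm 1$ to the unified expression $\tfrac{1}{\pi}\arccos(x_i^0\,\vecv_0\cdot\vecv_i)$, then obtain the two bounds by applying Lemma~\ref{lem:alpha} (once directly, once to $-y$ via $\arccos(y)=\pi-\arccos(-y)$) together with the SDP constraint $\sum_i x_i^0\vecv_i\cdot\vecv_0=n-2k$. The only cosmetic difference is that the paper absorbs $x_i^0$ into the sign term as $\sign(x_i^0\vecv_0\cdot\vecr)$ rather than using the $\arccos$ reflection identity, which changes nothing substantive.
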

\begin{proof}
We start by defining a sequence of Bernoulli random variables
$\{Y_i\}_{i = 1, \ldots, n}$ such that $Y_i = 1$ if $\overline{x}_i \neq x_i^0$
and, otherwise, $Y_i = 0$. 
By definition, $\pr(Y_i = 1) = \pr(\overline{x}_i \neq x_i^0)$. 
Since $x_i^0$ is a given constant that can either be $-1$ 
or $1$, we can write $\pr(Y_i = 1)$ as the summation of joint probability over $x_i^0=1$ and $x_i^0=-1$, that is  
\begin{equation*}
	\begin{aligned}
		\pr(Y_i = 1 ) &= \pr(\overline{x}_i\neq x_i^0, x_i^0 = -1) + \pr(\overline{x}_i\neq x_i^0, x_i^0 = 1). \\
	\end{aligned}
\end{equation*}

Recall that we set $\overline{x}_i = 1$ if $(\vecv_0 \cdot \vecr)(\vecv_i \cdot \vecr) \geq 0$ and 
$\overline{x}_i = -1$ otherwise. 
Thus, $\overline{x}_i = \sign(\vecv_0 \cdot \vecr)\sign(\vecv_i \cdot \vecr)$, 
and $x_i^0 \neq \overline{x}_i$ is equivalent to
$x_i^0 \neq \sign(\vecv_0 \cdot \vecr)\sign(\vecv_i \cdot \vecr)$. 
We make the case distinctions for $x_i^0 = -1$ and $x_i^0 = 1$. 

Case 1: $x_i^0 = -1$. Then $\sign(\vecv_i \cdot \vecr) = -\sign(\vecv_0 x_i^0 \cdot \vecr)$, and 
\begin{equation*}
	\displaystyle
	\begin{aligned}
		\pr(\overline{x}_i\neq x_i^0, x_i^0 = -1) 
		& = \pr(\sign(\vecv_i \cdot \vecr) \sign(\vecv_0 x_i^0 \cdot \vecr) = -1, x_i^0 = -1). \\
	\end{aligned}
\end{equation*}

Case 2: $x_i^0 = 1$. Then $\sign(\vecv_i \cdot \vecr) = \sign(\vecv_0 x_i^0 \cdot \vecr)$, and
\begin{equation*}
	\begin{aligned}
		\pr(\overline{x}_i\neq x_i^0, x_i^0 = 1) 
		& = \pr(\sign(\vecv_i \cdot \vecr) \sign(\vecv_0 x_i^0 \cdot \vecr) = -1, x_i^0 = 1). \\
	\end{aligned}
\end{equation*}

Adding up the above two formulas,
\begin{equation*}
	\begin{aligned}
		\pr(Y_i = 1 ) &= \pr(\sign(\vecv_i \cdot \vecr) \sign(\vecv_0 x_i^0 \cdot \vecr) = -1) 
		&\overset{(1)}{=}  \frac{1}{\pi} \arccos(x_i^0\vecv_0 \cdot \vecv_i),
	\end{aligned}
\end{equation*}
where $(1)$ holds according to Lemma~\ref{lem:random-projection-1}.

Note that $\abs{\SelectSet} = \sum_{i=1}^n Y_i$, and by the linearity of expectation, $\Exp{\abs{\SelectSet}} = \sum_{i=1}^n \Exp{Y_i}$. 
From now on, the analysis is the same as the one that appears in \citet[Lemmas~2.3 and 3.2]{DBLP:journals/jal/feigel01} to obtain lower and upper bounds on $\Exp{\abs{\SelectSet}}$. 
In detail, the lower bound on $\Exp{\abs{\SelectSet}}$ is obtained from
\begin{equation}
	\begin{aligned}
		\Exp{\abs{\SelectSet}} &\overset{(a)}{=} \frac{1}{\pi}\sum_{i=1}^n \arccos(x_i^0 \vecv_0 \cdot \vecv_i) &\overset{(b)}{\geq} \alpha \sum_{i=1}^n \frac{1 - x_i^0 \vecv_0 \cdot \vecv_i}{2} 
		\overset{(c)}{=} \alpha k. 
	\end{aligned}
\end{equation}

Similarly, the upper bound is obtained from 
$\Exp{\abs{\SelectSet}}$:
\begin{equation}
	\begin{aligned}
		\Exp{\abs{\SelectSet}} &\overset{(a)}{=} \sum_{i=1}^n \left(1 - \frac{1}{\pi}\arccos(-x_i^0 \vecv_0 \cdot \vecv_i)\right) \\
		&\overset{(b)}{\leq} n - \alpha \sum_{i=1}^n \frac{1 + x_i^0 \vecv_0 \cdot \vecv_i}{2} 
		\overset{(c)}{=} (1- \alpha) n + \alpha k. 
	\end{aligned}
\end{equation}

In the above formulas, $(a)$ follows from the definition of $\abs{\SelectSet}$, $(b)$ follows from Lemma~\ref{lem:alpha}, and $(c)$ follows from the constraint on the \sdp~\eqref{sdp:densest-subgraph}.

\end{proof}

\begin{restatable}{lemma}{denseboundsizetwo}
\label{lem:dense-bound-size-2-short}
 It holds that $\Exp{\abs{\SelectSet} (n - \abs{\SelectSet})} \geq \alpha (n - k)k$.
\end{restatable}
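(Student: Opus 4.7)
The plan is to reduce $\Exp{\abs{\SelectSet}(n - \abs{\SelectSet})}$ to a sum of pairwise correlations of the rounded indicators, so that the second \sdp-constraint $\sum_{i,j} x_i^0 x_j^0 \vecv_i \cdot \vecv_j = (2k-n)^2$ can be invoked directly. First, I would introduce $\tilde{Y}_i = x_i^0 \overline{x}_i \in \{-1, +1\}$, noting that $i \in \SelectSet$ iff $\tilde{Y}_i = -1$. A simple counting argument (each unordered pair $\{i,j\}$ contributes to $\abs{\SelectSet}(n - \abs{\SelectSet})$ exactly when the two $\tilde{Y}$'s disagree) gives the identity
\[
\abs{\SelectSet}\,(n - \abs{\SelectSet}) \;=\; \sum_{i<j} \frac{1 - \tilde{Y}_i \tilde{Y}_j}{2}.
\]

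Second, I would compute $\Exp{\tilde{Y}_i \tilde{Y}_j}$. Since $\overline{x}_i = \sign(\vecr \cdot \vecv_0)\sign(\vecr \cdot \vecv_i)$, the factor $\sign(\vecr \cdot \vecv_0)^2 = 1$ cancels almost surely, so that $\tilde{Y}_i \tilde{Y}_j = x_i^0 x_j^0 \sign(\vecr \cdot \vecv_i)\sign(\vecr \cdot \vecv_j)$. Applying Lemma~\ref{lem:random-projection-1} yields
\[
\Exp{\tilde{Y}_i \tilde{Y}_j} \;=\; x_i^0 x_j^0 \Bigl( 1 - \tfrac{2}{\pi}\arccos(\vecv_i \cdot \vecv_j) \Bigr).
\]

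The main obstacle is that $x_i^0 x_j^0 \in \{-1, +1\}$ can have either sign, so Lemma~\ref{lem:alpha} cannot be applied uniformly. I would handle this by a case split. When $x_i^0 x_j^0 = 1$, the quantity $1 - \Exp{\tilde{Y}_i \tilde{Y}_j}$ equals $\frac{2}{\pi}\arccos(\vecv_i \cdot \vecv_j) \geq \alpha(1 - \vecv_i \cdot \vecv_j)$ by Lemma~\ref{lem:alpha}. When $x_i^0 x_j^0 = -1$, the same quantity equals $\frac{2}{\pi}(\pi - \arccos(\vecv_i \cdot \vecv_j)) = \frac{2}{\pi}\arccos(-\vecv_i \cdot \vecv_j) \geq \alpha(1 + \vecv_i \cdot \vecv_j)$. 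Checking signs, both bounds unify as
\[
1 - \Exp{\tilde{Y}_i \tilde{Y}_j} \;\geq\; \alpha\bigl(1 - x_i^0 x_j^0\, \vecv_i \cdot \vecv_j\bigr).
\]

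Finally, I would sum the previous bound over $i < j$ and apply the \sdp identity. Since $(x_i^0)^2 \norm{\vecv_i}^2 = 1$ for every $i$, the constraint $\sum_{i,j} x_i^0 x_j^0 \vecv_i \cdot \vecv_j = (2k-n)^2$ reduces to $\sum_{i<j} x_i^0 x_j^0 \vecv_i \cdot \vecv_j = \frac{(2k-n)^2 - n}{2}$, which gives $\sum_{i<j}\bigl(1 - x_i^0 x_j^0\, \vecv_i \cdot \vecv_j\bigr) = \frac{n^2 - (2k-n)^2}{2} = 2k(n-k)$. Combining with the pair-identity and the factor $\tfrac{1}{2}$ from its denominator yields the claimed bound $\Exp{\abs{\SelectSet}(n - \abs{\SelectSet})} \geq \alpha\, k(n-k)$.
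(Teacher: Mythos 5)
Your proposal is correct and follows essentially the same route as the paper: your pairwise quantity $\frac{1-\tilde{Y}_i\tilde{Y}_j}{2}$ is exactly the paper's indicator $R_{ij}$ of a changed relative position, and both arguments combine Lemma~\ref{lem:random-projection-1}, Lemma~\ref{lem:alpha}, and the second \sdp constraint in the same way (the paper folds the signs into the vectors $x_i^0\vecv_i$, $x_j^0\vecv_j$ instead of your explicit case split, which is equivalent). Your handling of the diagonal terms in the constraint is a careful touch that matches the intended bound $\alpha k(n-k)$.
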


\begin{proof}
	Let $R_{ij}$ denote the Bernoulli random variable 
	indicating whether the relative locations of nodes~$i$ and~$j$ has changed, that is, 
	\begin{equation*}
	R_{ij} =
	\begin{cases}
	1, & \text{if } (\overline{x}_i \neq \overline{x}_j \text{ and } x_i^0 = x_j^0) \text{ or } (\overline{x}_i = \overline{x}_j \text{ and } x_i^0 \neq x_j^0), \\
	0, & \text{otherwise} .
	\end{cases}
	\end{equation*}
	By definition,  
	$\pr(R_{ij} = 1) = \pr(\overline{x}_i \neq \overline{x}_j, x_i^0 = x_j^0) + \pr(\overline{x}_i = \overline{x}_j, x_i^0 \neq x_j^0)$.

	Notice that $\overline{x}_i = \sign(\vecv_0 \cdot \vecr)\sign(\vecv_i \cdot \vecr)$.
	Now $\overline{x}_i \neq \overline{x}_j$ can be equivalently written as $\overline{x}_i \overline{x}_j = -1$ and, based on the statement before, is also equivalent to  
	$\sign(\vecv_i \cdot \vecr)\sign(\vecv_j \cdot \vecr)\sign(\vecv_0 \cdot \vecr)^2 = -1$. 
	Note that the term $\sign(\vecv_0 \cdot \vecr)^2$ is always equal to $1$ and
	can thus be dropped. 
	The first term of $\pr(R_{ij} = 1)$ can thus be formulated as 
	\begin{equation*}
		\begin{aligned}
			\pr(\overline{x}_i \neq \overline{x}_j, x_i^0x_j^0 = 1)
			&= \pr(\sign(\vecv_i \cdot \vecr) \sign(\vecv_j \cdot \vecr) = -1, x_i^0 x_j^0 = 1) \\
			&= \pr(\sign(x_i^0\vecv_i \cdot \vecr) \sign(x_j^0\vecv_j \cdot \vecr) = -1, x_i^0 x_j^0 = 1). \\
		\end{aligned}
	\end{equation*}

	Similarly, $\overline{x}_i \overline{x}_j = 1$ can be equivalently formulated as $\sign(\vecv_i \cdot \vecr)\sign(\vecv_j \cdot \vecr) = 1$. The second term of $\pr(R_{ij} = 1)$ can thus be formulated as
	\begin{equation*}
		\begin{aligned}
			\pr(\overline{x}_i = \overline{x}_j, x_i^0x_j^0 = -1) 
			&= \pr(\sign(\vecv_i \cdot \vecr) \sign(\vecv_j \cdot \vecr) = 1, x_i^0 x_j^0 = -1) \\
			&= \pr(\sign(x_i^0\vecv_i \cdot \vecr) \sign(x_j^0\vecv_j \cdot \vecr) = -1, x_i^0 x_j^0 = -1). \\
		\end{aligned}
	\end{equation*}

	After summing up the two terms and applying
	Lemma~\ref{lem:random-projection-1} in the last equality, we get that
	\begin{equation*}
		\begin{aligned}
			\pr(R_{ij} = 1) &= \pr(\overline{x}_i \neq \overline{x}_j, x_i^0x_j^0 = 1) + \pr(\overline{x}_i = \overline{x}_j, x_i^0x_j^0 = -1)\\
			&= \pr(\sign(\overline{x}_i\vecv_i \cdot \vecr) \sign(\overline{x}_j\vecv_j \cdot \vecr) = -1)  \\
			&= \frac{\arccos(x_i^0x_j^0 \vecv_i \cdot \vecv_j)}{\pi} .
		\end{aligned}
	\end{equation*}

	Recall that $\SelectSet = \{i \colon \overline{x}_i \neq x^0_i\}$ is the
	set of vertices that changed sides after the hyperplane rounding; thus,
	$\SelectSet$ is a random set and also $\abs{\SelectSet}$ is a random variable.
	Note that by definition, $\abs{\SelectSet}(n - \abs{\SelectSet}) = \sum_{i<j} R_{ij}$. Hence, by the linearity of expectation,  
	$\Exp{\abs{\SelectSet}(n - \abs{\SelectSet})} = \sum_{i<j} \Exp{R_{ij}}$.
	Now we obtain our lower bound as follows:

	\begin{equation}
		\begin{aligned}
			\Exp{\abs{\SelectSet}(n - \abs{\SelectSet})} 
			&\overset{(a)}{=} \sum_{i<j} \frac{\arccos(x_i^0x_j^0 \vecv_i \cdot \vecv_j)}{\pi} \\
			&\overset{(b)}{\geq} \alpha \sum_{i<j} \frac{1 - x_i^0x_j^0 \vecv_i \cdot \vecv_j}{2} \\
			&\overset{(c)}{\geq} \alpha(nk - k^2),
		\end{aligned}
	\end{equation}
 where holds by our derivation above, $(b)$ follows by Lemma~\ref{lem:alpha}, and $(c)$ holds by the constraints in the \sdp~\eqref{sdp:densest-subgraph}.
\end{proof}

\para{Proof of Lemma~\ref{lem:cut-bound-z}.} Now, we are ready to prove our main
result for the hyperplane rounding step of our algorithm.
\begin{proof}[Proof of Lemma~\ref{lem:cut-bound-z}]
We start the proof with a lower bound on $\Exp{Z}$.
By applying Lemma~\ref{lem:dense-bound-size-1-short}, Lemma~\ref{lem:dense-bound-size-2-short} and Lemma~\ref{lem:boundobj}, 
we get
$\Exp{Z} \geq \SdpRatioAvg + \gamma \eta \alpha + \gamma (\alpha (1 - \tau)^2 - 1 + 2\tau)$. 

Next, observe that $Z < 1 + \gamma \eta + \gamma$, as $\SdpRatioAvg < 1$, $\alpha <1$ and $\alpha (1 - \tau)^2 - 1 + 2\tau <1$. 
By Markov's inequality, we get $\pr(Z \leq (1-\epsilon)(\SdpRatioAvg + \gamma \eta \alpha + \gamma (\alpha (1 - \tau)^2 - 1 + 2\tau))) \leq \frac{1-p}{1-(1-\epsilon)p},$ for 
$p = \frac{\SdpRatioAvg + \gamma \eta \alpha + \gamma(\alpha(1-\tau)^2 - 1 + 2\tau)}{1 + \gamma \eta + \gamma}$.
More precisely, this inequality can obtained from
\begin{equation*}
	\begin{aligned}
	&\pr(Z \leq (1-\epsilon)(\SdpRatioAvg + \gamma \eta \alpha + \gamma(\alpha(1-\tau)^2 - 1 + 2\tau))) \\
	&\leq \frac{1 + \gamma \eta + \gamma - \Exp{Z}}{1 + \gamma \eta + \gamma - (1-\epsilon)(\SdpRatioAvg + \gamma \eta \alpha + \gamma(\alpha(1-\tau)^2 - 1 + 2\tau))} \\
	&\leq \frac{1 + \gamma \eta + \gamma - [\SdpRatioAvg + \gamma \eta \alpha + \gamma(\alpha(1-\tau)^2 - 1 + 2\tau)]}{1 + \gamma \eta + \gamma - (1-\epsilon)(\SdpRatioAvg + \gamma \eta \alpha + \gamma(\alpha(1-\tau)^2 - 1 + 2\tau))} \\
	&\leq \frac{1 - p}{1 - (1 - \epsilon)p}.
	\end{aligned}
\end{equation*}

Now observe that $p < 1$. Additionally, we derive a lower bound on $p$, and we
notice that $p>0.0005$ since
\begin{equation*}
	\begin{aligned}
p &= \frac{\SdpRatioAvg + \gamma \eta \alpha + \gamma(\alpha(1-\tau)^2 - 1 + 2\tau)}{1 + \gamma \eta + \gamma} \\
&\overset{(a)}{>} \frac{ \gamma \eta \alpha + \gamma(\alpha(1-\tau)^2 - 1 + 2\tau)}{1 + \gamma \eta + \gamma} \\
&\overset{(b)}{>} \frac{ \gamma \eta \alpha + \gamma(\alpha -1)}{1 + \gamma \eta + \gamma} \\
&\overset{(c)}{\geq} \frac{0.01\gamma}{1 + 1.01\gamma /\alpha} \\
&\overset{(d)}{>} 0.0005.
	\end{aligned}
\end{equation*}
Above, $(a)$ holds as $\SdpRatioAvg>0$, $(b)$ holds as $2\alpha \tau < 2\tau$, $(c)$ holds as the formula obtains the minimum value by setting $\eta = \frac{1 - \alpha + 0.01}{\alpha}$, $(d)$ holds as the denominator is at most $2$ and the numerator is at least $0.001$ by setting $\gamma = 0.1$.

As we repeat the hyperplane rounding $N$ times, and pick the largest value of $Z$ for fixed $\gamma$ and $\eta$, 
the probability that $Z \leq (1-\epsilon)(\beta + \gamma \eta \alpha + \gamma(\alpha(1-\tau)^2 - 1 + 2\tau))$ is bounded from above by 
\begin{equation*}
	\begin{aligned}
		\left[ \frac{1-p}{1-(1-\epsilon)p} \right]^N &\leq \left[1 - \frac{1}{1+ \frac{1-p}{\epsilon p}} \right]^{(1+\frac{1-p}{\epsilon p}) \frac{\epsilon p}{1- p + \epsilon p} N} \\
	     &\leq \exp\left[-\frac{\epsilon p}{1 - p + \epsilon p} N\right] .
	\end{aligned}
\end{equation*}

Thus if we choose $N \geq \frac{1 - p + \epsilon p}{\epsilon p} \log(\frac{1}{\epsilon})$,
 we can guarantee the above probability is upper bounded by $\epsilon$, 
 that is, we can guarantee $Z \geq (1 - \epsilon) (\SdpRatioAvg + \gamma \eta \alpha + \gamma (\alpha (1 - \tau)^2 - 1 + 2\tau))$ for a 
small positive value $\epsilon$, with probability at least $1 - \epsilon$.
\end{proof}

\subsection{Analysis of fixing $\SelectSet$}
\label{sec:fix-c}
{Next, we analyze the third step of the \sdp-based algorithm, which ensures that the cardinality constraint for $\SelectSet$ is met.}
Let the size of the selected set of nodes after the hyperplane rounding be
$\mu n$, i.e., we let $\mu$ be such that $\abs{\SelectSet} = \mu n$. 
Notice that $\mu$ is a random variable and $1 \geq \mu > 0$. 
Recall that $k\in \Omega(n)$ and $k = \tau n$, where $\tau$ is an input constant. 
We analyze the cost of fixing $\SelectSet$ to ensure that the cardinality
constraint is met and illustrate how the approximation ratios of \gpkc are
computed. 

Let us start from a direct implication from Lemma~\ref{lem:cut-bound-z}, where we set $\epsilon = \frac{1}{\Theta(n)}$, which is small enough relative to other variables. 
\begin{corollary}
    \label{cor:lambda-tau}
    Let $\OPT$ be the optimal value of the \gpkc problem. 
    After repeating the hyperplane rounding $N = \Theta(n \log n)$ times, with probability at least $1 - \frac{1}{\Theta(n)}$, there exists a solution $\overline{\SdpRnd}$ after the hyperplane rounding such that
    \begin{equation}\label{eq:lambda-tau}
        \frac{\overline{\SdpRnd}}{\OPT} \geq (\SdpRatioAvg + \gamma \eta \alpha + \gamma (\alpha (1 - \tau)^2 - 1 + 2\tau) - \gamma \eta (1-\mu)\frac{1}{1-\tau} - \gamma \mu (2\tau - \mu)) - \frac{1}{\Theta(n)}. 
    \end{equation}
\end{corollary}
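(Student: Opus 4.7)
The plan is to derive this corollary as a direct quantitative specialization of Lemma~\ref{lem:cut-bound-z} by choosing $\epsilon = 1/\Theta(n)$ and rewriting the random variable $Z$ in terms of the normalized quantities $\mu$ and $\tau$.

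First, I would substitute $|\SelectSet| = \mu n$ and $k = \tau n$ into the definition of $Z$ from Lemma~\ref{lem:cut-bound-z}. Directly,
$$\frac{n - |\SelectSet|}{n - k} = \frac{1-\mu}{1-\tau} \qquad\text{and}\qquad \frac{|\SelectSet|(2k - |\SelectSet|)}{n^2} = \mu(2\tau - \mu),$$
so that $Z = \SdpRatio + \gamma\eta\,\frac{1-\mu}{1-\tau} + \gamma\mu(2\tau-\mu)$. This brings the trade-off between the approximation ratio $\SdpRatio$ and the realized size $\mu$ into a single expression directly compatible with Lemma~\ref{lem:cut-bound-z}.

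Second, I would choose $\epsilon = 1/\Theta(n)$ in Lemma~\ref{lem:cut-bound-z}. Because the lemma's round-count is $N = \frac{1-p+\epsilon p}{\epsilon p}\log(1/\epsilon)$ and $p \geq 0.0005$ is a positive constant, this specialization yields $N = \Theta(n\log n)$. The lemma then guarantees that with probability at least $1 - 1/\Theta(n)$, at least one of the $N$ hyperplane roundings produces a $Z$ satisfying
$$Z \;\geq\; (1-\epsilon)\bigl[\SdpRatioAvg + \gamma\eta\alpha + \gamma(\alpha(1-\tau)^2 - 1 + 2\tau)\bigr].$$

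Third, I would rearrange the preceding inequality by transferring the two terms involving $\mu$ from $Z$ to the right-hand side, obtaining
$$\SdpRatio \;\geq\; \SdpRatioAvg + \gamma\eta\alpha + \gamma(\alpha(1-\tau)^2 - 1 + 2\tau) - \gamma\eta\,\tfrac{1-\mu}{1-\tau} - \gamma\mu(2\tau-\mu) - \epsilon\bigl[\SdpRatioAvg + \gamma\eta\alpha + \gamma(\alpha(1-\tau)^2 - 1 + 2\tau)\bigr].$$
Since $\SdpRatioAvg \leq 1$, $\alpha < 1$, and $\gamma,\eta$ lie in bounded ranges specified in Lemma~\ref{lem:cut-bound-z}, the bracketed coefficient of $\epsilon$ is $O(1)$, so the trailing error term is $\epsilon\cdot O(1) = 1/\Theta(n)$. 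Substituting $\SdpRatio = \overline{\SdpRnd}/\OPT$ delivers exactly the bound in Equation~\eqref{eq:lambda-tau}.

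There is no substantial obstacle in this proof: the argument is essentially bookkeeping, with the only nontrivial point being the verification that the multiplicative $(1-\epsilon)$ loss collapses into an additive $1/\Theta(n)$ error. This is immediate from the constant-factor upper bound on the expression in brackets.
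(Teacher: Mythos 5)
Your proposal is correct and follows essentially the same route as the paper, which presents the corollary as a direct implication of Lemma~\ref{lem:cut-bound-z} with $\epsilon = \frac{1}{\Theta(n)}$: substitute $\abs{\SelectSet}=\mu n$, $k=\tau n$, note $N=\Theta(n\log n)$, and fold the multiplicative $(1-\epsilon)$ loss into an additive $\frac{1}{\Theta(n)}$ term. One small nitpick: the range for $\eta$ in Lemma~\ref{lem:cut-bound-z} is not bounded above, but since $\gamma$ and $\eta$ are fixed constants independent of $n$, your conclusion that the bracketed factor is $O(1)$ still stands.
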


Recall that $\SdpRatio = \frac{\overline{\SdpRnd}}{\OPT}$.
Based on Corollary~\ref{cor:lambda-tau}, we can write $\lambda$ as a function of
$\mu$, $\gamma$ and $\eta$ with given $\tau$, and we denote this function by
$\SdpRatio_{\tau}(\mu, \gamma, \eta)$. Then we obtain that
\begin{equation}
	\begin{aligned}
	\frac{\overline{\SdpRnd}}{\OPT}&=\SdpRatio_{\tau}(\mu, \gamma, \eta). 
	\end{aligned}
\end{equation}

{
Let $\SdpRnd$ be the objective value of \gpkc after fixing $\SelectSet$ through the greedy procedure. 
Next, we outline a general \sdp-based algorithm framework for analyzing the approximation ratio of \gpkc.
}
For all the \gpkc problems, we note that the ratio between $\SdpRnd$ and $\overline{\SdpRnd}$ is a function of 
{$\tau = \frac{k}{n}$ and $\mu = \frac{\abs{\SelectSet}}{n}$. 
}
Let us introduce a function $\kappa_{\tau}(\mu)$ such that 
\begin{equation}
    \frac{\SdpRnd}{\overline{\SdpRnd}} = \kappa_{\tau}(\mu).
\end{equation}

Then the approximation ratio of \gpkc is given by
\begin{equation*}
     \frac{\SdpRnd}{\OPT} = \frac{\overline{\SdpRnd}}{\OPT} \cdot \frac{\SdpRnd}{\overline{\SdpRnd}} = \kappa_{\tau}(\mu) \SdpRatio_{\tau}(\mu, \gamma, \eta),
\end{equation*}
and the approximation ratio depends on the specific settings of $\gamma$ and $\eta$.  
For the worst case analysis, we can bound it by selecting the most appropriate $\gamma$ and $\eta$ by solving 
\begin{equation}
	\max_{\gamma \in [0.1, 5], \eta \in [\frac{1-\alpha + 0.01}{\alpha}, +\infty)} \min_{\mu \in [0 ,1]} \kappa_{\tau}(\mu) \SdpRatio_{\tau}(\mu, \gamma, \eta).
\end{equation}

\subsection{Pseudocode for {\sdp}-based algorithm for \kdense and \dskc (in Section~\ref{sec:general:sdp})}
\label{appendix:densest-subgraph:sdp:pseudocode}

The pseudocode for the {\sdp}-based algorithm for \kdense and \dskc, as discussed in Section~\ref{sec:general:sdp}, 
is illustrated in Algorithm~\ref{algo:sdp-k-dense}.

\begin{algorithm}[H]
\caption{An \sdp-based algorithm for \kdense and \dskc}
\label{algo:sdp-k-dense}
\begin{algorithmic}[1]
\Require{$(G(V, E, w), \OldSet, k)$, number of iterations $N$}
\State \textbf{Step 1: Solve the SDP.} Solve the \sdp~\eqref{sdp:densest-subgraph} to obtain $\vecv_0, \ldots, \vecv_n$
\For{$i = 1$ to $N$}
\State \textbf{Step 2: Hyperplane Rounding.} Obtain the indicator vector $\overline{\vecx}$
\State Sample a unit vector $\vecr$ uniformly randomly from $\mathcal{S}_n$
\For{$i = 1$ to $n$}
\If{$(\vecr \cdot \vecv_0)(\vecr \cdot \vecv_i) \geq 0$}
\State $\overline{x}_i \gets 1$
\Else
\State $\overline{x}_i \gets -1$
\EndIf
\EndFor
\State Set $\SelectSet \gets \{i \colon \overline{x}_i \neq x^0_i\}$ and
$\TempSet \gets \{i \colon \overline{x}_i = 1\}$
\State \textbf{Step 3: Fixing $\SelectSet$.}  Set $\FixSelectSet \gets \SelectSet$
\If{$\abs{\FixSelectSet} < k$}
\State Add $k - \abs{\FixSelectSet}$ arbitrary vertices from $V \setminus (\FixSelectSet \cup \TempSet)$ to $\FixSelectSet$
\ElsIf{$\abs{\FixSelectSet} > k$}
\If{$\abs{\FixSelectSet \setminus \TempSet} \geq \abs{\FixSelectSet} - k$}
\State Remove $\abs{\FixSelectSet} - k$ arbitrary vertices in $\FixSelectSet \setminus \TempSet$ from $\FixSelectSet$
\Else
\State Remove all vertices in $\FixSelectSet \setminus \TempSet$ from $C$
\State Greedily choose $\abs{\FixSelectSet \cap \TempSet} - k$ minimum weighted
	degree nodes in subgraph $G[\FixSelectSet \cap \TempSet]$ and them remove from $\FixSelectSet$
\EndIf
\EndIf
\EndFor
\State \Return{$\FixSelectSet$}
\end{algorithmic}
\end{algorithm}

\subsection{Proof of Theorem~\ref{thm:densest-subgraph-sdp}}
\label{sec:proof:thm-densest-subgraph-overview}

The pseudocode of our algorithm can be found in
Algorithm~\ref{algo:sdp-k-dense}. In the present section, we demonstrate that
Algorithm~\ref{algo:sdp-k-dense} provides a constant factor approximation to the \kdense problem. As implied by Lemma~\ref{lemma:k-densify-local-changes-connection}, this solution also offers a constant approximation for the \dskc problem under the theorem's specified condition.

\subsubsection{Fixing $\SelectSet$}
\label{sec:proof:subgraph-sdp:fix-c}
Notice that at the end of a round of hyperplane rounding, we get $\SelectSet = \{i \colon x_i \neq x_i^0\}$, and $\TempSet = \OldSet \symm \SelectSet = \{i \colon x_i = 1\}$. 
In addition, we use Corollary~\ref{lem:dense-bound-size-3} to bound $\Exp{{w(E[\TempSet])}} \geq \beta W_{\sdp}^*$. 
Since $\abs{\SelectSet}$ is not necessarily equal to $k$, the algorithm needs to alter $\SelectSet$. 
This is done by considering a new variable $\FixSelectSet=\SelectSet$ and
making changes to $\FixSelectSet$ until $\abs{\FixSelectSet} = k$. 

To give a lower bound on ${w(E[\OldSet \symm \FixSelectSet])}$, we notice that only the operation that greedily removes nodes in $\FixSelectSet \cap \TempSet$ from $\FixSelectSet$ decreases ${w(E[\OldSet \symm \FixSelectSet])}$. In the worst case scenario, all $\abs{\SelectSet} - k$ nodes are removed.
Lemma~\ref{lem:dense-move-all} bounds the decrease of ${w(E[\TempSet])}$ after this greedy procedure. 

\begin{restatable}{lemma}{densemoveall}
	\label{lem:dense-move-all}
 For \kdense, if $\mu > \tau$, $\SdpRnd \geq \frac{\tau^2}{\mu^2}(1 - \frac{1}{\Theta(n)}) \overline{\SdpRnd}$, in other words, the sum of edge weights of the subgraph is lower bounded by ${w(E[\TempSet])}\frac{\tau^2}{\mu^2}(1 - \frac{1}{\Theta(n)})$. 
 Otherwise, $\SdpRnd \geq \overline{\SdpRnd}$. 
\end{restatable}

\begin{proof}
Notice that if $\tau \geq \mu$, to satisfy the cardinality constraint, $\TempSet$ adds more nodes, thus edges, and the sum of edge weights increases. In this case, $\SdpRnd \geq \overline{\SdpRnd}$.

When $\tau < \mu$, we let $\FixTempSet = \OldSet \symm \FixSelectSet$. We allow $\FixTempSet$ to change correspondingly with $\FixSelectSet$. Initially, $\FixTempSet = \TempSet$, and we will demonstrate that, in the end, ${w(E[\FixTempSet])} \geq {w(E[\TempSet])}\frac{k^2}{\abs{\SelectSet}^2}(1 - \frac{1}{k})$.

We analyze three operations that modify $\FixSelectSet$. We first note that the
operation which adds nodes from $\vertices \setminus (\FixSelectSet \cup \TempSet)$ to $\FixSelectSet$ only occurs when $\abs{\SelectSet} < k$. This operation increases ${w(E[\FixTempSet])}$.

For scenarios where $\abs{\SelectSet}>k$, two operations exist. The operation
that removes nodes from $\FixSelectSet \setminus \TempSet$ further adds nodes to $\FixTempSet$, thereby increasing ${w(E[\FixTempSet])}$.
It remains to consider the operation, which greedily removes nodes from
$\FixSelectSet \cap \TempSet$ in $\FixSelectSet$ and which decreases
${w(E[\FixTempSet])}$.

We now define $\FixTempSet_0$ as the initial $\FixTempSet$ of this greedy procedure. As the previous two operations contribute additional nodes to $\FixTempSet$, we conclude that ${w(E[\FixTempSet_{0}])} \geq {w(E[\TempSet])}$. Similarly, we assign $\FixSelectSet_0$ as the initial $\FixSelectSet$ of the greedy procedure.

Let $\FixTempSet_t$ represent $\FixTempSet$ and $\FixSelectSet_t$ represent $\FixSelectSet$ at the $t$-th step of the greedy procedure. We observe that at the $t$-th step, the decrease in the sum of edge weights is at most $\frac{2{w(E[\FixTempSet_t])}}{\abs{\FixSelectSet_t}}$, as suggested by the pigeonhole principle. In the worst-case scenario, the sum of edge weights decreases $\abs{\FixSelectSet_0} - k$ times, until $\abs{\FixSelectSet} = k$. The value of ${w(E[\FixTempSet])}$ at this point is lower bounded by:

\begin{displaymath}
\begin{split}
&{w(E[\FixTempSet_{0}])} \left(1 - \frac{2}{\abs{\FixSelectSet_0}}\right)\left(1 - \frac{2}{\abs{\FixSelectSet_1}}\right) \ldots \left(1 - \frac{2}{k+1}\right) \\
&= {w(E[\FixTempSet_{0}])} \frac{(\abs{\FixSelectSet_0} -2)}{\abs{\FixSelectSet_0}} \frac{(\abs{\FixSelectSet_1} -2)}{\abs{\FixSelectSet_1}}\ldots \frac{k-1}{k+1} \\
&= {w(E[\FixTempSet_{0}])} \frac{k(k-1)}{\abs{\FixSelectSet_0}(\abs{\FixSelectSet_0} -1)}.
\end{split}
\end{displaymath}

Given that $\abs{\FixSelectSet_0} \leq \abs{\SelectSet}$ and ${w(E[\FixTempSet_{0}])} \geq {w(E[\TempSet])}$, the above expression is at least ${w(E[\TempSet])}\frac{k(k-1)}{\abs{\SelectSet}(\abs{\SelectSet} -1)} \geq {w(E[\TempSet])}\frac{k(k-1)}{\abs{\SelectSet}^2}$.
Notice that we assume $k \in \Omega(n)$, and thus $1 - \frac{1}{k} = 1 -
\frac{1}{\Theta(n)}$. This concludes our proof. 
\end{proof}

Recall that our goal is to compute the approximation ratio, which can be formulated as 
\begin{equation}
	\max_{\gamma \in [0.1, 5], \eta \in \left[\frac{1-\alpha + 0.01}{\alpha}, +\infty\right)} \min_{\mu \in [0 ,1]} \kappa_{\tau}(\mu) \SdpRatio_{\tau}(\mu, \gamma, \eta).
\end{equation}
Let us define $f_{\tau}(\mu, \gamma, \eta) = \kappa_{\tau}(\mu) \SdpRatio_{\tau}(\mu, \gamma, \eta)$.
Notice that for \kdense, according to the above lemma, when $\mu \leq \tau$, $\kappa_{\tau}(\mu) = 1$.
Hence, for the worst-case analysis, we only need to discuss the scenario when $\mu \in (\tau, 1]$. Below, we give more details on computing the approximation ratio.
We ignore the factor $(1 - \frac{1}{\Theta(n)})$ in our computation as it only introduces a small multiplicative error in our result.

\subsubsection{Computing the approximation ratio}
\label{sec:proof:subgraph-sdp:fix-c-lowerbound-compute}
We first find the stationary point of $f_{\tau}(\mu, \gamma, \eta)$ w.r.t.\ $\mu$ by solving 
\begin{equation}
	\begin{aligned}
		\frac{\partial f_{\tau}(\mu, \gamma, \eta)}{\partial \mu} = 0.
	\end{aligned}
\end{equation}

Let this stationary point be $\mu_0$. We notice that $\mu_0$ can be written as a function of $\gamma$, $\eta$ and $\tau$. 
Since $\mu_0$ should be in the domain $[\tau, 1]$, we introduce the following step function $g_{\tau}(\gamma, \eta)$ to 
tackle the situation when $\mu_0$ is invalid (chosen outside of the domain):
\begin{equation}
	\begin{aligned}
		g_{\tau}(\gamma, \eta) =
		\begin{cases}
			f_{\tau}(\mu_0, \gamma, \eta) & \text{for } \tau \leq \mu_0 \leq 1, \\ 
			1 & \text{otherwise}.
		\end{cases}
	\end{aligned}
\end{equation}

For any given $\gamma$ and $\eta$, the minimization of the function
$f_{\tau}(\mu, \gamma, \eta)$ w.r.t.\ $\mu$ can be written as the minimum of three constant functions. 
The two boundary cases when $\mu = \tau$ or $\mu = 1$ are given by 
$f_{\tau}(\mu_0, \gamma, \eta)$ and $f_{\tau}(1, \gamma, \eta)$, whereas
$f_{\tau}(\mu_0, \gamma, \eta)$ represents the case when $\mu \in [\tau, 1]$. 
Thus, 
$\min_{\mu \in [\tau, 1]}f_{\tau}(\mu, \gamma, \eta) = \min \left\{f_{\tau}(\tau, \gamma, \eta), f_{\tau}(1, \gamma, \eta), g_{\tau}(\gamma, \eta)\right\}$. 
Now we obtain:
\begin{equation*}
	\begin{aligned}
f(\tau) &= \max_{\gamma, \eta} \min_{\mu \in [\tau, 1]}f_{\tau}(\mu, \gamma, \eta)  \\
&= \max_{\gamma, \eta} \min \left\{f_{\tau}(\tau, \gamma, \eta), f_{\tau}(1, \gamma, \eta), g_{\tau}(\gamma, \eta)\right\}.
	\end{aligned}
\end{equation*}

Then for any $\tau \in (0, \frac{1}{2}]$, we numerically select $\gamma$ and $\eta$ to maximize $f_{\tau}(\cdot)$ and $g_{\tau}(\cdot)$, and we obtain Figure~\ref{fig:dense-ratio}. 

\begin{figure}
\centering
\resizebox{0.6\columnwidth}{!}{%
\inputtikz{plots/dense-ratio}
}
\caption{Our approximation ratios for $\kdense$ where $\tau = \frac{k}{n}$. We compare the approximation ratio $f(\tau)$ with $\tau$ and $\tau^2$.
Here, $f(\tau)$ is the worst case approximation ratio of our \sdp-based
algorithm for the \kdense problem. The approximation ratio for \dskc is $f(\tau) \frac{1-c}{1+c}$.}
\label{fig:dense-ratio}
\end{figure}

In particular, if we set $\gamma = 0.920$ and $\eta = 1.65$, we get $f(\tau = 0.5) > 0.58$. 
Note if we set the same parameter as~\cite{DBLP:journals/jal/feigel01} ($\gamma
	= 3.87$ and $\eta = \frac{0.65}{3.87}$), we obtain the same approximation
ratio as \cite{DBLP:journals/jal/feigel01} obtained for \dks with $f(\tau = 0.5) = 0.517$. 

\subsection{Pseudocode for {\sdp}-based algorithm for \maxcutkc (in Section~\ref{sec:general:sdp})}
\label{appendix:maxcutkc:sdp:pseudocode}

The pseudocode for the {\sdp}-based algorithm for \maxcutkc, as discussed in Section~\ref{sec:general:sdp}, 
is illustrated in Algorithm~\ref{algo:maxcutkc-sdp}.

\begin{algorithm}[H]
\caption{An \sdp-based algorithm for \maxcutkc}
\label{algo:maxcutkc-sdp}
\begin{algorithmic}[1]
\Require{$(G(V, E, w), \OldSet, k)$, number of iterations $N$}
\State \textbf{Step 1: Solve SDP.} Solve the \sdp~\eqref{sdp:densest-subgraph} to obtain the solution $\vecv_0, \ldots, \vecv_n$
\For{$i = 1$ to $N$}
\State \textbf{Step 2: Hyperplane Rounding.} Obtain the indicator vector $\overline{\vecx}$
\State Sample a unit vector $\vecr$ uniformly randomly from $\mathcal{S}_n$
\ForAll{$i \in \{0, \ldots, n\}$}
    \If{$(\vecr \cdot \vecv_0)(\vecr \cdot \vecv_i) \geq 0$}
        \State $\overline{x}_i \leftarrow 1$
    \Else
        \State $\overline{x}_i \leftarrow -1$
    \EndIf
\EndFor
\State Set $\SelectSet \leftarrow \{i \colon \overline{x}_i \neq x^0_i\}$
\State \textbf{Step three, Fix $\SelectSet$:} Set $\FixSelectSet \gets \SelectSet$ 
\While{$\abs{\FixSelectSet} \neq k$}
    \If{$\abs{\FixSelectSet} < k$}
        \State $u^* \leftarrow \arg\max_{u \in V \setminus \FixSelectSet} \cut{\OldSet \symm (\FixSelectSet \cup \{u\})}$
        \State $\FixSelectSet \leftarrow \FixSelectSet \cup \{u^*\}$
    \ElsIf{$\abs{\FixSelectSet} > k$}
        \State $u^* \leftarrow \arg\max_{u \in \FixSelectSet} \cut{\OldSet \symm (\FixSelectSet \setminus \{u\})}$
        \State $\FixSelectSet \leftarrow \FixSelectSet \setminus \{u^*\}$
    \EndIf
\EndWhile
\EndFor 
\State \Return{$\FixSelectSet$}
\end{algorithmic}
\end{algorithm}

\subsection{Proof of Theorem~\ref{thm:max-cut-sdp}}
\label{appendix:mc-proofs:sdp}

The pseudocode for our algorithm can be found in
Algorithm~\ref{algo:maxcutkc-sdp}. In the present section, we demonstrate that
Algorithm~\ref{algo:maxcutkc-sdp} provides a constant factor approximation to
the \maxcutkc problem.
Note that the first two steps (i.e., solving the SDP and hyperplane rounding) are identical to those of Algorithm~\ref{algo:sdp-k-dense}. 
The third step (i.e., fixing $\SelectSet$) is identical to the third step of Algorithm~\ref{algo:maxcutkc-blackbox}.

\subsubsection{Fixing $\SelectSet$}
\label{appendix:mc-proofs:sdp:fix-c}
Again we let $\abs{\SelectSet} = \mu n$ and $k = \tau n$. 
We notice the greedy procedure (i.e., fixing $\SelectSet$) that appears in Algorithm~\ref{algo:maxcutkc-sdp} is the same as the counterpart that appears in Algorithm~\ref{algo:maxcutkc-blackbox}.
Thus, we can directly apply Lemma~\ref{lem:cut-blackbox-deduction} to derive a lower bound on the cut for fixing $\SelectSet$, as in Lemma~\ref{lemma:cut-move-all}.

\begin{restatable}{lemma}{cutmoveall}
	\label{lemma:cut-move-all}
 For \maxcutkc, $\SdpRnd \geq \min \{\frac{\tau^2}{\mu^2}(1 - \frac{1}{\Theta(n)}),  \frac{(1- \tau )^2}{(1 - \mu )^2} (1 - \frac{1}{\Theta(n)})\} \overline{\SdpRnd}$.
	In particular, when $\mu > \tau$, the cut is lower bounded by 
	$\frac{\tau^2}{\mu^2}(1 - \frac{1}{\Theta(n)})\cut{\TempSet}$;
	when $\mu < \tau$, the cut is lower bounded by 
	$\frac{(1- \tau )^2}{(1 - \mu )^2} (1 - \frac{1}{\Theta(n)})\cut{\TempSet}$.
\end{restatable}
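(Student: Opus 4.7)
The plan is to follow the Feige--Langberg blueprint for \dks~\citep{DBLP:journals/jal/feigel01} and mirror the argument used for the analogous densest-subgraph statement (Lemma~\ref{lem:dense-move-all}), adapting it to the cut objective. I split into two regimes: $\mu > \tau$, where the fix step must discard $(\mu - \tau)n$ vertices from $\SelectSet$; and $\mu < \tau$, where it must add $(\tau - \mu)n$ vertices from $V \setminus \SelectSet$. The two are symmetric under the substitution $\SelectSet \leftrightarrow V \setminus \SelectSet$. Rather than analyzing a deterministic greedy fix directly, I bound the expected cut produced by choosing the set $T$ of moved vertices uniformly at random from the appropriate pool, and note that the actual fix, being free to pick $T$ in its favor, performs at least as well as this expectation.

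Focusing on $\mu > \tau$, the key structural fact I need is that an edge $(i,j) \in \cut{\TempSet}$ remains in $\cut{\TempSet'}$ if and only if both endpoints have the same ``flip'' status with respect to $T = \SelectSet \setminus \SelectSet'$, that is, either both of $i,j$ lie in $T$ or neither does. This is immediate from writing $\overline{x}_i \overline{x}_j = x^0_i x^0_j \bar{\sigma}_i \bar{\sigma}_j$ and $x'_i x'_j = x^0_i x^0_j \sigma_i \sigma_j$ in the notation of Section~\ref{sec:hyperplane}, and noting that $\sigma_i / \bar{\sigma}_i = -1$ precisely when $i \in T$. I then compute the hypergeometric retention probability of each edge in $\cut{\TempSet}$ by case analysis on the overlap of $\{i,j\}$ with $\SelectSet$: edges with both endpoints outside $\SelectSet$ retain with probability $1$; edges with exactly one endpoint inside retain with probability $\tau/\mu$; and edges with both endpoints inside retain with probability $\binom{(\mu-\tau)n}{2}/\binom{\mu n}{2} + \binom{\tau n}{2}/\binom{\mu n}{2}$, which is bounded below by $(\tau^2/\mu^2)(1 - O(1/n))$. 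An elementary check, applying the identity $2x^2 - 2x + 1 - x^2 = (x-1)^2 \geq 0$ with $x = \tau/\mu$, together with $\tau/\mu \geq (\tau/\mu)^2$, shows that the minimum of the three case probabilities is $(\tau^2/\mu^2)(1 - O(1/n))$. Linearity of expectation then gives $\E[\cut{\TempSet'}] \geq (\tau^2/\mu^2)(1 - 1/\Theta(n))\cut{\TempSet}$, and the fix procedure attains at least this value deterministically.

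The $\mu < \tau$ regime is completely analogous: the fix chooses a uniformly random $A \subseteq V \setminus \SelectSet$ with $|A| = (\tau - \mu)n$ and sets $\SelectSet' = \SelectSet \cup A$. Repeating the three-case edge-retention calculation with $\SelectSet$ replaced by $V \setminus \SelectSet$ (so $\mu$ is replaced by $1 - \mu$) yields a worst-case retention probability of $((1-\tau)/(1-\mu))^2(1 - O(1/n))$, which is the second branch of the $\min$ in the statement. The main obstacle is the ``both endpoints in $\SelectSet$'' case: its retention probability equals $(\tau/\mu)^2 + ((\mu-\tau)/\mu)^2$ in the continuous limit, which dips near $1/2$ when $\tau \approx \mu/2$, and one must certify uniformly that this quantity is never below $\tau^2/\mu^2$. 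The $(x-1)^2 \geq 0$ identity resolves this in one line. The $(1 - 1/\Theta(n))$ slack absorbs the integer correction $\tau n(\tau n - 1)/[\mu n(\mu n - 1)]$ relative to its continuous limit $\tau^2/\mu^2$.
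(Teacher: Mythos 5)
Your combinatorial setup is correct: the observation that an edge of $\cut{\TempSet}$ survives the fix exactly when both endpoints have the same flip-status with respect to the moved set $T$, the three-way case split on $|\{i,j\}\cap\SelectSet|$, the hypergeometric retention probabilities, and the check that the ``both inside'' case is the worst one all check out, and they reproduce the same numerical factor $\tau n(\tau n-1)/[\mu n(\mu n-1)]$ that the paper obtains. But the route you take to connect this to the lemma has a real gap. You bound $\E_T\bigl[\cut{\TempSet'}\bigr]$ over a uniformly random $T$ and then assert that ``the actual fix, being free to pick $T$ in its favor, performs at least as well as this expectation.'' The actual fix in the paper (Algorithm~\ref{algo:maxcutkc-sdp}, Step~3) is a one-at-a-time greedy: at each iteration it removes (or adds) the single vertex that maximizes the resulting cut. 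That is not the same as choosing the globally best $T$, and a greedy sequence of locally optimal choices does \emph{not} dominate the expectation over a uniformly random $T$ in general. The cut objective has non-trivial pairwise interaction terms --- flipping both endpoints of a cut edge back leaves that edge's contribution unchanged, whereas flipping either one alone loses it --- so a greedy first removal can be optimal in isolation yet leave a strictly worse landscape than a random one; a three-vertex example with two adjacent vertices in $\SelectSet$ sharing a heavy cut edge already exhibits greedy falling below the random average.

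The paper's proof (by reference to Lemma~\ref{lem:cut-blackbox-deduction}) sidesteps this entirely: at each greedy step, the removed vertex is at least as good as the minimum-contribution vertex, which by a pigeonhole bound $\sum_{u\in\SelectSet}\cutnode{u}{\OldSet\symm\SelectSet}\le 2M$ loses at most $2M_i/|\SelectSet_i|$; telescoping the factors $(1-2/|\SelectSet_i|)$ gives $\frac{\tau n(\tau n - 1)}{\mu n(\mu n - 1)}$ deterministically, directly for the greedy's output. That is the step you are missing. To repair your proof you should replace the ``greedy dominates the random expectation'' step with the per-step pigeonhole and telescoping product, or else supply a separate argument (which I do not think exists for the cut function) that the sequential greedy never falls below the one-shot random expectation. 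The rest of your computation can then be discarded or kept as an alternative derivation of the same numerical factor. The symmetric treatment of the $\mu<\tau$ case has the identical issue.
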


\begin{proof}
	The proof is identical to Lemma~\ref{lem:cut-blackbox-deduction}. 
\end{proof}

\subsubsection{Approximation ratio}
\label{appendix:mc-proofs:sdp:approx}
Let $k = \tau n$ and let $\abs{\SelectSet} = \mu n$. 
{
Again, we analyze the approximation ratio of \maxcutkc when $k \leq n/2$.
By symmetry, we can obtain the approximation ratio of \maxcutkc when $k > n/2$.
}

We start with the following corollary of Lemma~\ref{lem:cut-bound-z}.
Notice that the corollary slightly differs from Lemma~\ref{lem:cut-bound-z} in
terms of introducing different random variables $Z_1$ and $Z_2$ {for the two cases $\abs{\SelectSet} \geq k$ and $\abs{\SelectSet} < k$, respectively.} 
{This design helps} to obtain better approximation ratios for \maxcutkc {by selecting better parameters for the two cases separately}. 

\begin{corollary}
	\label{cor:cut-bound-z-two-case}
	{Consider the constants $\gamma_1, \gamma_2 \in [0.1, 5]$ and $\eta_1, \eta_2 \in \left[\frac{1-\alpha+0.01}{\alpha}, +\infty\right)$.} 
	Define the random variables $Z_1$ and $Z_2$ as follows:
	\begin{displaymath}
		\begin{cases}
			Z_1 = \frac{\cut{\TempSet}}{W^*_{SDP}} + \gamma_1 \eta_1 \frac{n-\abs{\SelectSet}}{n-k} + \gamma_1 \frac{\abs{\SelectSet}(2k - \abs{\SelectSet})}{n^2}, & \text{if } \abs{\SelectSet} > k, \\ 
			Z_2 = \frac{\cut{\TempSet}}{W^*_{SDP}} + \gamma_2 \eta_2 \frac{\abs{\SelectSet}}{k} + \gamma_2 \frac{\abs{\SelectSet}(2k - \abs{\SelectSet})}{n^2}, & \text{if } \abs{\SelectSet} < k.
		\end{cases}
	\end{displaymath}
	{If the hyperplane rounding is repeated} at least $N = 2\cdot\frac{1 - p + \epsilon p}{\epsilon p} \log(\frac{1}{\epsilon})$ times, where $p$ is a constant with a lower bound of $0.0005$, then one of the following holds: either $\abs{\SelectSet} \geq k$ {occurs} at least $\frac{N}{2}$ times, and with probability at least $1 - \epsilon$, $Z_1 \geq (1-\epsilon) [\alpha + \gamma_1 \eta_1 \alpha + \gamma_1 (\alpha (1 - \tau)^2 - 1 + 2\tau)]$; or $\abs{\SelectSet} < k$ {occurs} at least $\frac{N}{2}$ times, and with probability at least $1 - \epsilon$, $Z_2 \geq (1-\epsilon) [\alpha + \gamma_2 \eta_2 \alpha + \gamma_2 (\alpha (1 - \tau)^2 - 1 + 2\tau)]$.
\end{corollary}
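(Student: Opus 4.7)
The plan is to closely mirror the proof of Lemma~\ref{lem:cut-bound-z}, with a pigeonhole dichotomy that splits the analysis according to whether $\abs{\SelectSet} \geq k$ or $\abs{\SelectSet} < k$. The two random variables $Z_1, Z_2$ are tailored so that each one admits the \emph{same} lower-bound calculation in its own regime, and thus together they cover all outcomes of the hyperplane rounding.

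First I would lower-bound the unconditional expectations $\Exp{Z_1}$ and $\Exp{Z_2}$ term by term. The objective-function term contributes $\SdpRatioAvg \geq \alpha$ by Corollary~\ref{lem:cut-bound-obj}. For the middle term of $Z_1$, the \emph{upper} bound $\Exp{\abs{\SelectSet}} \leq (1-\alpha)n + \alpha k$ from Lemma~\ref{lem:dense-bound-size-1-short} yields $\frac{\Exp{n-\abs{\SelectSet}}}{n-k} \geq \alpha$, whereas for the middle term of $Z_2$, the \emph{lower} bound $\Exp{\abs{\SelectSet}} \geq \alpha k$ from the same lemma yields $\frac{\Exp{\abs{\SelectSet}}}{k} \geq \alpha$; this is precisely why the two cases require two different random variables. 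The third term is common: the calculation from the proof of Lemma~\ref{lem:cut-bound-z}, which combines Lemma~\ref{lem:dense-bound-size-2-short} with Lemma~\ref{lem:dense-bound-size-1-short}, gives $\frac{\Exp{\abs{\SelectSet}(2k-\abs{\SelectSet})}}{n^2} \geq \alpha(1-\tau)^2 - 1 + 2\tau$. Summing produces the target $\Exp{Z_i} \geq \alpha + \gamma_i \eta_i \alpha + \gamma_i(\alpha(1-\tau)^2 - 1 + 2\tau) =: B_i$ for both $i$.

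Next, using the deterministic upper bound $Z_i \leq 1 + \gamma_i \eta_i + \gamma_i$, I apply Markov's inequality to $1 + \gamma_i \eta_i + \gamma_i - Z_i$ in the same way as in the proof of Lemma~\ref{lem:cut-bound-z}, obtaining for a single round $\pr(Z_i \leq (1-\epsilon) B_i) \leq \frac{1-p_i}{1-(1-\epsilon)p_i}$, where $p_i = \Exp{Z_i}/(1 + \gamma_i \eta_i + \gamma_i)$. The verbatim chain of inequalities from Lemma~\ref{lem:cut-bound-z} --- using only the stated ranges $\gamma_i \in [0.1, 5]$, $\eta_i \in [\tfrac{1-\alpha+0.01}{\alpha}, \infty)$ and the fact that $\SdpRatioAvg > 0$ --- certifies $p_i > 0.0005$. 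I then combine this with a pigeonhole argument over the $N = 2 N'$ rounds, with $N' = \frac{1-p+\epsilon p}{\epsilon p}\log(1/\epsilon)$: at least $N'$ of the rounds fall into one of the two cases $\{\abs{\SelectSet} \geq k\}$ or $\{\abs{\SelectSet} < k\}$. In whichever case dominates, the product of the single-round failure probabilities over those $\geq N'$ rounds is at most $\bigl(\tfrac{1-p_i}{1-(1-\epsilon)p_i}\bigr)^{N'} \leq \epsilon$ by the choice of $N'$, which yields the claimed high-probability bound.

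The main obstacle I anticipate is the conditioning subtlety: the $\geq N'$ rounds lying in the dominant case are i.i.d.\ samples from the \emph{conditional} distribution, whereas the bound $\Exp{Z_i} \geq B_i$ was derived unconditionally. I plan to address this either (a) by verifying that the bounds in Lemmas~\ref{lem:dense-bound-size-1-short} and~\ref{lem:dense-bound-size-2-short} remain valid (up to lower-order terms) when conditioned on the case --- note that the term designed to ``kick in'' for that case is bounded by $1$ pointwise under the conditioning, which is exactly what Markov needs --- or (b) by absorbing any loss from conditioning by increasing $N$ by a further constant factor, which does not change the asymptotics. Either route preserves the structure and constants in the statement.
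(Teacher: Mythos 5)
The paper states this corollary without a separate proof, presenting it as a direct analog of Lemma~\ref{lem:cut-bound-z} (the only change the paper flags is the split into two random variables tailored to the two size regimes). Your reconstruction matches that intent exactly, and your term-by-term accounting is correct: for the middle term of $Z_1$ the upper bound $\Exp{|\SelectSet|} \le (1-\alpha)n+\alpha k$ from Lemma~\ref{lem:dense-bound-size-1-short} yields $\ge \gamma_1\eta_1\alpha$, while for $Z_2$ the lower bound $\Exp{|\SelectSet|}\ge\alpha k$ yields $\ge\gamma_2\eta_2\alpha$ --- this is precisely the reason the corollary needs two variables where Lemma~\ref{lem:cut-bound-z} needs one, and you identify it explicitly. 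The common third term and the Markov-plus-repetition machinery then carry over verbatim.

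Your closing concern about conditioning is the one place I would push back, but on both its substance and your proposed fixes. If you read the statement as ``in the dominant regime, a good round exists'' then you do need a \emph{conditional} Markov bound, and your option~(a) does not go through: $Z_1$'s middle and third terms are both decreasing in $|\SelectSet|$ on the event $\{|\SelectSet|\ge k\}$, so conditioning on that event should \emph{lower} $\Exp{Z_1 \mid |\SelectSet|\ge k}$ relative to $\Exp{Z_1}$, which is the wrong direction; option~(b) (inflate $N$) cannot rescue a conditional expectation that may drop below $(1-\epsilon)B_1$. However, the corollary is most naturally read (and this is what Lemma~\ref{lem:cut-bound-z} itself delivers) as two \emph{unconditional} Markov-and-repetition bounds, one for $Z_1$ and one for $Z_2$, each over all $N$ rounds, with the pigeonhole ``one case occurs $\ge N/2$ times'' being a separate deterministic observation. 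Under that reading your steps 1--2 already suffice and the conditioning issue dissolves; you should drop the third paragraph rather than patch it. One point you might additionally verify, since it is glossed over in Lemma~\ref{lem:cut-bound-z} as well: the pointwise bound $Z_i \le 1+\gamma_i\eta_i+\gamma_i$ that Markov needs relies on $\SdpRatio \le 1$ and on the middle term being $\le \gamma_i\eta_i$, the latter of which only holds in $Z_i$'s own regime of $|\SelectSet|$ --- this is exactly where the case split buys you the needed boundedness.
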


Next, we use Corollary~\ref{cor:cut-bound-z-two-case} to bound our approximation
ratio by solving
\begin{equation}
	\max_{\gamma \in [0.1, 5], \eta \in \left[\frac{1-\alpha + 0.01}{\alpha}, +\infty\right)} \min_{\mu \in [0 , 1]} \kappa_{\tau}(\mu) \SdpRatio_{\tau}(\mu, \gamma, \eta).
\end{equation}
 

Notice that according to Lemma~\ref{lemma:cut-move-all}, $\kappa_{\tau}(\mu) = \frac{\tau^2}{\mu^2} (1 - \frac{1}{\Theta(n)})$ when {$\abs{\SelectSet} \geq k$ i.e., $\mu \geq \tau$}, and $\kappa_{\tau}(\mu) = \frac{(1- \tau)^2}{(1-\mu)^2}(1 - \frac{1}{\Theta(n)})$ when {$\abs{\SelectSet} < k$ i.e., $\mu < \tau$}. 
According to Corollary~\ref{cor:cut-bound-z-two-case}, 
\begin{equation}
\lambda_{\tau}(\mu, \gamma_1, \eta_1) \geq \alpha + \gamma_1 \eta_1 \alpha + \gamma_1 (\alpha (1 - \tau)^2 - 1 + 2\tau) - \gamma_1 \eta_1 (1-\mu)\frac{1}{1-\tau} - \gamma_1 \mu (2\tau - \mu) - \frac{1}{\Theta(n)},
\end{equation}
when $\mu \geq \tau$;
and 
\begin{equation}
\lambda_{\tau}(\mu, \gamma_2, \eta_2) \geq \alpha + \gamma_2 \eta_2 \alpha + \gamma_2 (\alpha (1 - \tau)^2 - 1 + 2\tau) - \gamma_2 \eta_2 \frac{\mu}{\tau} - \gamma_2 \mu (2\tau - \mu) - \frac{1}{\Theta(n)},
\end{equation}
when $\mu < \tau$. 

We ignore the factor $(1 - \frac{1}{\Theta(n)})$ in our computation as it only introduces a small multiplicative error in our result.

\emph{Case 1:} $\abs{\SelectSet} \geq k$. This implies that $1 \geq \mu \geq
\tau$. In this domain of $\mu$, let us define the function $f^1_{\tau}(\mu, \gamma_1, \eta_1) = \kappa_{\tau}(\mu)\lambda_{\tau}(\mu, \gamma_1, \eta_1)$. 
The approximation ratio we get from this case is 
$$f^1(\tau) = \max_{\gamma_1 \in [0.1, 5], \eta_1 \in \left[\frac{1-\alpha+0.01}{\alpha}, +\infty\right)} \min_{\mu \in [\tau, 1]} f^1_{\tau}(\mu, \gamma_1, \eta_1).$$

\emph{Case 2:} $\abs{\SelectSet} <k$. This implies $\tau > \mu > 0$. In this
domain of $\mu$, let us define the function $f^2_{\tau}(\mu, \gamma_2, \eta_2) = \kappa_{\tau}(\mu)\lambda_{\tau}(\mu, \gamma_2, \eta_2)$. 
The approximation ratio we get from this case is 
$$f^2(\tau) =\max_{\gamma_2 \in [0.1, 5], \eta_2 \in \left[\frac{1-\alpha+0.01}{\alpha}, +\infty\right)} \min_{\mu \in (0, \tau]} f^2_{\tau}(\mu, \gamma_2, \eta_2).$$ 

\emph{Combining Case 1 and Case 2}, consider $1 \geq \mu > 0$. 
We define the function 
\begin{equation*}
	\begin{aligned}
&h(\tau) = \min\left\{f^1(\tau), f^2(\tau) \right\},
	\end{aligned}
\end{equation*}
and note that $h(\tau)$ is the approximation ratio we get in the end. We plot
$h(\tau)$ in Figure~\ref{fig:cut-ratio}.

\begin{figure}
\centering
\resizebox{0.6\columnwidth}{!}{%
\inputtikz{plots/cut-ratio}
}
\caption{Our approximation ratios for \maxcutkc, where $\tau = \frac{k}{n}$. We
compare the approximation ratio $h(\tau)$ of our \sdp-based algorithm with $\min\{\tau, 1-\tau\}$ and $\min\{\tau^2, (1-\tau)^2\}$.}
\label{fig:cut-ratio}
\end{figure}

\clearpage 

\clearpage

\section{Omitted content and proofs of Section~\ref{sec:max-cut:sos}}
\label{sec:sos_detailed}
In this section, we elaborate on the sum-of-squares (SoS) algorithm from Section \ref{sec:max-cut:sos}. 
In particular, we apply the method proposed by ~\cite{DBLP:conf/soda/RaghavendraT12}.
In detail, we will explain how to relax \maxcutkc into the Lasserre \sdp hierarchy (or, Lasserre hierarchy), how to obtain globally uncorrelated solutions to the Lasserre hierarchy, and how to apply a rounding scheme to the solutions so that the cardinality constraint is almost satisfied. 

For a detailed description and properties of the Lasserre \sdp hierarchy, we refer the readers to~\citep{DBLP:journals/siamjo/Lasserre02,rothvoss2013lasserre}.
For more technical details in this section, we refer the readers to  
\citep{DBLP:journals/talg/AustrinBG16,DBLP:conf/soda/RaghavendraT12,Chlamtac2012}.

The remainder of this section is structured as follows.
Section~\ref{subsection:lassere_hierarchy:preliminary} briefly introduces properties of the Lasserre hierarchy and mutual information to be applied in the following analysis. 
Section~\ref{appendix:lasserre:program} formulates a degree $d$ Lasserre hierarchy program for \maxcutkc.
Section~\ref{appendix:lasserre:decorrelate} describes how to obtain a \emph{decorrelated} solution via conditioning.
Section~\ref{appendix:lesserre:rounding}
formulates the degree $2$ Lasserre hierarchy and the hyperplane rounding algorithm we apply.
Section~\ref{appendix:lasserre:roundingcombined} proves
Lemma~\ref{lem:bound-variance} that we presented in the main content.

\subsection{Preliminaries}
\label{subsection:lassere_hierarchy:preliminary}

\subsubsection{Notation}
\label{subsection:lassere_hierarchy:notations}
Let $G=(V,E,W)$ be an undirected weighted graph,
and let $\vecx^0 \in \left \{-1,1\right\}^{V}$ 
encode the initial partition, i.e., $(\OldSet, \OldComSet)$. 
The $i$-th entry of $\vecx^0$ is denoted by $x^0_i$.
Let $S \subseteq V$, and $\mu_{S}$ be a (pseudo)-distribution on $\{-1, 1\}^S$. We will elaborate on its properties in Section~\ref{subsection:lassere_hierarchy_properties}.
Let $\vecx \in \left \{-1, 1\right\}^{V}$ be a random vector that encodes a random partition.
We use $\vecx_{S}$ to indicate the partition constrained on $S$.
Moreover, for $\xi \in \{-1, 1\}^S$, we denote by $\Pr_{\mu_{S}}(\vecx_{S} = \xi)$ the probability of assigning $\vecx_{S}$ with $\xi$ under the distribution $\mu_{S}$.

Let $k \leq \abs{V}$ be the number of refinements and $k = \tau n$.
We let $\varsigma=(n-2k)/n, \varsigma\in (-1,1)$. 
Let us fix $\varepsilon>0$ to be a small constant, and let $\delta=(\varepsilon/3)^{96}$. 
Let $d'= \lceil 1+ \frac{1}{\delta} \rceil $. We will construct a degree $d=d'+2$ Lasserre hierarchy. 

\subsubsection{Properties of the Lasserre hierarchy} 
\label{subsection:lassere_hierarchy_properties}
\para{Consistent local distribution.} 
Any Lasserre hierarchy configuration can be written in terms of local
distributions \citep{DBLP:journals/siamjo/Lasserre02}. In particular, in the
degree $d$ Lasserre hierarchy, for each set $S \subseteq V, |S| \leq d$, there
is a distribution $\mu_S$ on $\left \{-1,1\right\}^S$. Furthermore, Lasserre
hierarchy ensures that for any two such
probability distributions, one on $\{-1, 1\}^S$, denoted as $\mu_S$, and one on
$\{-1, 1\}^T$, denoted as $\mu'_T$, their marginal distribution on $\{-1, 1\}^{S \cap T}$ is the same: $\mu_{S \cap T} = \mu'_{S \cap T}$. 
In other words, local probability distributions are consistent. 

\para{Semi-definiteness property.}
Local distributions of the Lasserre hierarchy also satisfy positive semi-definiteness properties.
In the degree $d$ Lasserre hierarchy, for each $S, T \subseteq V, |S \cup T| \leq d,$ and $\xi_1 \in \left \{-1,1\right\}^S,\xi_2 \in \left \{-1,1\right\}^T$, there exist vectors $\vecv_{S,\xi_1},\vecv_{T,\xi_2}$ such that 
\begin{equation*}
\Pr_{\mu_{S \cup T}} (\vecx_S = \xi_1, \vecx_T = \xi_2) =  \vecv_{S,\xi_1} \cdot \vecv_{T,\xi_2}, 
\end{equation*}
where $\vecx_T,\vecx_S$ are random variables taking values in $\left
\{-1,1\right\}^S, \left \{-1,1\right\}^T$, respectively, and $\cdot$ denotes the
dot product of two vectors. 

\para{Conditioning of assignments.}
One property that we will use in our algorithm is that the local probability distributions can be conditioned.
In particular, any event $\vecx_S = \zeta$
can be conditioned on an 
an event $\vecx_U = \xi$ given that $\abs{S \cup U} \leq d$ and $\Pr_{\mu_{S \cup U}}(\vecx_U = \xi) \neq 0$. 

After the conditioning, we obtain a new 
Lasserre hierarchy of degree $d-|U|$ with local distributions $\left \{\bar{\mu}_S\right\}_{S \subseteq V,|S| \leq d-|U|}$.
In particular, 
\begin{equation*}
\begin{split}
	\Pr_{\bar{\mu}_{S}}\left( \vecx_S = \zeta \right) 
 = \Pr_{\mu_{S \cup U }}  \left( \vecx_S = \zeta \mid \vecx_U = \xi\right)
 = \frac{\vecv_{S, \zeta} \cdot \vecv_{U, \xi}}{\vecv_{U, \xi} \cdot \vecv_{U, \xi}}.
\end{split}
\end{equation*}

\subsubsection{Mutual information}
\label{subsection:information_theory}
Next, we illustrate how to use conditioning to reduce the correlation between fractional solutions of the vertices, allowing us to round values of different vertices \emph{almost independently on average}. 
We need to use mutual information to measure the correlation between vertices.
\begin{definition}
	Let $x_i,x_j$ be two random variables taking values in $\left \{-1,1\right\}$, which indicates the partition of node $i$ and $j$.
    Let $S \supseteq \left \{i,j\right\},U \ni i,T \ni j$ in place of $\left \{i,j\right\}, \left \{i\right\}, \left \{j\right\}$. The \emph{mutual} information between $x_i$ and $x_j$ is then defined as 
	\begin{equation*}
		I(x_i,x_j) =  \sum_{a,b \in \{-1,1\}} \Pr_{\mu_S}(x_i=a,x_j=b) \cdot \log\left(\frac{\Pr_{\mu_S}(x_i=a,x_j=b)}{\Pr_{\mu_U}(x_i=a)\Pr_{\mu_T}(x_j=b)}\right) .
	\end{equation*}
\end{definition}
This follows from the fact that probability distributions are consistent under intersections, and hence we can take $S \cap \left \{i,j\right\}$  to obtain the 
distribution on $\left \{i,j\right\}$, etc. \par
We will aim to get 
\begin{equation*}
	\E_{i,j \sim V} \left[ I(x_i, x_j)\right] \leq \delta.
\end{equation*}

This goal will be achieved by conditioning and utilizing the observation from Lemma~\ref{lemma:low_mutual_information}.
In particular, Lemma~\ref{lemma:low_mutual_information} implies that the
following equation holds:
\begin{equation}
\label{eq:multral-info-goal}
\min_{\substack{\{i_1, \ldots, i_t\} \subseteq V, t \leq d' \\ \{a_{i_1}, \ldots, a_{i_t}\} \in \{-1, 1\}^{\{i_1, \ldots, i_t\}}}} \E _{i,j \sim V}\left[ I(x_i,x_j \mid x_{i_1} = a_{i_1},\ldots,x_{i_t} = a_{i_t}) \right] \leq 1/(d'-1).
\end{equation}

\begin{lemma}[Lemma 4.5 in~\citet{DBLP:conf/soda/RaghavendraT12}]
\label{lemma:low_mutual_information}
		There exists $t \leq d'$ such that 
		\begin{equation*}
			\E_{i_1,\hdots,i_t \sim V} \E _{i,j \sim V}\left[ I(x_i,x_j \mid x_{i_1},\hdots,x_{i_t}) \right] \leq 1/(d'-1).
		\end{equation*}
\end{lemma}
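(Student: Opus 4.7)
The plan is to prove this lemma via a classical telescoping argument on Shannon entropy, adapted to the pseudo-distributions arising from the Lasserre hierarchy. At a high level, each additional conditioning variable reduces the expected entropy by precisely the corresponding average conditional mutual information, and since total entropy is bounded by $1$ (the variables are binary), the average drop per step must be small, so some step must drop by little.

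First I would define, for each integer $t$ with $0 \leq t \leq d'$,
\[
\Phi(t) \;=\; \E_{i_1, \ldots, i_t \sim V}\, \E_{j \sim V} \bigl[\, H(x_j \mid x_{i_1}, \ldots, x_{i_t}) \,\bigr],
\]
where $H$ denotes the Shannon entropy evaluated under the local distribution $\mu_{\{j, i_1, \ldots, i_t\}}$ of the degree-$d$ Lasserre hierarchy. This is well-defined because $t + 1 \leq d' + 1 \leq d$, so the corresponding local distribution exists. Since $x_j \in \{-1, 1\}$, we have $\Phi(0) \leq 1$; moreover $\Phi(t) \geq 0$ and $\Phi$ is non-increasing in $t$, since conditioning cannot increase entropy.

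Next I would apply the pointwise identity $I(x_i;x_j \mid Z) = H(x_j \mid Z) - H(x_j \mid Z, x_i)$ and take expectations, identifying the extra conditioning index $i_{t+1}$ with a fresh sample $i \sim V$. Using consistency of the local marginals, this yields
\[
\Phi(t) - \Phi(t+1) \;=\; \E_{i_1, \ldots, i_t \sim V}\, \E_{i, j \sim V} \bigl[\, I(x_i;x_j \mid x_{i_1}, \ldots, x_{i_t}) \,\bigr].
\]
Summing for $t = 0, 1, \ldots, d'-1$ produces the telescoping bound
\[
\sum_{t=0}^{d'-1} \E_{i_1, \ldots, i_t}\, \E_{i, j} \bigl[\, I(x_i;x_j \mid x_{i_1}, \ldots, x_{i_t}) \,\bigr] \;=\; \Phi(0) - \Phi(d') \;\leq\; 1,
\]
so by pigeonhole at least one of the $d'$ summands is at most $1/d' \leq 1/(d'-1)$. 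The corresponding $t$ satisfies $t \leq d' - 1 \leq d'$, matching the statement of the lemma.

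The step I expect to require the most care is justifying the information-theoretic manipulations (the chain rule, the identity $I = H - H$, non-negativity of mutual information, and monotonicity of entropy under conditioning) with respect to \emph{local} pseudo-distributions rather than a single global distribution. Fortunately, the largest set over which a joint distribution is ever needed is $\{j, i, i_1, \ldots, i_t\}$, of size at most $d' + 2 \leq d$, so the local distribution $\mu_S$ is guaranteed by the degree-$d$ Lasserre hierarchy. Consistency of the marginals then ensures that every entropy and mutual-information identity I invoke reduces to a standard statement inside a genuine finite probability space on $\{-1,1\}^S$, so no pseudo-distribution-specific subtleties arise.
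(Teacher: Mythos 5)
Your proof is correct, and it is essentially the argument used in the cited reference. Note that the paper itself does not prove this lemma; it only states it as Lemma~4.5 of~\citet{DBLP:conf/soda/RaghavendraT12} and invokes it. Your telescoping-entropy argument (defining $\Phi(t)$ as the expected conditional entropy of a random coordinate, expressing each successive difference $\Phi(t)-\Phi(t+1)$ as the average conditional mutual information, telescoping to bound the sum by $\Phi(0)\leq 1$, and applying pigeonhole over $d'$ terms) is exactly the standard proof from that paper, and your sanity-check that every joint distribution invoked lives on a set of size at most $d'+1 < d$, so that the Lasserre local distributions behave like genuine probability distributions, is the right thing to verify.
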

This lemma holds true for any Lasserre hierarchy over a binary\footnote{The original lemma was shown for nonbinary alphabets (alphabets of size $q$), with the right-hand side containing an additional $\log_2 q$ factor. }  alphabet. 

\subsection{Lasserre hierarchy program for \maxcutkc}
\label{appendix:lasserre:program}
The Lasserre hierarchy relaxation of degree $d$ for \maxcutkc is as follows:
\begin{equation}\label{eq:lassere_formulation}
	\begin{aligned}
		\max\, & \sum_{ i < j } W_{ij} \Pr_{\mu_{\{i,j\}}} (x_i \neq x_j)  \\
		\textrm{s.t.} \, & \sum_{j=1}^n\Pr_{\mu_{S \cup \{j\}}} (x_j = x_j^0 \cap \vecx_{S} = \xi) = (n-k) \Pr_{\mu_{S}}(\vecx_{S} = \xi), \\
  & \forall S \subseteq V, \xi \in \{-1, 1\}^S, |S \cup \{j\}| < d.
	\end{aligned}
\end{equation}

We can show that Equation~\eqref{eq:lassere_formulation} is indeed a relaxation of \maxcutkc. 
Let $\hat{\vecx}$ be any feasible solution of \maxcutkc, and notice that it satisfies the constraint $\sum_{i} x_i^0 \hat{x}_i = n-2k$. 
Now we independently assign that $\Pr_{\mu_{S \cup \{j\}}}(x_j = \hat{x}_j) = 1$, and check whether this assignment satisfies the constraint. 
Notice that for any $\vecx_{S}$ and $\xi$, if $\Pr_{\mu_{S \cup \{j\}}}(\vecx_{S} = \xi) = 0$, the constraint trivially holds. 
Otherwise, we condition the probability on $\vecx_{S} = \xi$, and we get 
\begin{equation*}
\begin{aligned}
    \sum_{j=1}^n\Pr_{\mu_{S \cup \{j\}}} (x_j = x_j^0 \mid \vecx_{S} = \xi) 
    &= \sum_{j=1}^n\Pr_{\mu_{S \cup \{j\}} } (\hat{x}_j = x_j^0 \mid \vecx_{S} = \xi) \\
    &\stackrel{(a)}{=} \sum_{j=1}^n\Pr_{\mu_{S \cup \{j\}}} (\hat{x}_j = x_j^0) \\
    &\stackrel{(b)}{=} n \frac{(n-k)}{n} = n-k. \\
\end{aligned}
\end{equation*}
In the above formulation, $(a)$ holds as the assignment is independent, and $(b)$ holds by the constraint from Equation~\ref{ip:densest-subgraph} that $\sum_{i} x_i^0 \hat{x}_i = n-2k$.

Notice that an equivalent \sdp can be formulated as follows. For every $S \subseteq V, |S \cup \{j\}| \leq d$, and $\xi \in \{-1, 1\}^S$, we introduce a 
variable $\vecv_{S, \xi}$.
Then we obtain:
\begin{equation}\label{eq:lassere_formulation_matrix}
	\begin{aligned}
		\max\, & \sum_{ i < j } W_{ij} (\vecv_{i,1} \cdot \vecv_{j,-1} + \vecv_{i,-1} \cdot \vecv_{j,1})  \\
		\textrm{s.t.} \, &  \sum_{j=1}^n x_j^0 \vecv_{j,1} \cdot \vecv_{S,\xi} - \sum_{j=1}^n x_j^0 \vecv_{j,-1} \cdot \vecv_{S,\xi} = (n-2k) \vecv_{S,\xi} \cdot \vecv_{S,\xi}, \\
  & \forall S \subseteq V, \xi \in \{-1, 1\}^S, |S \cup \{j\}| < d.\\
	\end{aligned}
\end{equation}

Since this is a semidefinite relaxation with a polynomial ($n^{O(d)}$) number of
constraints, we can obtain the optimal value of this program in polynomial time
up to precision $\delta>0$. Let us write $\Sol$ to denote the value of this solution and $\OPT$ to denote the optimal value \maxcutkc. We also use $\left \{\vecv_S\right\}_{S \subseteq V, |S| \leq d}$ to denote the vectors $\vecv_S$  under which the value $\Sol$ is attained, and hence we have 
\begin{equation*}
	(1-\delta)\OPT \leq  \sum_{ i < j } W_{ij} (\vecv_{i,1} \cdot \vecv_{j,-1} + \vecv_{i,-1} \cdot \vecv_{j,1}) = \Sol.
\end{equation*}

\subsection{Obtaining a ``decorrelated'' solution via conditioning}
\label{appendix:lasserre:decorrelate}

After we solve Equation~\eqref{eq:lassere_formulation}, we obtain the local distributions of $\vecx$, which are consistent on the marginal distributions up to $d$ entries. 
We will apply a conditioning process in order to obtain a Lasserre hierarchy
such that the mutual information of each pair of entries is bounded, i.e.,
$\E_{i,j \sim V} \left[ I(x_i, x_j)\right] \leq \delta$. In particular, a step in the conditioning 
the procedure works as follows: 
\begin{itemize}
	\item[] Pick a vertex $i$ uniformly at random, and fix its value to either $1$ or $-1$, according to its marginal distribution $\mu_{\{i\}}$. Update the probability distributions $\{\mu_{S}\}_{S \subseteq V}$ according to this fixing, as discussed in Subsection \ref{subsection:lassere_hierarchy_properties}.
\end{itemize}
Let us observe that the objective value of Equation \eqref{eq:lassere_formulation} does not change in expectation. 
Furthermore, if we denote $\tilde{\vecv}_{j, -1}, \tilde{\vecv}_{j, 1}, \tilde{\vecv}_{S, \xi}$ the vectors obtained after the conditioning,
then in expectation we obtain that
\begin{equation*}
\sum_{j=1}^n x_j^0 \tilde{\vecv}_{j,1} \cdot \tilde{\vecv}_{S,\xi} - \sum_{j=1}^n x_j^0 \tilde{\vecv}_{j,-1} \cdot \tilde{\vecv}_{S,\xi} = (n-2k) \tilde{\vecv}_{S,\xi} \cdot \tilde{\vecv}_{S,\xi}
\end{equation*}
is satisfied.

Next, we apply Lemma~\ref{lemma:low_mutual_information} to obtain Lemma~\ref{lemma:conditioning}, which is analogous to Lemma 4.6 from \citet{DBLP:conf/soda/RaghavendraT12}. The difference is that our problem \maxcutkc has a different constraint; hence, a different analysis on the lower bound on the objective value is needed in order to apply the Markov inequality.

\begin{lemma}\label{lemma:conditioning}
There exists an algorithm running in time 
\begin{equation*}
	O\left(n^{2}d' \left(\frac{en}{d'} \right)^{d'}2^{d'}\right),
\end{equation*}
which, given a Lasserre solution outlined in Equation~\eqref{eq:lassere_formulation} of degree $d= d'+2$ with value $\Sol$, finds 
a degree $2$  Lasserre solution with objective value 
$\Sol (1-\delta)$, and which satisfies $\E[I(x_i, x_j)] \leq \delta$.
\end{lemma}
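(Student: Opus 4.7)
My plan is to follow the enumerate-and-verify strategy of \citet{DBLP:conf/soda/RaghavendraT12}. I enumerate over all subsets $T\subseteq V$ with $|T|\le d'$ and all sign assignments $\xi\in\{-1,1\}^T$; for each $(T,\xi)$ I condition the given degree-$d=d'+2$ Lasserre solution on $\vecx_T=\xi$ using the explicit conditioning formula from Section~\ref{subsection:lassere_hierarchy_properties}, yielding a new Lasserre hierarchy of degree $d-|T|\ge 2$. Then I evaluate two quantities on the conditioned solution: the objective $\sum_{i<j}W_{ij}\Pr(x_i\ne x_j)$, and the average pairwise mutual information $\E_{i,j\sim V}[I(x_i,x_j)]$. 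I output the first $(T^*,\xi^*)$ whose objective is at least $(1-\delta)\Sol$ and whose mutual information is at most $\delta$.

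To show that such a $(T^*,\xi^*)$ exists among the enumerated conditionings, I combine two ingredients. First, Lemma~\ref{lemma:low_mutual_information}, instantiated with $d'=\lceil 1+1/\delta\rceil$, gives a size $t\le d'$ such that, when $T$ is sampled uniformly from the $t$-subsets of $V$ and $\xi\sim\mu_T$, the expected pairwise mutual information after conditioning is at most $1/(d'-1)\le\delta$. Second, consistency of the Lasserre local marginals (Section~\ref{subsection:lassere_hierarchy_properties}) implies that, for this same random $(T,\xi)$, the expected conditioned objective equals $\Sol$. I then apply a union bound to the two Markov-type events ``conditioned mutual information exceeds $\delta$'' and ``conditioned objective drops below $(1-\delta)\Sol$''. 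The first probability is bounded by Markov applied to the nonnegative mutual-information random variable (after absorbing constants into the choice of $d'$). The second probability is controlled by a reverse-Markov argument that exploits the deterministic upper bound $\sum_{i<j}W_{ij}$ on the objective, together with the fact that for \maxcutkc the SDP value $\Sol$ is at least a constant fraction of the total edge weight. If the two bad-event probabilities sum to less than one, a good $(T^*,\xi^*)$ must exist.

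The running time is governed by the size of the enumeration, $\sum_{t=0}^{d'}\binom{n}{t}2^t\le d'\,(en/d')^{d'}\,2^{d'}$, multiplied by the $O(n^2)$ cost, per conditioned hierarchy, of updating the vectors $\vecv_{\{i,j\},\zeta}$ at all $O(n^2)$ relevant index pairs and then computing both the conditioned objective and the average pairwise mutual information as sums of $O(n^2)$ terms. This yields the stated bound $O\!\left(n^{2}d'(en/d')^{d'}2^{d'}\right)$.

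The main obstacle, as already flagged in the overview in Section~\ref{sec:max-cut:sos}, is arguing that a \emph{single} conditioning satisfies both requirements simultaneously. Each property on its own is direct: low mutual information follows from Lemma~\ref{lemma:low_mutual_information}, and expected objective preservation follows from marginal consistency. The subtlety is that the per-conditioning objective can also exceed $\Sol$, so a direct Markov bound on the objective is unavailable; the fix is the reverse-Markov step above, which relies on the two-sided boundedness of the objective on the enumerated range of conditionings.
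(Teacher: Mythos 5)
Your proposal is correct and follows essentially the same route as the paper: brute-force enumeration of all conditionings of size at most $d'$, a probabilistic-method existence argument combining Lemma~\ref{lemma:low_mutual_information} with consistency of local marginals, a union bound over the two Markov-type failure events, and the same ``reverse Markov'' step applied to $W - X$ where $W = \sum_{i<j} W_{ij}$ is the deterministic ceiling, with $\Sol/W$ lower-bounded via the independent biased-rounding feasible Lasserre solution. The running-time accounting is likewise identical.
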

\begin{proof}
	Using the probabilistic method, we first show that with some probability, conditioning of variables does not reduce the 
	objective value by more than $\delta \Sol$. Then, we show that with high probability, conditioning satisfies the inequality 
	\begin{equation*}
		\E[I(x_i, x_j)] \leq \delta.
	\end{equation*}
	The randomness in the expression above comes from the random choices in our conditioning algorithm. 
	Then, by union bound, we can conclude that there is a conditioning with value at least $\Sol - \delta \Sol$ and with low average mutual information. We then conclude the lemma by discussing how this conditioning can be found efficiently in the time described in the lemma statement.

	Let us now provide details. Let us denote with $W \in \mathbb{R}_+$ the
	value $W=\sum_{i<j} W_{ij}$. First, consider the conditioning procedure
	outlined in this section. Since we are sampling the variables according to
	their marginal distributions, the expected value of the program
	\eqref{eq:lassere_formulation} remains $\Sol$. Let us use $X$  to denote a
	random variable whose value is the value of the Lasserre hierarchy after $t$
	conditioning steps, where $t$ is the same number as the number from Lemma~\ref{lemma:low_mutual_information}. We can then calculate
	\begin{equation*}
		\pr\left( X < (1-\delta) \E[X] \right) = \pr\left(W-X \geq  W- (1-\delta) \E[X] \right).  
		\end{equation*}
	Now, since $X \in [0,W]$, by Markov's inequality we have 
	\begin{equation}\label{appendix_aleksa:eq:markov_x}
		\pr\left( X < (1-\delta) \E[X] \right) \leq \frac{W- \E[X]}{W-(1-\delta) \E[X]}  = \frac{1- \E[X]/W}{1-(1-\delta)\E[X]/W}.
	\end{equation}
	Let us use $\zeta=\E[X]/W$. In order to find a suitable upper bound on the expression above, we first prove a lower bound 
	on $\zeta$.

	For this, we consider a feasible solution to 
 Equation~\eqref{eq:lassere_formulation}: for each 
	$i \in V$, we independently assign $x_i^0$  to vertex $i$  according to probability $\Pr_{\mu_{i}}(x_i = x_i^0)= \frac{n-k}{n} =(1+\varsigma)/2$, 
	and $-x_i^0$ otherwise. 
 Let us now give a lower bound for $\E[X]$,  by giving a lower bound of a term 
	\begin{equation*}
		\Pr_{\mu_{\{i,j\}}} (x_i \neq x_j) = \frac{1-\varsigma^2}{2},
	\end{equation*}
	according to the assignment.
	Going back to Equation~\eqref{appendix_aleksa:eq:markov_x}, this shows that $\zeta=\E[X]/W$  satisfies $\zeta \geq (1-\varsigma^2)/2$.
	Hence, we can write
	\begin{equation*}
		\pr\left( X < (1-\delta) \E[X] \right) \leq \frac{1- \zeta}{1-(1-\delta)\zeta}  
		\leq \frac{1}{1+\delta\frac{1-\varsigma^{2}}{1+\varsigma^{2}}}.
	\end{equation*}
 
	\par
	Let us denote with $I$ the random variable $I=I(X_i,X_j \mid X_{i_1},\hdots,X_{i_t})$.  By Markov's inequality
	\begin{equation*}
		\pr \left( I \geq \sqrt{\frac{1}{d'-1}} \right) \leq  \frac{\E[I]}{\sqrt{\frac{1}{d'-1}} } \leq \sqrt{\frac{1}{d'-1}},
	\end{equation*}
	and therefore the probability of the randomized algorithm not finding a fixing for which $\E[I_{i,j}] \leq \sqrt{\frac{1}{d'-1}}$ is at most $\sqrt{\frac{1}{d'-1}}$. Hence, the probability of the algorithm failing is at most 
	\begin{equation*}
		\begin{split}
		\frac{1}{1+\delta \cdot (1-\varsigma^{2})/(1+\varsigma^{2})} + \sqrt{\frac{1}{d'-1}}
		= \frac{1}{1+\delta\cdot (1-\varsigma^{2})/(1+\varsigma^{2})}  + \frac{1}{4}\cdot \delta(1-\varsigma^2) <1,
		\end{split}
	\end{equation*}
	and therefore by union bound there is a fixing which satisfies $\E[I_{i,j}] \leq \delta$ and with value at least 
	$\Sol(1-\delta)$.

	\par 
Such a fixing, i.e., computing Equation~\eqref{eq:multral-info-goal}, can be found using brute-force search. The size of the space of all possible fixings can be calculated
	as follows. First, we observe that there\footnote{Since we do not know the value $t$  before hand, we need to test all $i=1,\hdots,d'$.} are
	$\sum_{i=0}^{d'} {n \choose i} $ possible variables. Furthermore, once we fix $i$  variables, the space of possible 
	values these variables can be assigned to is of size $2^{i}$. Hence, using the upper bound ${n \choose i} \leq (n e / i ) ^ i$, the size of this space is at most 
	\begin{equation*}
		d' \left( \frac{en}{d'}\right)^{d'} 2^{d'}.
	\end{equation*}
	Since calculating mutual information takes time $O(n^{2})$, this gives us the final running time of our algorithm.
\end{proof}

\subsection{Description of the hyperplane rounding algorithm}
\label{appendix:lesserre:rounding}
In this section, we will describe the hyperplane rounding algorithm to the solution of the degree $2$ Lasserre hierarchy, which is obtained by using the conditioning outlined in the proof of Lemma \ref{lemma:conditioning}. 
Let us first write the formulation of the degree $2$ Lasserre hierarchy in
Equation~\eqref{eq:lassere_formulation_2_degree}:
\begin{equation}\label{eq:lassere_formulation_2_degree}
	\begin{aligned}
		& \sum_{ i < j } W_{ij} (\vecv_{i,1} \cdot \vecv_{j,-1} + \vecv_{i,-1} \cdot \vecv_{j,1}) \geq \Sol (1-\delta), \quad \textrm{and}  \\
		 &  \sum_{j=1}^n x_j^0 \vecv_{j,1} \cdot \vecv_{i,1} - \sum_{j=1}^n x_j^0 \vecv_{j,-1} \cdot \vecv_{i,1} = (n-2k) \vecv_{i,1} \cdot \vecv_{i,1}, \quad \textrm{and}\\
  &  \sum_{j=1}^n x_j^0 \vecv_{j,1} \cdot \vecv_{i,-1} - \sum_{j=1}^n x_j^0 \vecv_{j,-1} \cdot \vecv_{i,-1} = (n-2k) \vecv_{i,-1} \cdot \vecv_{i,-1}.\\
	\end{aligned}
\end{equation}

If we define $\vecv_i \coloneqq \vecv_{i, -1} - \vecv_{i, 1}$, 
and $\vecv_0 \coloneqq \vecv_{i, -1} + \vecv_{i, 1}$, 
the solution presented in Equation~\eqref{eq:lassere_formulation_2_degree} implies a solution that satisfies the following constraints:

\begin{equation*}
	\begin{split}
		\frac{1}{2}	& \sum_{i<j}W_{ij} (1-\vecv_i \cdot \vecv_j) \geq \Sol \cdot (1- \delta), \quad \textrm{and}\\
		 &\sum_{i \in V} x_i^0 \vecv_i\cdot \vecv_0   = \tau n.
	\end{split}
\end{equation*}

Next, we describe the hyperplane rounding procedure, which differs from the ones
in the other parts of our paper. 
The procedure is essentially the same as used by \citet[Section 5.4]{DBLP:conf/soda/RaghavendraT12}, with a mild difference in how we set the bias $\kappa_i$. Here we set $\kappa_i \coloneqq \vecv_i \cdot \vecv_0$. 
The goal is to ensure that the cardinality constraint in \maxcutkc holds in expectation.

Let $\bar{\vecv}_i$ be defined as
\begin{equation*}
	\bar{\vecv}_i = \begin{cases}
		\frac{\vecv_i - \kappa_i \vecv_0 }{ \| \vecv_i - \kappa_i \vecv_0\| }, & \textrm{if } |\kappa_i| \neq 1,\\
		\textrm{a unit vector orthogonal to all other vectors}, 			   & \textrm{if } |\kappa_i| = 1.
	\end{cases} 
\end{equation*}
 Let us use $\varPhi$ to denote the cumulative distribution function of 
 a single variable Gaussian distribution, and let us define the assignment $\{\bar{x}_i\}_{i=1}^n$ as 
\begin{equation*}
	\begin{split}
	g \sim 		\textrm{standard $n$-dimensional Gaussian vector},\\
	\bar{x}_i = \begin{cases}
		-1, & \textrm{if } \bar{\vecv}_i \cdot g \leq \varPhi^{-1}\left( \frac{1-\kappa_i}{2} \right),\\
		1,  & \textrm{otherwise.}
	\end{cases}
	\end{split}
\end{equation*}

Notice that $\bar{\vecx}$ encodes the partition after the hyperplane rounding. 
We remark that that $\Exp{\bar{x_i}} = \kappa_i$. A computer-assisted proof shows that rounding the vectors $\vecv_i$  incurs a loss of at most $0.858$, as claimed in \citet[Section 5.4]{DBLP:conf/soda/RaghavendraT12}. 

We present our main result of this hyperplane rounding procedure in Lemma~\ref{lem:rounding-procedure}.

\begin{lemma}
\label{lem:rounding-procedure}
After the hyperplane rounding procedure, it holds that:
\begin{enumerate}
	\item[(a)] $\frac{1}{2} \sum_{i<j} W_{ij}(1 - \bar{x}_i \bar{x}_j) \geq 0.858 \frac{1}{2}\sum_{i<j}W_{ij} (1-\vecv_i \cdot \vecv_j)$, and
	\item[(b)] $\Exp{\sum_{i\in V} x_i^0 \bar{x}_i} = \tau n$.
\end{enumerate}
\end{lemma}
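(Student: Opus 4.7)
The two parts are essentially independent, so I will handle them in turn. Part~(b) is a direct calculation from the rounding rule plus a linear SDP constraint. Part~(a) is a per-edge comparison: for each pair $(i,j)$ I will bound $\Exp{1-\bar{x}_i\bar{x}_j}$ against the SDP contribution $(1-\vecv_i\cdot\vecv_j)$, uniformly over the parameters of the pair, and then sum with the weights~$W_{ij}$. The constant~$0.858$ is the worst-case ratio of this per-edge inequality, established by the computer-assisted analysis of~\citet{DBLP:conf/soda/RaghavendraT12}.

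For part~(b), the key observation is that by the definition of the rounding, the scalar $\bar{\vecv}_i \cdot g$ is a standard Gaussian (as $\bar{\vecv}_i$ is a unit vector), so
\begin{equation*}
\Pr\!\left(\bar{x}_i = -1\right) = \Pr\!\left(\bar{\vecv}_i \cdot g \leq \varPhi^{-1}\!\left(\tfrac{1-\kappa_i}{2}\right)\right) = \tfrac{1-\kappa_i}{2},
\end{equation*}
and hence $\Exp{\bar{x}_i} = \kappa_i = \vecv_i \cdot \vecv_0$. By linearity of expectation,
\begin{equation*}
\Exp{\sum_{i \in V} x_i^0 \bar{x}_i} = \sum_{i \in V} x_i^0 \,(\vecv_i \cdot \vecv_0) = \tau n,
\end{equation*}
where the last equality is exactly the linear constraint on $\{\vecv_i\}$ carried over from Equation~\eqref{eq:lassere_formulation_2_degree}.

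For part~(a), I will fix an edge $(i,j)$ and reduce $\Exp{1 - \bar{x}_i \bar{x}_j}$ to a function of three scalars: the two biases $\kappa_i = \vecv_i \cdot \vecv_0$, $\kappa_j = \vecv_j \cdot \vecv_0$, and the inner product $\vecv_i \cdot \vecv_j$. The pair $(\bar{\vecv}_i \cdot g, \bar{\vecv}_j \cdot g)$ is a bivariate Gaussian with correlation
\begin{equation*}
\rho_{ij} = \bar{\vecv}_i \cdot \bar{\vecv}_j = \frac{\vecv_i \cdot \vecv_j - \kappa_i \kappa_j}{\sqrt{(1-\kappa_i^2)(1-\kappa_j^2)}},
\end{equation*}
obtained by decomposing $\vecv_i, \vecv_j$ into components along $\vecv_0$ and orthogonal to it. The probability $\Pr(\bar{x}_i \neq \bar{x}_j)$ is then a standard bivariate-Gaussian orthant integral $\Gamma(\kappa_i,\kappa_j,\rho_{ij})$, so
\begin{equation*}
\Exp{\tfrac{1}{2}(1-\bar{x}_i \bar{x}_j)} = \Gamma(\kappa_i,\kappa_j,\rho_{ij}).
\end{equation*}
The goal is to show $\Gamma(\kappa_i,\kappa_j,\rho_{ij}) \geq 0.858 \cdot \tfrac{1}{2}(1-\vecv_i \cdot \vecv_j)$ for all admissible triples. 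Summing over edges with weights $W_{ij}$ then yields~(a).

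The main obstacle is precisely this last per-edge bound: it is a constrained three-dimensional minimization of $\Gamma$ over $(\kappa_i,\kappa_j,\vecv_i\cdot\vecv_j)\in [-1,1]^3$ subject to $\rho_{ij}\in[-1,1]$, and its infimum $\alpha_*\approx 0.858$ has no clean closed form. I will invoke this as a known numerical fact from~\citet{DBLP:conf/soda/RaghavendraT12}, which does the minimization rigorously via a computer-assisted argument. Once that numerical lemma is in hand, parts~(a) and~(b) combine immediately into Lemma~\ref{lem:rounding-procedure}.
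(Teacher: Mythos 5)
Your proposal is correct and follows essentially the same route as the paper: part~(b) via $\Exp{\bar{x}_i}=\kappa_i$ (the quantile rounding makes $\bar{\vecv}_i\cdot g$ standard Gaussian), linearity, and the linear SDP constraint on $\sum_i x_i^0\,\vecv_i\cdot\vecv_0$; and part~(a) by reducing to a per-edge bivariate-Gaussian bound whose worst-case ratio $\approx 0.858$ is delegated to the computer-assisted analysis of \citet[Section~5.4]{DBLP:conf/soda/RaghavendraT12}. The only cosmetic remark is that, as in the paper, part~(a) should be read as a bound in expectation (the boosting to a high-probability guarantee happens later, in the proof of Lemma~\ref{lem:bound-variance}).
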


Denote by $\overline{\Rnd}$ the cut induced by the partition $\bar{\vecx}$
encodes. Then Part~(a) of the lemma implies that $\overline{\Rnd} \geq 0.858
\Sol (1-\delta)$, and Part~(b) implies that, in expectation, the cardinality
constraint of \maxcutkc is satisfied.

Lemma~\ref{lem:rounding-procedure} provides us with a cut that approximates the optimal solution well. However, the number of refinements satisfies the cardinality constraints only in expectation. 
In particular, $\sum_{i \in V} x_i^0 \cdot \bar{x}_i = (n-2k)$ might do not hold.

In the last step we show that $\sum_{i \in V} x_i^0 \cdot \bar{x}_i $ is very close to $(n-2k)$. Hence, 
we can change the signs of a very small number of $\bar{x}_i$  to obtain the final assignment $x_i$ for which the value of the objective function does not change much.

\subsection{Obtaining a rounded solution which almost satisfies constraints}
\label{appendix:lasserre:roundingcombined}
In the previous section, we proved that for our rounding procedure it holds that $\Exp{\sum_{i\in V} x_i^0 \bar{x}_i} = \tau n$.
In this section, we will show that $\sum_{i \in V} x_i^0 \cdot \bar{x}_i $ is indeed very close to $n-2k$, by bounding its variance. 

\begin{lemma}
\label{lem:varaince-bound}
    $\E_{i,j \in V}[I(x_i, x_j) ] \leq \delta$ implies that 
\begin{equation*}
  \Var\left[ \sum_{i \in V} x_i^0 \bar{x}_i  \right]	\leq C |V|^2 \delta^{1/12},
\end{equation*}
after we apply the hyperplane rounding procedure according to the above section, where $C$ is some constant. 
\end{lemma}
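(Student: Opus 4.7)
The plan is to expand the variance as a sum of covariances of the rounded variables $\bar{x}_i$, bound each covariance by a quantity that depends on the pseudo-distribution $\mu_{\{i,j\}}$, and then convert the average mutual-information hypothesis into a bound on the overall sum. Concretely, since $|x_i^0 x_j^0| = 1$,
\begin{equation*}
\Var\left[\sum_{i \in V} x_i^0 \bar{x}_i\right] = \sum_{i,j \in V} x_i^0 x_j^0 \bigl(\E[\bar{x}_i\bar{x}_j] - \E[\bar{x}_i]\E[\bar{x}_j]\bigr) \leq \sum_{i,j \in V} \bigl|\mathrm{Cov}(\bar{x}_i,\bar{x}_j)\bigr|,
\end{equation*}
so the lemma reduces to an upper bound $\E_{i,j\sim V}[|\mathrm{Cov}(\bar{x}_i,\bar{x}_j)|] \leq C\,\delta^{1/12}$.

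Next, I would relate $\mathrm{Cov}(\bar{x}_i,\bar{x}_j)$ to the second-moment structure of the Lasserre solution. By construction of the hyperplane rounding from Section~\ref{appendix:lesserre:rounding}, the pair $(\bar{x}_i,\bar{x}_j)$ is distributed as two thresholded Gaussians with means $\kappa_i,\kappa_j$ and correlation $\bar{\vecv}_i\cdot\bar{\vecv}_j$, so a standard bivariate-Gaussian-orthant estimate gives $|\mathrm{Cov}(\bar{x}_i,\bar{x}_j)| \leq O\bigl(|\bar{\vecv}_i\cdot\bar{\vecv}_j|^{1/6}\bigr)$, where the Hölder exponent $1/6$ is the weakest uniform rate one obtains after absorbing the singular dependence on $\kappa_i,\kappa_j$ near $\pm 1$. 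Translating back,
\begin{equation*}
\bar{\vecv}_i\cdot\bar{\vecv}_j \;=\; \frac{\vecv_i\cdot\vecv_j - \kappa_i\kappa_j}{\sqrt{(1-\kappa_i^2)(1-\kappa_j^2)}} \;=\; \frac{\E_\mu[x_ix_j]-\E_\mu[x_i]\E_\mu[x_j]}{\sqrt{(1-\kappa_i^2)(1-\kappa_j^2)}},
\end{equation*}
and by Pinsker's inequality applied to the local distribution $\mu_{\{i,j\}}$ versus its product of marginals, the numerator is at most $O(\sqrt{I(x_i,x_j)})$. Combining the two estimates yields $|\mathrm{Cov}(\bar{x}_i,\bar{x}_j)| \leq O\bigl(I(x_i,x_j)^{1/12}\bigr)$ uniformly in $i,j$, away from the degenerate regime.

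Finally, I would average over $(i,j)$: by Jensen's inequality applied to the concave function $t\mapsto t^{1/12}$,
\begin{equation*}
\E_{i,j\sim V}\bigl[|\mathrm{Cov}(\bar{x}_i,\bar{x}_j)|\bigr] \;\leq\; C \cdot \E_{i,j\sim V}\bigl[I(x_i,x_j)^{1/12}\bigr] \;\leq\; C \cdot \Bigl(\E_{i,j\sim V}[I(x_i,x_j)]\Bigr)^{1/12} \;\leq\; C\,\delta^{1/12},
\end{equation*}
and multiplying by $|V|^2$ gives the claimed bound. The main obstacle will be the degenerate regime where $\kappa_i$ or $\kappa_j$ lies near $\pm 1$, because the normalization in $\bar{\vecv}_i$ blows up and the Hölder estimate on the Gaussian-orthant probability loses its uniform constant. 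I plan to handle this by splitting the sum according to whether $\min(1-\kappa_i^2,1-\kappa_j^2)$ falls below a threshold $\delta^{1/12}$: vertices with $|\kappa_i|$ close to $1$ have $\bar{x}_i$ almost deterministic, so their pairwise covariances are directly $O(\delta^{1/12})$ by $|\E[\bar{x}_i\bar{x}_j]-\kappa_i\kappa_j|\leq \sqrt{1-\kappa_i^2}$, and the remaining ``bulk'' pairs fall squarely into the Pinsker-plus-Hölder estimate above. The precise choice of threshold and exponents will be calibrated so that both pieces come out at the $\delta^{1/12}$ rate.
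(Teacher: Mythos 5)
Your proposal follows the same high-level outline as the paper's proof sketch: expand the variance into pairwise covariances (noting $|x_i^0 x_j^0|=1$ absorbs the sign), bound each $|\mathrm{Cov}(\bar{x}_i,\bar{x}_j)|$, and average against the mutual-information hypothesis. The difference is the intermediate quantity. The paper bounds the covariance by $O(\sqrt{I(\bar{x}_i,\bar{x}_j)})$ --- Pinsker on the \emph{rounded} pair --- and then explicitly delegates the final bound $\sum_{i,j}\sqrt{I(\bar{x}_i,\bar{x}_j)}\le C|V|^2\delta^{1/12}$ to the analysis of Raghavendra--Tan (Theorem 5.6). You instead route through the normalized SDP inner product $\bar{\vecv}_i\cdot\bar{\vecv}_j$ and apply Pinsker to the \emph{fractional} local distribution $\mu_{\{i,j\}}$, which makes the argument more self-contained.

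The proposal has two genuine gaps, both of which you partially acknowledge. First, the bivariate-Gaussian estimate $|\mathrm{Cov}(\bar{x}_i,\bar{x}_j)|\leq O(|\bar{\vecv}_i\cdot\bar{\vecv}_j|^{1/6})$ is asserted without proof or reference, and I do not recognize $1/6$ as a standard uniform rate; this needs a derivation. Second, the degenerate-regime calibration does not close as stated: if $\min(1-\kappa_i^2,1-\kappa_j^2)<\delta^{1/12}$, Cauchy--Schwarz yields only $|\mathrm{Cov}(\bar{x}_i,\bar{x}_j)|\leq\sqrt{(1-\kappa_i^2)(1-\kappa_j^2)}\leq\delta^{1/24}$, which is \emph{larger} than the target $\delta^{1/12}$. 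There is, however, a cleaner way to finish along your route that avoids both problems. Price's (Plackett's) formula for bivariate Gaussian orthant probabilities gives, uniformly over the thresholds,
\begin{equation*}
|\mathrm{Cov}(\bar{x}_i,\bar{x}_j)| \;\leq\; \tfrac{2}{\pi}\arcsin\bigl|\bar{\vecv}_i\cdot\bar{\vecv}_j\bigr| \;\leq\; \bigl|\bar{\vecv}_i\cdot\bar{\vecv}_j\bigr|,
\end{equation*}
so, combined with the Cauchy--Schwarz bound $|\mathrm{Cov}(\bar{x}_i,\bar{x}_j)|\le\sqrt{(1-\kappa_i^2)(1-\kappa_j^2)}$ and the identity $|\bar{\vecv}_i\cdot\bar{\vecv}_j|\cdot\sqrt{(1-\kappa_i^2)(1-\kappa_j^2)}=|\mathrm{Cov}_\mu(x_i,x_j)|$, one has $|\mathrm{Cov}(\bar{x}_i,\bar{x}_j)|\le\sqrt{|\mathrm{Cov}_\mu(x_i,x_j)|}\le O(I(x_i,x_j)^{1/4})$ with no case split. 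Jensen then gives a variance bound of $O(|V|^2\delta^{1/4})$, which is stronger than (and implies) the stated $\delta^{1/12}$. If you want the exponent matching the lemma exactly, you should trace RT12's constants; otherwise, I'd recommend replacing the $1/6$-orthant claim and the threshold-splitting with the Price's-formula argument above.
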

\begin{proof}[Partial proof]
    We only highlight the parts different from~\citet[Theorem 5.6]{DBLP:conf/soda/RaghavendraT12}.
    The differences arise as we consider the initial partition encoded as $\vecx_0$. 
    The remaining part of the analysis is the same.
	We obtain:
    \begin{equation*}
		\begin{split}
	\Var\left[ \sum_{i \in V} x_i^0 \bar{x}_i  \right]	
	&= \sum_{i, j \in V} (\mathbb{E}[\bar{x}_i x_i^0 \bar{x}_j x_j^0] - \mathbb{E}[\bar{x}_i x_i^0] \mathbb{E}[\bar{x}_j x_j^0]) \\
	&\leq \sum_{i, j \in V} 4 \max_{a \in \{-1,1\}, b \in \{-1, 1\}} \Pr(\bar{x}_i=a, \bar{x}_j = b) - \Pr(\bar{x}_i=a) \Pr(\bar{x}_i=b) \\
	&=\sum_{i, j \in V} \bigO \left(\sqrt{I(\bar{x}_i, \bar{x}_j)}\right) \\
        &\leq  C |V|^2 \delta^{1/12}.
		\end{split}	
    \end{equation*}
    The last inequality follows directly from the analysis of \citet[Theorem 5.6]{DBLP:conf/soda/RaghavendraT12}.  
\end{proof}

Hence, by Chebyshev's inequality, we have that 
\begin{equation*}
  \pr\left(  \left|\sum_{i \in V} x_i^0 \bar{x}_i - (n-2\cdot k) \right| \geq \delta^{1/48} |V|\right) \leq  \frac{|V|^2 C(\delta^{1/12})}{\delta^{1/24}|V|^2} = C\delta^{1/24}.
\end{equation*}

In other words, with probability $1- C\delta^{1/24}$ the values $\bar{x}_i$ satisfy 
\begin{equation}
	\left|\sum_{i \in V} x_i^0 \bar{x}_i - (n-2k) \right| \leq \delta^{1/48} |V|,
\end{equation}

Now we are ready to prove the main Lemma that we leave in the main content.
\boundvariance*

\begin{proof}
	The coefficient $0.858 (1-\delta)$ comes from the rounding algorithm described in the previous subsection and obtains a rounded solution of value at least $0.858$ from 
	the SDP value, in expectation. We need to repeat rounding $\Omega(\log(1/\delta))$
	times to ensure that with a high probability we obtain a rounded solution which incurs a loss of at most $0.858-\delta $. 
 The other $(1 - \delta)$-factor comes from the choice of precision when we solve the SDP formulation. 
Furthermore,
	for a single rounding step we have that 
\begin{equation}\label{eq:no_vertices_to_move}
	\left|\sum_{i \in V} x_i^0 \bar{x}_i - (n-2k) \right| \leq \delta^{1/48} |V|,
\end{equation}
with probability $1- C\delta^{1/24}$. Hence, for sufficiently small $\delta = (\varepsilon/3)^{96}$ by the union bound 
we can ensure that the properties stated in the lemma hold.
\end{proof}

\clearpage
\section{Additional experiments for \dskc}
\label{sec:add-exp:dense}

In this section, we present further details of our experiments. 
We run the experiments on a machine with an Intel Xeon Processor E5 2630 v4 and 
64GB RAM. Our algorithms are implemented in Python. 
{We use networkX for generating synthetic graphs, and for reading the real-world datasets.} 
We use Mosek to solve the semidefinite programs.

\subsection{Datasets}
\label{sec:add-exp:data}

The first four datasets comprise Wikipedia politician page networks, and to
obtain them we follow the method of~\citet{neumann2022sublinear} with minor
changes: 
instead of constructing edges according to the hyperlinks in the whole Wikipedia page, 
we only constructing edges using the hyperlinks that appear in the articles.
By doing so, we avoid constructing edges simply because the two politicians belong to the same category. 
These networks represent politicians and party activities from different countries, 
annotated according to their respective parties. 
We select each graph based on the country and choose all the nodes belong to one party 
as the vertices of the initial subgraph. 

The second type of datasets we utilize are obtained from SNAP networks~\citep{snapnets} with ground-truth communities. 
Specifically, we select \dblp\footnote{com-DBLP: \url{https://snap.stanford.edu/data/com-DBLP.html}}, 
\amazon\footnote{com-Amazon: \url{https://snap.stanford.edu/data/com-Amazon.html}}, and 
\youtube\footnote{com-Youtube: \url{https://snap.stanford.edu/data/com-Youtube.html}}. 
For these datasets, we always choose the largest community as the initial subgraph.

We also use synthetic datasets generated from the Stochastic Block Model (SBM) with planted communities, 
namely \balanced, \sparse, and \dense. Each graph has four communities. 
We generate \sparse and \dense using the same parameters for generating the SBM,
and they consist of three sparse communities and one dense community (see below
for the concrete parameter values we picked). They only differ in their initial subgraphs.
In \sparse, we use a sparse community as the initial subgraph,
while in \dense, we use the dense community as the initial subgraph. 
For \balanced, all four communities have equal density.
With this configuration, the densest subgraph for \balanced is the whole graph,
while for \sparse and \dense, the denset subgraph is the planted dense community.

\label{sec:add-exp:data:parameter}
We elaborate on the parameters of the three SBM graphs. 
In each dataset, we establish four communities of equal size. 
For \balanced, all four communities are created with an edge probability of $0.3$, 
and the inter-community edge probabilities are set to $0.1$.

In contrast, for the \sparse and \dense datasets, we create one dense community with edge probabilities $0.8$, 
alongside three sparse communities with edge probabilities $0.2$. The edge probability between communities for these two datasets remains at $0.1$.

Concerning the initial subgraph, we adopt a different community for each dataset. For \balanced, we choose any one of the planted communities. 
We select a sparse community for the \sparse dataset. Conversely, for \dense, the dense community serves as our initial subgraph.

\subsection{Additional experimental results for \dskc}
\subsubsection{Performance when initializing with ground-truth subgraphs with $k$ nodes removed}

In Figure~\ref{fig:appendix:density-out-select-ratio}, we present additional
experimental results when $k$~nodes were removed from a ground-truth subgraph.

\begin{figure}[t!]
	\centering
	\inputtikz{ds_plots/legend_move_out}
	\begin{tabular}{cccc}
		\resizebox{0.3\columnwidth}{!}{%
			\inputtikz{ds_plots/ratio_sb_model_sparse_results_move_out_ratio}
		}&
		\hspace{-1.3em}
		\resizebox{0.29\columnwidth}{!}{%
			\inputtikz{ds_plots/sb_model_dense_results_move_out_ratio}
		}&
		\hspace{-1.3em}
		\resizebox{0.29\columnwidth}{!}{%
			\inputtikz{ds_plots/sb_model_balanced_results_move_out_ratio}
		}\\
		{{\footnotesize \sparse}} &
		{{\footnotesize \dense}} &
		{{\footnotesize \balanced}} \\
	\end{tabular}
	\caption{Relative increase of density for varying values of $k$. We initialize $U$ as a ground-truth subgraph and then
	remove $k$ nodes uniformly at random.}
	\label{fig:appendix:density-out-select-ratio}
\end{figure}

\subsubsection{Performance when initializing with ground-truth subgraphs}

In Table~\ref{tab:appendix:move-out-10} and Figure~\ref{fig:appendix:density-not-out-select-ratio}, 
we present additional experimental results when $U$~is chosen a ground-truth
subgraph.

\begin{figure}[t!]
	\centering
	\inputtikz{ds_plots/legend_not_move_out}
	\begin{tabular}{cccc}
		\resizebox{0.3\columnwidth}{!}{%
			\inputtikz{ds_plots/ratio_sb_model_sparse_results_not_move_out_ratio}
		}&
		\hspace{-1.3em}
		\resizebox{0.29\columnwidth}{!}{%
			\inputtikz{ds_plots/sb_model_dense_results_not_move_out_ratio}
		}&
		\hspace{-1.3em}
		\resizebox{0.29\columnwidth}{!}{%
			\inputtikz{ds_plots/sb_model_balanced_results_not_move_out_ratio}
		}\\
		{{\footnotesize \sparse}} &
		{{\footnotesize \dense}} &
		{{\footnotesize \balanced}} \\
	\end{tabular}
	\caption{Relative increase of density for varying values of {$k$}. 
	We initialized $U$ as a ground-truth subgraph.}
	\label{fig:appendix:density-not-out-select-ratio}
\end{figure}

\subsubsection{Running time dependency on $k$}
\label{sec:add-exp:running-time-by-k}

In Figure~\ref{fig:density-out-synthetic-k-time} we present the running time on 
synthetic datasets. 
Notice that for those synthetic datasets, we set a different range on $k$, i.e., 
$k \in [10, 50]$. This is because it gives us a larger range on $k$ 
compared to $k \in [2\%n_0, 10\%n_0]$, since $n_0$ for those synthetic 
datasets is $250$. 
We notice that \denseSDPalgo, \denseSDPMerge and \denseSQD are slower on 
\sparse than on \dense. We do not get results for \denseSQD on \balanced.  
In addition, \denseSDPalgo is always slower than \denseSDPMerge, 
since \denseSDPalgo contains more constraints.  

\begin{figure}[t!]
	\inputtikz{ds_plots/legend_move_out}
	\centering 
    \begin{tabular}{ccc}
        \resizebox{0.30\columnwidth}{!}{%
			\inputtikz{ds_plots/sb_model_balanced_results_move_out_k_time}
		}&
        \resizebox{0.30\columnwidth}{!}{%
			\inputtikz{ds_plots/sb_model_dense_results_move_out_k_time}
		}&
		\resizebox{0.30\columnwidth}{!}{
			\inputtikz{ds_plots/sb_model_sparse_results_move_out_k_time}
		}
		\\
		{\balanced} &
		{\dense} &
		{\sparse} \\
	\end{tabular}
	\caption{Running time for varying values of $k$. We initialize $U$ as a ground-truth subgraph and then
	remove $k$ nodes uniformly at random. We set $k = 10, 20, 30, 40, 50$.
		}
	\label{fig:density-out-synthetic-k-time}
\end{figure}

\subsubsection{Running time dependency on $n$}
\label{sec:add-exp-running-time-by-n}
{
We vary the number of nodes in the synthetic datasets \dense and \sparse.
Figure~\ref{fig:density-out-country-ratio-n-time} shows the 
algorithms' running times for different values of $n$. 
The running times of \denseSDPalgo and \denseSDPMerge increase quadratically with $n$ 
due to the high cost of solving the \sdp, making these algorithms less scalable.
}

\begin{figure}[t!]
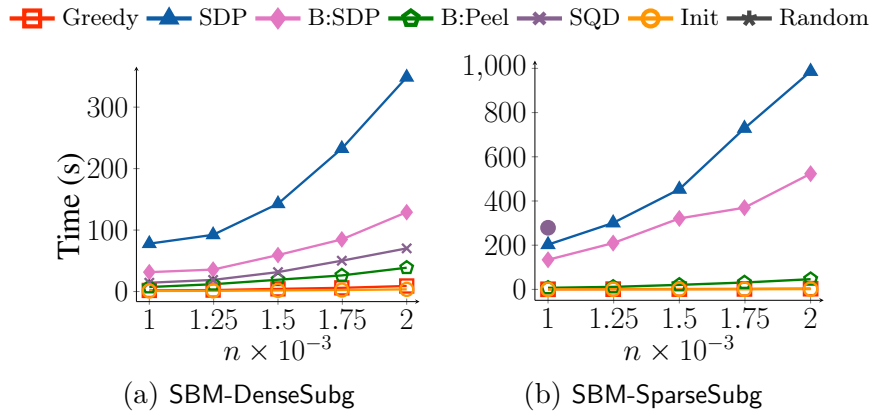

	\centering 
	\inputtikz{ds_plots/legend_move_out}
    \begin{tabular}{cc}
        \resizebox{0.30\columnwidth}{!}{%
			\inputtikz{ds_plots/change_n_sb_model_dense_results_move_out_n_time}
		}&
        \resizebox{0.30\columnwidth}{!}{%
			\inputtikz{ds_plots/change_n_sb_model_sparse_results_move_out_n_time}
		}
		\\
		(a)~{\dense} &
		(b)~{\sparse} \\
	\end{tabular}
	\caption{
		Running time for varying values of $n$. We initialize $U$ as a ground-truth subgraph and then
	remove $k= 10\%n_0$ nodes uniformly at random.
	We set $n = 1\,000, 1\,250, 1\,500, 1\,750, 2\,000$. }
	\label{fig:density-out-country-ratio-n-time}
\end{figure}

\subsection{An analysis on \sparse} 
\label{sec:add-exp:dense:sparse}
To better understand our experimental results on the \sparse dataset, we present
an analysis of two na\"ive baselines.
In particular, we consider \sparse where we 
initialize $U$ as a ground-truth subgraph and then remove $k$ nodes uniformly at random. 
In Figure~\ref{fig:sparse-analysis}
we plot the expected increase in density by two na\"ive baselines:
We present the first baseline in blue, which directly recovers $k$ removed nodes
(i.e., if $k=\abs{U}$ it does exactly the same as \denseinit).
We present the second baseline in orange, which selects $k$ nodes from a dense community uniformly at random.

We observe that when $k\leq 84$, the baseline which picks nodes from the sparse
community exhibits better performance. 
However, for larger values of $k$, the algorithm that select nodes from the dense community performs better. 
This figure can serve as a reference for us in understanding the selection of nodes by the actual algorithms.

\begin{figure}[t!]
	\centering 
	\includegraphics[width=1\columnwidth]{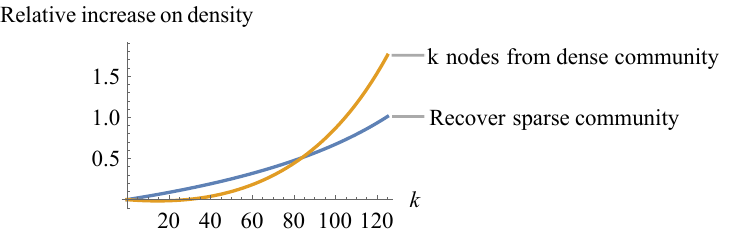}
	\caption{On dataset \sparse. 
	We first uniformly randomly remove $k$ nodes from the sparse community. 
	We compute the expected relative increase on density by recovering $k$ nodes from the sparse community, 
	or by select $k$ nodes from the dense community. 
		When $k = 84$ those two are equal.}
	\label{fig:sparse-analysis}
\end{figure}

In particular, a connection may be drawn between \sparse in Figure~\ref{fig:appendix:density-out-select-ratio} and Figure~\ref{fig:sparse-analysis}. 
Upon analyzing the performances of \denseSDPalgo, \denseSDPMerge, and \densePeelMerge, we find that when $k$ is small, 
their performances mirror that of the \emph{recovery of the sparse community}, in the sense that their performances are closely aligned with \denseinit. 
However, as $k$ increases, the performance of these three algorithms tends to resemble the \emph{selection of $k$ nodes from the dense community}.

In contrast, \denseGreedy always recovers the sparse community. 
Consequently, this algorithm does not exhibit a better performance when $k$ is large.

\subsection{A discussion of \denseSQD's performance}
\label{sec:add-exp:sqd}
Next, we provide a more detailed discussion of the performance of the \denseSQD
algorithm.

The \denseSQD algorithm contains two phases. 
In Phase 1, the algorithm constructs a dense graph with at least $\frac{k}{2}$ nodes. 
In Phase 2, the algorithm tries to satisfy the cardinality constraint. 
If the dense graph found from the phase 1 contains more than $k$ nodes, 
the algorithm in phase $2$ removes nodes according to the $\sigma$-elimination order; 
otherwise, the algorithm randomly adds nodes. 

We observe that \denseSQD performs better than \denseGreedy on the \dblp dataset
(see Table~\ref{tab:not-move-out-10}). 
This is because that on the \dblp dataset, Phase 1 of \denseSQD constructs a
dense while small subgraph,
and in the Phase 2 it only randomly adds extra nodes while still gaining larger density than \denseGreedy.

We also observe that on some datasets \denseSQD does not produce results. 
This is because the graph constructed after Phase 1 either does not have a
$\sigma$-quasi-elimination ordering,
or due to the high time and memory complexity of computing a
$\sigma$-quasi-elimination ordering.
For example, on small graphs, \balanced and \dense in Table~\ref{tab:not-move-out-10}, and \dense in Table~\ref{tab:appendix:move-out-10}, 
\denseSQD constructs a graph with significantly larger than $10\%\cdot n_0$
nodes in Phase 1.
In fact, the size of the graoh is equal to the size of the whole remaining graph. 

Even for those datasets where \denseSQD produces results, if computing the
$\sigma$-quasi-elimination order is necessary, it is very costly. 
In Figure~\ref{fig:density-out-synthetic-k-time} on \sparse, we see that running
\denseSQD is even more costly than running \denseSDPalgo. 
We should also note that for the largest datasets (which are relatively sparse),
such as \dblp, \denseSQD does compute the $\sigma$-elimination ordering and
obtains results.

\section{Experiments for Max-Cut with $k$~refinements}
\label{sec:exp:cut}

In this section, we conduct an extensive evaluation of our algorithms on a
variety of datasets to solve \maxcutkc. Our focus in this section will be two primary questions:

\begin{description}
\item[RQ1:] Do our algorithms consistently produce a larger cut when initialized with a random partition?
\item[RQ2:] Which algorithm is the best among our proposed algorithms?
\end{description}

\subsection{Algorithms} 
We use our SDP and Greedy algorithms and denote them as \cutSDPalgo and
\cutGreedy, respectively.

Additionally, we utilize the following black-box solvers for \maxcut to obtain
algorithms via the reduction in Theorem~\ref{thm:max-cut-black-box}:
\begin{enumerate}
\item A Greedy algorithm, where each node's side is chosen in a way that
	maximizes the cut at each step. We denote it by \cutBlackGreedy.
\item A Local Search algorithm which makes refinements to each node's side if
	the cut can be improved.  The algorithm either starts with a randomly
	initialized partition or uses the result of the Greedy algorithm as the
	initial partition.  We denote them
	as \cutBlackLocalOne and \cutBlackLocalTwo, respectively.
\item The \sdp algorithm of~\citet{goemans1995improved} which we denote by \cutBlackSDP.
\end{enumerate}

\subsection{Datasets} 
We use the same datasets as for \dskc. 
Since \dense and \sparse essentially have the same graph structure and since in
\maxcutkc the ground-truth communities are not relevant for us, we use \dense to represent them. 
The networks statistics are presented in Table~\ref{tab:statistics-cut}.

\begin{table*}[t]
	\centering
	\caption{\small{Network statistics, 
	and average relative increase of cut in $5$ runs with $k= 50$. 
	Here, $n$ and $m$ are the number of nodes and edges of the graph; $\mathsf{cut}_0$ is the average of $5$ random cuts.
	Next, \_ denotes that an algorithm does not finish in time (2 hours for SNAP datasets, and 30 minutes for other datasets).}}
	\label{tab:statistics-cut}
	\resizebox{0.85\textwidth}{!}{
		\begin{tabular}{@{}rrrrrrrrrrrrrr}
\toprule
\multirow{2}{*}{\textsf{dataset}} & \multicolumn{3}{c}{Network Property} & \multicolumn{2}{c}{Algorithms} & \multicolumn{4}{c}{Blackbox Methods}\\
\cmidrule(lr){2-4} \cmidrule(lr){5-6} \cmidrule(lr){7-10}
& $n$ & $m$ & $\mathsf{cut}_0$ & Greedy & SDP & B:Greedy & B:SDP & B:Local:I & B:Local:II \\
\midrule
\balanced & 1\,000 & 74\,940 &  37\,695.6 & 0.030 & \textbf{0.031} & \emph{0.021} & 0.020 & \emph{0.021} & \emph{0.021}\\
\dense & 1\,000    & 81\,531 &  40\,761.8 & 0.027 & \textbf{0.028} & 0.019 & 0.018 & \emph{0.020} & 0.019\\ 
\midrule
\es & 205 & 372 & 185.2               & 0.438 & \textbf{0.533} & 0.421 & \emph{0.462} & 0.386 & 0.426\\
\de & 768 & 3\,059 & 1\,531.6         & 0.157 & \textbf{0.182} & 0.118 & \emph{0.142} & 0.126 & 0.126\\
\gb & 2\,168 & 18\,617 &  9\,314.6    & 0.070 & \textbf{0.074} & 0.035 & \emph{0.049} & 0.043 & 0.038\\
\us & 3\,912 & 18\,359 & 9\,157.0     & 0.058 & \textbf{0.061} & \emph{0.035} & 0.031 & 0.034 & 0.034\\
\midrule 
\dblp & 317\,080   & 1\,049\,866 & 525\,028.4 & \textbf{0.002} & \_ & \emph{0.001} & \_ & \emph{0.001} & \emph{0.001}\\
\amazon & 334\,863    & 925\,872 & 462\,570.4 & \textbf{0.002} & \_ & \emph{0.001} & \_ & \emph{0.001} & \emph{0.001}\\
\youtube & 1\,134\,890 & 2\,987\,624 & 1\,493\,444.6 & \textbf{0.002} & \_ & \emph{0.000} & \_ & \_ & -0.000\\
\bottomrule
\end{tabular}

	}
\end{table*}

\subsection{Evaluation}
In our experiments, we proceeds as follows.
We uniformly randomly partition each node in the graph to either side, and
repeat this procedure $5$ times.  At each time, we run our algorithms and
calculate the cut.  We compute the relative increase on the initial given cut at
each run, and take the average of these $5$ runs. We present the results in
Table~\ref{tab:statistics-cut}. 

Here, it is worth noting that for \maxcut, a random partition already yields a
2-approximate solution compared to the maximum cut in expectation.
Thus, since we initialize the original cut randomly, this means that the given cut
will already be large (with a relatively large probability) and therefore
improving the cut significantly will be difficult for our algorithms.

\subsection{Performance}
In Table~\ref{tab:statistics-cut}, we report the average of the relative
increases of the cut value by different algorithms. 
In Figure~\ref{fig:cut-country}, we show the relative increase of the cut values for 
our Wikipedia politician networks for varying values of $k$. 

We notice from the plots and the table that, for all the settings of $k$, \cutSDPalgo is always 
the best, as long as \cutSDPalgo outputs a result.
As $k$ increases, there is a trend that \cutSDPalgo performs better. 
Even though \cutGreedy does not perform better than \cutSDPalgo, in many cases, 
it performs better than the other competitors. 
This addresses \textbf{RQ2}.

Additionally, the local-search based algorithm \cutBlackLocalTwo only very
slightly improves upon \cutBlackGreedy
(which it uses to obtain its initial partition)
and does not perform significantly better than \cutBlackLocalOne
(which uses a random initialization).
This suggests that the initial solution returned by \cutBlackGreedy is not very
good, as is also suggested by comparison with \cutGreedy and \cutSDPalgo.
However, we will see below that \cutBlackGreedy has a significantly lower
running time than \cutBlackLocalOne.
This addresses \textbf{RQ1}.

\begin{figure}[t!]
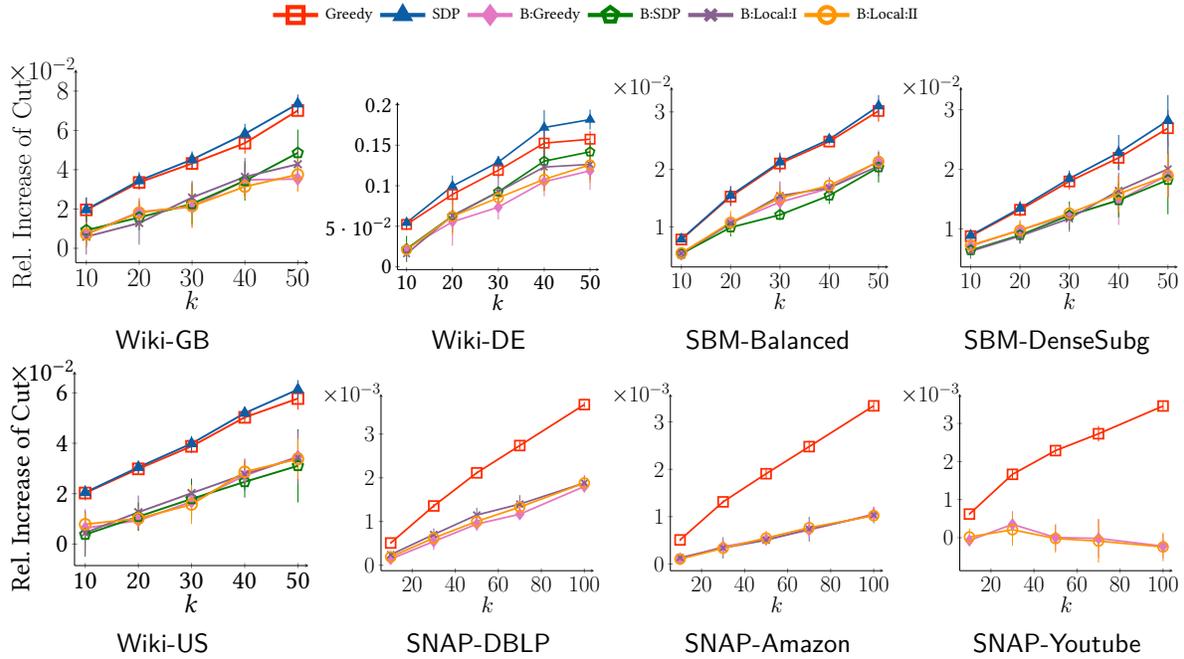

	\centering 
	\inputtikz{cut_plots/legend}
    \begin{tabular}{cccc}
        \resizebox{0.25\columnwidth}{!}{%
			\inputtikz{cut_plots/wiki_gb_results_k}
		}&
		\hspace{-1.3em}
        \resizebox{0.23\columnwidth}{!}{%
			\inputtikz{cut_plots/wiki_de_results_k}
		}&
		\hspace{-1.3em}
        \resizebox{0.23\columnwidth}{!}{%
			\inputtikz{cut_plots/sb_model_balanced_results_k}
		}&
		\hspace{-1.3em}
        \resizebox{0.23\columnwidth}{!}{%
			\inputtikz{cut_plots/sb_model_dense_results_k}
		}
		\\
		\gb &
		\de &
		\balanced &
		\dense \\
        \resizebox{0.25\columnwidth}{!}{%
			\inputtikz{cut_plots/wiki_us_results_k}
		}&
		\hspace{-1.3em}
        \resizebox{0.23\columnwidth}{!}{%
			\inputtikz{cut_plots/dblp_results_k}
		}&
		\hspace{-1.3em}
        \resizebox{0.23\columnwidth}{!}{%
			\inputtikz{cut_plots/amazon_results_k}
		}&
		\hspace{-1.3em}
        \resizebox{0.23\columnwidth}{!}{%
			\inputtikz{cut_plots/youtube_results_k}
		}
		\\
		\us &
		\dblp &
		\amazon &
		\youtube \\
	\end{tabular}
	\caption{Relative increase of the cut value. We used $k \in [10, 50]$ on smaller
		graphs, and $k \in [10, 100]$ for larger graphs.}
	\label{fig:cut-country}
\end{figure}

\sbpara{Running time dependency on $k$.}
In Figure~\ref{fig:cut-country-k-time} we present the running time on 
three datasets with varying values of $k$.  
We notice that \cutSDPalgo requires much longer time than \cutBlackSDP.
This is because \cutBlackSDP only needs to solve the standard Max-Cut 
\sdp, while \cutSDPalgo adds additional constraints, resulting in
longer running time. \cutBlackLocalOne requires more time than \cutBlackLocalTwo, this 
is because the local search algorithm with random initialization takes much
longer time to converge.

\begin{figure}[t!]
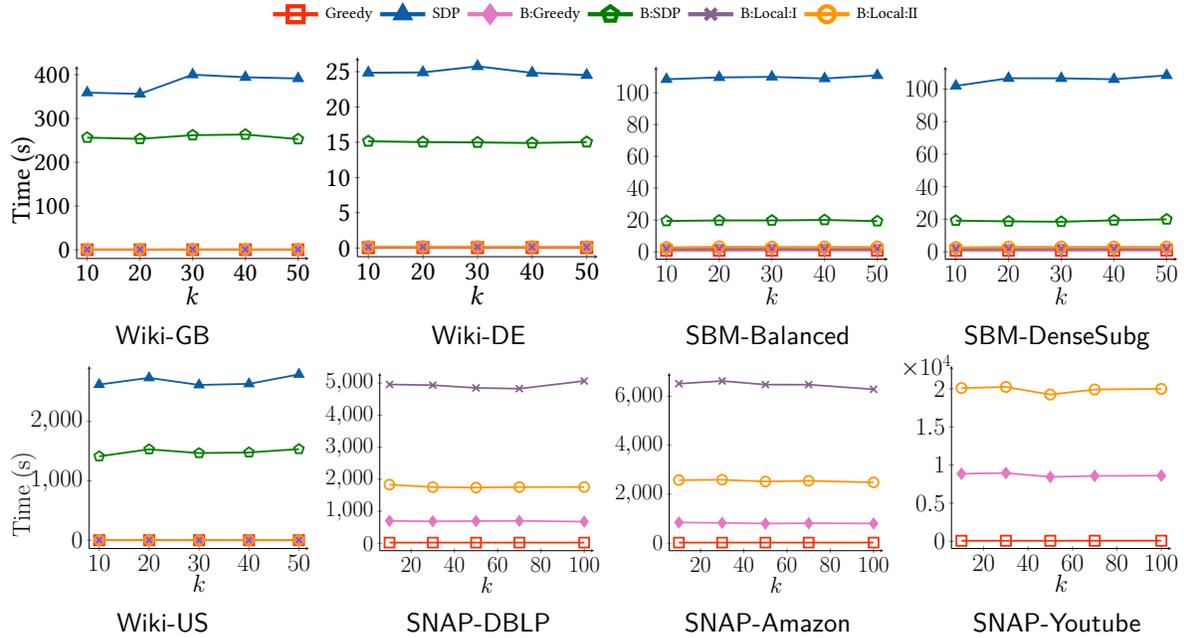

	\centering 
	\inputtikz{cut_plots/legend}
    \begin{tabular}{cccc}
        \resizebox{0.25\columnwidth}{!}{%
			\inputtikz{cut_plots/wiki_gb_results_k_time}
		}&
		\hspace{-1.3em}
        \resizebox{0.23\columnwidth}{!}{%
			\inputtikz{cut_plots/wiki_de_results_k_time}
		}&
		\hspace{-1.3em}
        \resizebox{0.23\columnwidth}{!}{%
			\inputtikz{cut_plots/sb_model_balanced_results_k_time}
		}&
		\hspace{-1.3em}
        \resizebox{0.23\columnwidth}{!}{%
			\inputtikz{cut_plots/sb_model_dense_results_k_time}
		}
		\\
		\gb &
		\de &
		\balanced &
		\dense \\
        \resizebox{0.25\columnwidth}{!}{%
			\inputtikz{cut_plots/wiki_us_results_k_time}
		}&
		\hspace{-1.3em}
        \resizebox{0.23\columnwidth}{!}{%
			\inputtikz{cut_plots/dblp_results_k_time}
		}&
		\hspace{-1.3em}
        \resizebox{0.23\columnwidth}{!}{%
			\inputtikz{cut_plots/amazon_results_k_time}
		}&
		\hspace{-1.3em}
        \resizebox{0.23\columnwidth}{!}{%
			\inputtikz{cut_plots/youtube_results_k_time}
		}
		\\
		\us &
		\dblp &
		\amazon &
		\youtube \\
	\end{tabular}
	\caption{Running time of our algorithms. We set $k \in [10,  50]$ on smaller graphs, and $k \in [10, 100]$ for larger graphs.}
	\label{fig:cut-country-k-time}
\end{figure}

\sbpara{Running time dependency on $n$.} 
\label{sec:cut-running-time-by-n}
We increase the size of synthetic datasets \dense and \balanced, 
and in Figure~\ref{fig:cut-n-time} we give the 
running time in terms of increasing $n$. 
\denseSDPalgo and \denseSDPMerge increase largely as $n$ increases, as we expect, indicating these two algorithms are not scalable.

\begin{figure}[t!]
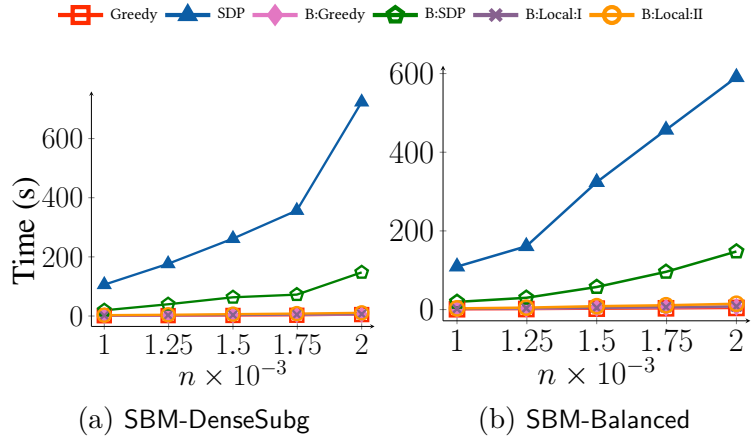

	\centering 
	\inputtikz{cut_plots/legend}
    \begin{tabular}{cc}
        \resizebox{0.3\columnwidth}{!}{%
			\inputtikz{cut_plots/sb_model_dense_results_n_time}
		}&
		\hspace{-1.3em}
        \resizebox{0.3\columnwidth}{!}{%
			\inputtikz{cut_plots/sb_model_balanced_results_n_time}
		}
		\\
		(a)~\dense &
		(b)~\balanced \\
	\end{tabular}
	\caption{Running time of the algorithms for fixed $k = 50$ and varying $n = 1\,000, 1\,250, 1\,500, 1\,750, 2\,000$.}
	\label{fig:cut-n-time}
\end{figure}

\subsection{A case study}
\cite{matakos2020tell} propose a method to maximize the diversity of a social
network by flipping $k$ nodes' opinions, where each opinion is either $1$ or
$-1$.  In our problem setting, the term \emph{diversity} is mathematically equivalent to \emph{cut} 
and the nodes of the graph can be partitioned based on their opinions (where the
		cut is given by all nodes with opinion~$1$ on one side and all other
		nodes on the other side).
It is important to note that our problem setting differs from theirs in several ways: 
they operate under a budget constraint, whereas we operate under a cardinality constraint; 
they require an inequality constraint, while our problem necessitates an equality constraint. 
We executed their algorithm on our datasets by setting the cost of all nodes to $1$, 
allowing for a comparison of our methods.
The performance, running time, and the average number of changed nodes are presented in 
Table~\ref{table:diversity-comparison}. We note that we do not report results for~\us,
since the algorithm of \citet{matakos2020tell} runs out of memory on this dataset.

We observe that, somewhat surprisingly, the \cutSDPalgo algorithm proposed by \citet{matakos2020tell} 
takes significantly more time than the \cutSDPalgo algorithm proposed in our paper (as shown in Figure~\ref{fig:cut-country-k-time}). 
In addition, despite having the \emph{at most} constraint instead of the equality constraint, 
their algorithm does not yield better performance. 
Specifically, on the dataset \de, our \cutSDPalgo algorithm performs twice as well.

\begin{table}[t]
\centering
\caption{
	The results of the \cutSDPalgo algorithm proposed by \cite{matakos2020tell},
	where we set $k = 50$ on several datasets. 
	We use \emph{Matakos et al.} to indicate the average relative increase of the cut
	value with their approach in $5$ runs, and \emph{SDP} to indicate the result of our \cutSDPalgo-based algorithm.
	}
\label{table:diversity-comparison}
\begin{tabular}{@{}lllll}
\toprule
\textsf{dataset} & SDP & Matakos et al. & \textsf{Time (s)} & \textsf{\# changed nodes} \\
\midrule
\es & 0.533 & 0.408 & 22.1848 &  48.3\\
\de & 0.182 & 0.105 & 784.401 &  46.4\\
\gb & 0.074 & 0.031 & 5780.306 &  48.5\\
\bottomrule
\end{tabular}
\end{table}

\end{document}